\documentclass[11pt]{article}

\usepackage[T1]{fontenc}
\usepackage{lmodern}
\usepackage{microtype}
\usepackage[margin=1in]{geometry}
\usepackage{amsmath,amssymb,amsthm,mathtools}
\usepackage{enumitem}
\usepackage{hyperref}
\usepackage{xcolor}
\usepackage{cleveref}

\hypersetup{colorlinks=true,linkcolor=blue!50!black,urlcolor=blue!50!black,citecolor=blue!50!black}
\setlist{nosep}
\numberwithin{equation}{section}

\newtheorem{theorem}{Theorem}[section]
\newtheorem{lemma}[theorem]{Lemma}
\newtheorem{corollary}[theorem]{Corollary}
\newtheorem{proposition}[theorem]{Proposition}
\theoremstyle{remark}
\newtheorem{remark}[theorem]{Remark}

\usepackage[]{color-edits}

\addauthor{MM}{teal}

\addauthor{MF}{magenta}

\addauthor{PD}{blue}

\newcommand{\R}{\mathbb{R}}
\newcommand{\E}{\mathbb{E}}
\newcommand{\Pp}{\mathbb{P}}
\newcommand{\ind}{\mathbf{1}}
\newcommand{\Pois}{\operatorname{Pois}}
\newcommand{\Exp}{\operatorname{Exp}}
\newcommand{\OPT}{\mathsf{OPT}}
\newcommand{\ALG}{\mathsf{ALG}}
\newcommand{\OPTP}{\mathsf{OPT}^{\mathrm{Pois}}}
\newcommand{\OPTE}{\mathsf{OPT}^{\mathrm{Exp}}}

\newcommand{\dP}{\delta^{\mathrm{Pois}}_{\mathbf N}}
\newcommand{\dE}{\delta^{\mathrm{Exp}}_Z}
\newcommand{\calP}{\mathcal P_{\lambda}}

\title{Optimal Online Stochastic Matching}
\author{Paul D\"utting\thanks{Google Research, Zurich, Switzerland. Email: \texttt{duetting@google.com}} \and Michal Feldman\thanks{Tel Aviv University, Tel Aviv, Israel. Email: \texttt{mfeldman@tauex.tau.ac.il }} \and Mathieu Molina\thanks{Tel Aviv University, Tel Aviv, Israel. Email: \texttt{mathieu.molina.research@gmail.com}}}
\date{}

\begin{document}
\maketitle

\begin{abstract}

We resolve the problem of unweighted  online stochastic matching with a known arrival law by establishing its optimal competitive ratio. 
For Poisson arrivals, we achieve $1-e/e^e\approx0.8206$, matching the upper bound of Chierichetti, Giacchini, Panconesi, and  Vattani~\cite{ChierichettiEtAl2025}. For exactly $n$ i.i.d.~arrivals, we obtain the same guarantee up to an additive $O(n^{-1/2})$ loss.

Our key technical contribution is an exact correspondence between Poisson matching and exponential scheduling, preserving both online performance and the expected offline optimum. To prove the latter, we show that the distribution of the maximum of a random supermodular set function is determined by its joint distributions along chains of nested sets, a result of independent interest. This reduces the analysis to chains, where classical Poisson–exponential duality applies. The offline equivalence extends to arbitrary nonnegative edge weights. We then design a posterior allocation policy for exponential scheduling and use the correspondence to transfer its optimal competitive guarantee to online matching.

Additionally, we show that the optimal competitive ratio under an unknown arrival law is strictly lower than $1-e/e^e$, establishing a strict separation between the known and unknown arrival models.

\end{abstract}

\section{Introduction} 

Online bipartite matching is a fundamental problem in online algorithms.
A set of $m$ offline vertices is known in advance, while online vertices
arrive sequentially and reveal their neighborhoods. Each arrival must
be either rejected or irrevocably matched to an unmatched neighbor.
The objective is to maximize the size of the matching. In their seminal
work, Karp, Vazirani, and Vazirani~\cite{KVV1990} introduced the
\textsc{Ranking} algorithm and proved that it achieves
the optimal competitive ratio of $1-1/e$ under
adversarial arrivals.

In many applications, the arrival sequence follows a distribution
that can be estimated from historical data. This motivates
the study of {\em online stochastic matching}, in which
the arrival distribution is known in advance. Two popular models are: 
(i) A fixed number of $n$ i.i.d.~arrivals drawn from a known
distribution, and (ii) Poisson arrivals with known rates.
As the expected number of arrivals grows ($n \rightarrow \infty$), the offline
benchmarks in the two models agree up to vanishing relative error,
and online policies for the Poisson model can be transferred to the i.i.d.~model with
vanishing loss relative to the offline optimum~\cite{HuangShu2021,Yan2026JailletLu}.

In a landmark result, Feldman, Mehta, Mirrokni, and
Muthukrishnan~\cite{FeldmanEtAl2009} showed that
knowledge of the arrival distribution (specifically the i.i.d.~model) makes it
possible to achieve a competitive ratio of $0.67$ surpassing the adversarial bound of $1-1/e$, while also demonstrating that the optimal competitive ratio remains bounded away from $1$.

Determining the optimal competitive ratio for
online stochastic matching has since been the subject of extensive research.
A sequence of algorithmic improvements
~\cite{BahmaniKapralov2010,ManshadiEtAl2012,JailletLu2014,HuangShu2021}
culminated in the best known guarantee of $0.716$,
due to Huang, Shu, and Yan~\cite{HuangShuYan2022}, established
in the Poisson arrival model.
In parallel, a series of impossibility
results~\cite{BahmaniKapralov2010,ManshadiEtAl2012}
progressively narrowed the range of possible competitive ratios. 
Most recently, Chierichetti, Giacchini, Panconesi, and
Vattani~\cite{ChierichettiEtAl2025} established an upper bound of
$1-e/e^e\approx0.8206$,
originally for i.i.d.~arrivals, but also applicable
to the Poisson model.
Despite more than a decade of progress, 
a gap of over $0.1$ remained 
between the best
known algorithmic guarantee and the impossibility bound. \medskip


In this paper we resolve this longstanding open
problem by establishing the optimal competitive ratio for
unweighted online stochastic matching. We establish the guarantee in the Poisson arrival model, in which arrivals of each neighborhood $S\subseteq[m]$ follow an independent Poisson process of known
rate $\lambda_S$ over the unit time horizon. Let $\ALG$ and $\OPT$ denote the expected matching sizes of an online algorithm and the optimal offline algorithm, respectively. An online algorithm is $c$-competitive if $\ALG\geq c  \cdot  \OPT$ for every instance. We define the optimal competitive ratio as
\begin{equation*}
\mathsf{CR}\coloneqq
\inf_{(\lambda_S)_{S \subset \mathbb{N}}\geq0}\sup_{\ALG}
\frac{\ALG}{\OPT},
\end{equation*}
where the supremum ranges over online algorithms that know the arrival rates, and the infimum ranges over finite instances with $\OPT>0$. We construct an online policy whose competitive ratio matches the $1-e/e^e$ upper bound of Chierichetti, Giacchini, Panconesi, and  Vattani~\cite{ChierichettiEtAl2025}.

\begin{theorem}\label{thm:CR-poisson}
For Poisson arrivals, we have the tight competitive ratio:
\begin{equation*}
\mathsf{CR}=1-\frac{e}{e^e}. 
\end{equation*}
\end{theorem}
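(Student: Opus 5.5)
The upper bound $\mathsf{CR}\le 1-e/e^e$ is the result of Chierichetti, Giacchini, Panconesi, and Vattani~\cite{ChierichettiEtAl2025}, which as noted applies to the Poisson model, so the work is entirely in the lower bound: for every finite rate vector $(\lambda_S)$ with $\OPT>0$, I must exhibit an online policy with $\ALG\ge(1-e/e^e)\OPT$. Following the abstract, I would not attack Poisson matching directly. Instead I would map it to an \emph{exponential scheduling} problem. Each offline vertex $i\in[m]$ is a machine. Each neighborhood type $S$ is a job stream, and the Poisson process of rate $\lambda_S$ on $[0,1]$ is replaced by exponential clocks: a type-$S$ arrival occurs when an $\Exp$ clock fires before time $1$. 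The two sides of the correspondence are then the following.

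First, I would show that the online problems coincide. Any online matching policy induces a scheduling policy with the same expected value, and conversely. This should follow from memorylessness: the conditional law of future arrivals given the past is identical in both descriptions, so $\ALG$ is preserved. Second, and harder, the expected offline optimum must coincide, $\OPTP=\OPTE$. The offline optimum is $\max$ over matchings, which I would write as the maximum of a random set function. For the arrival multiset $\mathbf N$, define $f(T)$ as the maximum matching using offline set $T$, or the rank of a transversal-type structure. The key lemma I would prove is that for a random \emph{supermodular} set function, the law of its maximum is determined by its joint laws along chains $T_1\subset T_2\subset\cdots$. The natural representation is the deficiency/Hall form $\OPT=\min_T(\cdots)$, equivalently a max of a supermodular function. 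The idea is to express $\Pp(\max\le t)$ via inclusion--exclusion over lattice structure and to use supermodularity to collapse the terms onto chains. On a single chain the counts are nested Poisson counts versus nested exponential-threshold events, and classical Poisson--exponential duality ($N_t\ge k \iff$ the $k$-th arrival time is $\le t$) makes the chain laws equal. Together these give $\OPTP=\OPTE$.

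Finally, I would design a posterior allocation policy for exponential scheduling. At each arrival it assigns the job to an available neighbor according to the posterior probability that the offline optimum would serve it there. I would then prove $\ALG\ge(1-e/e^e)\OPTE$, presumably via a factor-revealing or LP/dual-fitting argument in which the worst case is the Chierichetti et al.\ instance and the per-machine guarantee integrates to $1-e^{1-e}$. Transferring through the correspondence gives the Poisson bound and hence Theorem~\ref{thm:CR-poisson}. I expect the main obstacle to be the chain-reduction lemma for supermodular functions, that is, identifying the right supermodular representation of $\OPT$ and making the collapse to chains rigorous. The policy analysis is the second most delicate step, since it must be exactly tight.
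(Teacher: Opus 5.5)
\textbf{Verdict: genuine gap.} Your outline follows the abstract, but the two steps that carry the proof are either set up wrongly or left empty.

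\textbf{The scheduling model is not the dual.} Putting exponential clocks on the type-$S$ streams only re-describes the same Poisson arrival process. Your $\OPTE$ would then equal $\OPTP$ tautologically, the online problem would be unchanged, and nothing would be gained. The paper's counterpart moves the randomness to the other side.
Each offline vertex $i$ becomes a \emph{job} whose unit capacity is replaced by a demand $Z_i\sim\Exp(1)$. Each type $S$ becomes a \emph{server} whose random count $N_S$ is replaced by the deterministic rate $\lambda_S$. Feasible service rates then form the polymatroid $\calP=\{x\ge 0: x(A)\le\Lambda(A)\ \forall A\}$.
Only with this model are the deficiencies $|A|-N(A)$ and $Z(A)-\Lambda(A)$ two genuinely different supermodular functions to compare. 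Only then does memorylessness make the online problems equivalent: the matching hazard of an unmatched vertex equals the completion hazard of an unfinished job, and expected work equals expected completions.

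\textbf{The chain laws are not equal.} The correct chain identity is $\Pp(\dE(A_r)>a\ \forall r)=\Pp(\dP(A_r)>K_a\ \forall r)$, with an independent threshold $K_a\sim\Pois(a)$. Combined with the chain reduction, this gives $\max\dE\stackrel{d}{=}\operatorname{Gamma}(\max\dP,1)$, so only the means agree. The chain reduction itself rests on contractibility of the order complex of $\{A:d(A)>a\}$: if $B$ is a global maximizer, then $A\mapsto A\cup B$ stays in this family by supermodularity. ``Inclusion--exclusion'' alone does not supply this.

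\textbf{The lower bound has no argument.} You propose ``factor-revealing or LP/dual fitting, with the Chierichetti et al.\ instance as the worst case.'' That is a hope, not a proof, and the generic posterior-matching analysis of Kesselheim et al.\ gives only $1-1/e$. The missing idea is a per-job argument that uses the exponential demands.
\begin{itemize}
\item Take an optimal offline allocation $Y(Z)\in\calP$ with $Y\le Z$, and serve each unfinished job at rate $s_i(t)=\E[Y_i(Z)\mid\mathcal F_t]$. The posterior is over the exponential-model allocation, not the Poisson matching.
\item Memorylessness gives $\E[Z_i\mid\mathcal F_t]=x_i(t)+1$. Since $Y_i\le Z_i$, this yields $x_i'(t)\le x_i(t)+1$, with slack exactly $\E[Z_i-Y_i(Z)\mid\mathcal F_t]$.
\item Compare with the one-job policy following $x'=x+1$. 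Its continuation value is $q(t,h)=1-\exp(-(h+1)(e^{1-t}-1))$, with $q(0,0)=1-e^{1-e}$.
\item The potential $x_i(t)+\ind\{x_i(t)<Z_i\}\,q(t,x_i(t))$ has drift $-\E[Z_i-Y_i\mid\mathcal F_t](1+\partial_hq-q)\ge\E[Z_i-Y_i\mid\mathcal F_t]\,\partial_tq(t,0)$.
\item Integrating with Dynkin's formula bounds the loss by $(1-e^{1-e})\E[Z_i-Y_i(Z)]$, which gives $\E[x_i(1)]\ge(1-e^{1-e})\E[Y_i(Z)]$.
\end{itemize}
Summing over $i$ and applying the offline equality and online equivalence gives $\ALG\ge(1-e/e^e)\OPT$. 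No worst-case instance enters this direction.

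\textbf{Upper bound.} The Chierichetti et al.\ construction is stated for $n$ i.i.d.\ arrivals. You still need a fixed-count/Poisson transfer of both $\ALG$ and $\OPT$ to apply it in the Poisson model.
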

This result is information-theoretic, as the algorithm establishing the guarantee is not polynomial-time. For exactly $n$ i.i.d.~arrivals, we obtain a lower bound on the competitive ratio of $1-e/e^e-O(1/\sqrt{n})$, which is asymptotically optimal. 

\subsection{Our Techniques}

We first establish a Poisson--exponential duality that preserves both online performance and the expected offline optimum, by comparing the matching problem to a scheduling problem. We then design and analyze an online scheduling policy whose competitive guarantee transfers to matching through this duality.

For the special case of exchangeable instances, a direct ODE comparison already yields the tight constant. We present this argument in \Cref{subsec:exchangeable-warmup} as an alternative much shorter proof to the general analysis and as a warm-up that motivates the scheduling comparison policy for the general case.

\paragraph{Poisson--exponential duality.}
A common approach to stochastic matching is to compare online performance with a fluid relaxation of the offline benchmark. The simplest relaxation replaces random arrival counts by their expectations. This already loses too much for our purposes: with a single offline vertex and a Poisson arrival count of mean one, the relaxation has value $1$, whereas the expected maximum matching size is only $1-1/e$. Stronger relaxations, including the Natural LP of Huang and Shu~\cite{HuangShu2021}, incorporate stochastic availability constraints and support guarantees beyond $1-1/e$.

Instead of using an offline relaxation, we prove that the Poisson matching problem is equivalent to an exponential scheduling problem with appropriate service packing constraints, as in \cite{ImKulkarniMunagala2014}, but with a total work objective instead. Our approach therefore gives an exact representation of the expected offline optimum. We replace random arrival counts by mean deterministic processing capacities and unit vertex capacities by independent exponential job sizes. Each offline vertex becomes a job with demand $Z_i\sim\Exp(1)$. Each neighborhood $S$ becomes a server of capacity $\lambda_S$, which the scheduler can share among the jobs in $S$. The scheduler observes completed jobs and the service already provided, but does not know unfinished job sizes. Its objective is to maximize the total work processed by time $1$, including work on unfinished jobs.

Exponential memorylessness allows us to translate online policies between the two models: assigning arrivals to unmatched vertices corresponds to allocating service to unfinished jobs.
This translation preserves expected reward, since the expected work processed on an exponential job equals its probability of completion. Our main structural result shows that the offline benchmarks also agree. 

\medskip
\noindent\textbf{Theorem (Poisson--exponential duality, informal)} [see~\Cref{thm:offline-duality}]
The Poisson matching instance and its exponential scheduling counterpart have the same expected offline optimum.
\medskip

For a single vertex with arrival rate $\lambda$, both expected offline values equal $1-e^{-\lambda}$. This is the probability of at least one arrival in the matching model and the expected minimum of $\lambda$ and a unit-exponential demand in the scheduling model.

To extend this identity to arbitrary overlapping neighborhoods, we must account for competing demands on shared resources. We express both offline optima through supermodular deficiency functions, which is a known generalization of Hall's characterization of the matching deficit. We then use Poisson count and waiting-time duality to relate these deficiencies along chains of nested vertex sets. Finally, we combine the chain comparisons through an Euler-characteristic identity to prove equality of the expected offline benchmarks.

\paragraph{An online scheduling guarantee.}
We analyze scheduling under a general compact, convex, downward-closed region $P$ of feasible service rates. Our scheduling result therefore applies beyond the regions arising from matching.

Let $Y(Z)$ be an offline allocation, with $Y(Z)\in P$ and $0\leq Y(Z)\leq Z$. We construct an online policy for our service rates such that it follows its own induced conditional expectation:  for each unfinished job $i$, it chooses the service rate $s_i(t)=\E[Y_i(Z)\mid\mathcal F_t]$,
where $\mathcal F_t$ is the observed service and job completion history.  
Convexity and downward closure ensure that the resulting service rates are feasible. We prove the following competitive guarantee for maximizing total work in the online scheduling model.

\medskip
\noindent\textbf{Theorem (Exponential scheduling, informal)} [see~\Cref{prop:online-offline-comparison}]
For every feasible offline allocation $Y$ and every job $i$, the posterior policy satisfies
\begin{equation*}
\E[x_i(1)]
\geq
\left(1-\frac{e}{e^e}\right)\E[Y_i(Z)],
\end{equation*}
where $x_i(t)\coloneqq \int_0^t s_i(u) \, du$ is the work processed on job $i$ by time $t$.
\medskip

We use memorylessness to connect the online service rate to the demand left unallocated offline. While job $i$ is unfinished, we show that
\begin{equation*}
x_i(t)+1-s_i(t)
=
\E[Z_i-Y_i(Z)\mid\mathcal F_t]
\geq0.
\end{equation*}
Since $x_i'(t)=s_i(t)$, cumulative work satisfies the differential inequality $x_i'(t)\leq x_i(t)+1$. We therefore compare our policy with a one-job policy following the associated ODE $x'=x+1$ (which needs not be feasible), stopping at completion. Starting from zero, the ODE induces $e^t-1$ units of service by time $t$. After accounting for completion, this comparison policy processes expected work $1-e/e^e$ by the deadline, which is the probability that the job completes, namely the demand $\Exp(1)$ being at most the total uninterrupted work $e-1$. The comparison policy is separable, as the work processed on each job depends only on its own demand, which allows us to bound the online work one job at a time.

We use the above identity to bound the loss in expected work relative to the comparison policy in terms of the demand left unallocated offline. To do so, we introduce a potential consisting of work already processed plus the expected additional work under the comparison policy. We bound its expected rate of change using the conditional gap and integrate using Dynkin's formula to obtain the scheduling guarantee. Choosing $Y$ to be an optimal offline allocation and applying the Poisson--exponential duality then proves the matching result.

\subsection{Additional Results}

We complement our main guarantee with results on weighted objectives, the value of knowing the arrival law, and the transfer to exactly $n$ arrivals.

\paragraph{Edge-weighted matching.}
We extend the offline duality between matching and scheduling to arbitrary nonnegative edge weights. If matching a type-$S$ arrival to vertex $i$ earns reward $w_{i,S}$, the corresponding server earns the same reward per unit of work processed on job $i$.

\medskip
\noindent\textbf{Theorem (Weighted Poisson--exponential duality, informal)} [see~\Cref{thm:weighted-static}]
For arbitrary nonnegative edge weights, the Poisson matching instance and its exponential scheduling counterpart have the same expected offline optimum.
\medskip

The equality also extends to random edge weights revealed upon arrival. Conditional on its type, each arrival independently draws a weight vector with finite first moments, whose coordinates may be arbitrarily correlated. We obtain this extension by treating the entire weight vector as part of the arrival type.

The same competitive ratio of $1-e/e^e$ that we establish for the unweighted case cannot extend to arbitrary edge weights, since Huang, Shu, and Yan~\cite{HuangShuYan2022} establish an upper bound below $0.703$ for edge-weighted stochastic matching without free disposal. Our coordinatewise scheduling guarantee controls the total work allocated to each job, whereas general edge rewards also depend on which server provides that work.
We nevertheless believe this exact representation can provide a route to competitive guarantees for weighted matching models through the design and analysis of policies for the corresponding scheduling problems.

\paragraph{The value of knowing the arrival law.}
In the unknown-law model, arrivals still follow a fixed stochastic law, but the algorithm is not told this law. Karande, Mehta, and Tripathi~\cite{KarandeMehtaTripathi2011} showed that \textsc{Ranking} beats $1-1/e$ in this setting\footnote{In fact, they provide a guarantee in the random order arrival setting, of which unknown i.i.d. is a special case.}, and Mahdian and Yan~\cite{MahdianYan2011} established a guarantee of $0.696$. The upper bound of Chierichetti, Giacchini, Panconesi, and  Vattani~\cite{ChierichettiEtAl2025} applies to both known and unknown laws. It remained open whether knowing the arrival law strictly improves the optimal competitive ratio. We resolve this question by proving a strict separation between the two models.

\medskip
\noindent\textbf{Theorem (Unknown arrival laws)} [see~\Cref{thm:unknown-poisson}]
There exists a constant $\varepsilon>0$ such that 
\begin{equation*}
\mathsf{CR}^{\text{unknown}} \coloneqq  \sup_{\ALG}  \inf_{(\lambda_S)_{S \subset \mathbb{N}}\geq0} 
\frac{\ALG}{\OPT} \leq 1- \frac{e}{e^e}-\varepsilon < 1- \frac{e}{e^e}=
\inf_{(\lambda_S)_{S \subset \mathbb{N}}\geq0} \sup_{\ALG}
\frac{\ALG}{\OPT} =\mathsf{CR}^{\text{known}}.
\end{equation*}
\medskip

Our construction hides a subset of offline vertices. The offline optimum matches almost all vertices, but an online algorithm cannot learn the hidden subset quickly enough to avoid an additional loss. The separation holds even when the online algorithm knows the prior over arrival laws.

The proof establishes a qualitative separation. Numerical optimization within our hard family (or neighboring ones) gives an upper-bound estimate of approximately $0.816$.\footnote{This numerical exploration was done via a Large Language Model.} This remains well above the previous $0.716$ guarantee for known laws, highlighting why establishing the separation also required a substantial improvement in the known-law lower bound.

\paragraph{Transfer to exactly $n$ arrivals.}
We also sharpen the transfer from Poisson arrivals to exactly $n$ i.i.d.~arrivals. Huang and Shu~\cite{HuangShu2021} obtained an $O(n^{-1/2})$ loss under an approximate-monotonicity assumption on the online policy. Yan~\cite[Appendix~A.1]{Yan2026JailletLu} removed this assumption, albeit with a larger loss of $O(n^{-1/3})$. A small refinement of Yan's simulation, together with concavity of the expected offline matching size in the number of arrivals, gives the stronger rate for arbitrary policies, see \Cref{cor:fixed-count-guarantee}.

\subsection{Further Related Work}

We refer to Huang, Tang, and Wajc~\cite{HuangTangWajc2024} for a broader survey of online matching.


\paragraph{Offline allocations and posterior policies.}
Using information about offline optimal solutions to guide online decisions is a recurring approach in stochastic allocation. Manshadi, Oveis Gharan, and Saberi~\cite{ManshadiEtAl2012} use offline matching statistics, and Tang, Wu, and Wu~\cite{TangWuWu2022} develop algorithms based on finer conditional statistics of offline optimal matchings. Kesselheim, Molinaro, Patton, and Singla~\cite{KesselheimEtAl2026} develop a general posterior-matching framework based on conditional expectations of offline allocations and martingale inequalities, and show that they can obtain a $1-1/e$ competitive guarantee. Our scheduling policy follows this posterior-allocation principle. In our setting, the service decisions themselves determine the information revealed about job sizes. We exploit exponential memorylessness to identify the resulting posterior distributions and analyze the policy through comparison with a one-job service process, obtaining the coordinatewise factor $1-\frac{e}{e^e}$.

\paragraph{Weighted matching and other arrival models.}
Weighted objectives have also received substantial attention. Haeupler, Mirrokni, and Zadimoghaddam~\cite{HaeuplerEtAl2011} and Brubach et al.~\cite{BrubachEtAl2020} study edge-weighted stochastic matching under integral arrival rates. Without this assumption, Yan~\cite{Yan2026JailletLu} and Qiu, Feng,  Zhou, and Wu~\cite{QiuFengZhouWu2023} obtain guarantees exceeding $1-1/e$. Our offline representation applies to arbitrary nonnegative edge weights. Beyond identically distributed arrivals, Tang, Wu, and Wu~\cite{TangWuWu2022} and Chen, Huang, and Sun~\cite{ChenHuangSun2024} study matching with independent, nonidentically distributed arrivals from known distributions.

\paragraph{Non-clairvoyant scheduling.}
Our feasible-rate formulation is related to non-clairvoyant scheduling under packing constraints. Im, Kulkarni, and Munagala~\cite{ImKulkarniMunagala2014} introduce a general framework for scheduling under polyhedral constraints, further developed by J{\"a}ger, Lindermayr, and Megow~\cite{JagerLindermayrMegow2026}. These works study objectives based on job completion times. Our model uses independent exponential job sizes and maximizes the work processed before a common deadline, including partial work on unfinished jobs. This is an early-work objective, studied in other scheduling settings by Sterna and Czerniachowska~\cite{SternaCzerniachowska2017} and Jiang et al.~\cite{JiangEtAl2024}. Our analysis applies to arbitrary compact, convex, downward-closed feasible-rate regions.

\paragraph{Fairness and expected allocations.}
Ma, Xu, and Xu~\cite{MaXuXu2023} study individual and group fairness among offline vertices under known i.i.d.\ arrivals with integral arrival rates, obtaining competitive ratios of $0.725$ and $0.719$, respectively, through LP-based sampling algorithms. Our equality of expected allocation regions and coordinatewise scheduling guarantee imply a competitive ratio of $1-\frac{e}{e^e}$ for both objectives under Poisson arrivals, and $1-\frac{e}{e^e}-O(n^{-1/2})$ for exactly $n$ i.i.d.\ arrivals. These guarantees hold without integrality assumptions and are measured against the optimal offline ex-ante fairness value, but we do not establish a polynomial-time implementation.

\section{From Poisson Matching to Exponential Scheduling}

Our analysis relates online bipartite matching with Poisson arrivals to scheduling with exponential service demands. We first introduce the two models and the polymatroid describing feasible allocations from the mean type supplies, which serves as the feasible-rate region for scheduling. Using a flow decomposition, we then establish a correspondence between online policies in the two models that preserves expected rewards.

\subsection{Matching with Poisson Arrivals:} 

We consider an online bipartite matching model with offline vertex set $[m]\coloneqq\{1,\dots,m\}$. Each online vertex has a type $S\subseteq[m]$, which specifies its neighborhood among the offline vertices. There are therefore $2^m$ possible types.

For each type $S\subseteq[m]$, online vertices arrive according to a Poisson point process of intensity $\lambda_S\geq 0$ on the time interval $[0,T]$, where $T>0$ is the deadline. These processes are independent across types, so the arrival law is fully specified by the intensity vector $\boldsymbol\lambda=(\lambda_S)_{S\subseteq[m]}$, which is known to the decision maker. By rescaling time and multiplying all intensities by $T$, we may assume without loss of generality that the time horizon is $[0,1]$. Under this normalization, the total number $N_S$ of arrivals of type $S$ has distribution $\Pois(\lambda_S)$, independently across types.

For any set $A\subseteq[m]$ of offline vertices, let $N(A)$ denote the number of arrivals with at least one neighbor in $A$, and let $\Lambda(A)$ denote their total arrival intensity:
\begin{equation*}
\Lambda(A)\coloneqq\sum_{S:\,S\cap A\neq\varnothing}\lambda_S,
\qquad
N(A)\coloneqq\sum_{S:\,S\cap A\neq\varnothing}N_S.
\end{equation*}
Then $N(A)\sim\Pois(\Lambda(A))$. We write $\mathbf N=(N_S)_{S\subseteq[m]}$ for the vector of realized type counts. For a vector $z\in\R_+^m$, we use the notation $z(A)=\sum_{i\in A}z_i$ for the associated modular set function.

Upon each arrival, an online policy either rejects the arriving vertex or irrevocably matches it to an unmatched neighbor. Its decision may depend on the current arrival, the history of previous arrivals and decisions, and internal randomization independent of the arrival processes. An offline algorithm is allowed to select a matching after all nodes have arrived. The offline maximum matching size is given by the linear program
\begin{equation*}
\OPTP(\mathbf{N})\coloneqq
\max_{\substack{x_{i,S}\in[0,1]\\S\subseteq[m],\,i\in S}} \left\{
\sum_{S\subseteq[m]}\sum_{i\in S}x_{i,S}
:\;
\sum_{i\in S}x_{i,S}\leq N_S \quad  \forall S\subseteq[m], \quad 
\sum_{S\ni i}x_{i,S}\leq 1 \quad  \forall i\in[m]
\right\}.
\end{equation*}
Here $x_{i,S}$ represents the number of type-$S$ arrivals matched to offline vertex $i$. The constraints ensure that assignments do not exceed the realized supply of each type and that each offline vertex is matched at most once. Integrality of the bipartite matching polytope guarantees that this linear program admits an integral optimum.

\subsection{Scheduling with Exponential Service Demands} 

We consider $m$ jobs, all available at time $0$, that can be processed over the unit time horizon $[0,1]$. Each job $i$ has a service demand (or job size) of $Z_i$ units of work, where the $Z_i$ are independent $\Exp(1)$ random variables. At each time $t$, a scheduler chooses a nonnegative service-rate vector $s(t)$, where $s_i(t)$ is the amount of work processed for job $i$ per unit of time. The feasible service-rate vectors form a known compact, convex, downward-closed set $P\subseteq\R_+^m$ containing the origin. 

For example, a single machine with unit processing capacity, which can be shared among jobs, corresponds to $P=\{s\in\R_+^m:\sum_{i=1}^m s_i\leq1\}$. The region $P$ can also encode asymmetric processing capabilities and more complex resource constraints. A constraint of the form $\sum_{i=1}^m a_i s_i\leq1$, with $a_i>0$, models jobs requiring different amounts $a_i$ of machine time per unit of processed work. Several such constraints can represent multiple shared resources, and additional coordinate bounds can impose job-specific service-rate limits.

The cumulative work processed for job $i$ by time $t$ is $x_i(t)\coloneqq\int_0^t s_i(u)\,du$. Job $i$ completes when $x_i(t)$ reaches $Z_i$, after which its service rate is zero. The scheduler observes the service provided and the identities and times of completed jobs, but not the service demands of unfinished jobs. A policy is nonanticipating if its decisions depend only on this observed history and independent internal randomization. \medskip

The objective is to maximize the expected total work processed by the deadline, $\E[\sum_{i=1}^m x_i(1)]$, including work performed on jobs that remain unfinished. This corresponds to the \emph{early-work} objective in scheduling which has appeared for instance in Sterna and Czerniachowska~\cite{SternaCzerniachowska2017} and Jiang et al.~\cite{JiangEtAl2024}. 

An online scheduling policy chooses a service-rate vector $s(t)\in P$ at each time $t$ and stops serving each job when it completes. Its choices may depend on the service and completion history observed so far and internal randomization independent of the service demands. Typically, scheduling problems focus on the expected number of jobs completed by the deadline. Both of those objectives coincide for online algorithms.  Related representations through hidden random thresholds arise in matching with stochastic rewards~\cite{MehtaPanigrahi2012,HuangZhang2024}. 

\begin{lemma}
\label{lem:work-completions}
For any online scheduling policy, let $x_i(1)$ denote the total work processed for job $i$ by time $1$. If the service demands are independent with $Z_i\sim\Exp(1)$, then
\begin{equation*}
\E\left[\sum_{i=1}^m x_i(1)\right]
=
\E\left[\sum_{i=1}^m \ind\{x_i(1)=Z_i\}\right].
\end{equation*}
\end{lemma}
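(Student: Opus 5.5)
The plan is to prove the identity job by job and then sum over $i$. For each job $i$ the aim is
\begin{equation*}
\E[x_i(1)]=\Pp(x_i(1)=Z_i).
\end{equation*}
Both sides will be read off a single compensator (hazard) argument for the completion process of job $i$.

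First I will fix the observed filtration $(\mathcal F_t)$: the service history, completion identities and times, and the internal randomization. Let $\tilde x_i(t)$ be the work job $i$ would receive if it were never completed. Because the policy is nonanticipating, $\tilde x_i(t)$ is determined by the history and is continuous and nondecreasing in $t$; its rate is bounded because $P$ is compact. Job $i$ completes at $\tau_i=\inf\{t:\tilde x_i(t)\ge Z_i\}$, and the actual work is $x_i(t)=\tilde x_i(t\wedge\tau_i)$. The key fact is memorylessness. Conditional on $\mathcal F_t$ and on $\{\tau_i>t\}$, the demand $Z_i$ is independent of everything else observed, since the $Z_j$ are independent and the randomization is external. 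Moreover $Z_i-x_i(t)\sim\Exp(1)$. Consequently the counting process $\ind\{\tau_i\le t\}$ has $\mathcal F_t$-compensator $\int_0^{t}\ind\{u<\tau_i\}\,s_i(u)\,du=x_i(t)$. In other words, completions occur at hazard rate exactly $s_i(t)$ while the job is unfinished.

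Given the compensator, the process $M_i(t)=\ind\{\tau_i\le t\}-x_i(t)$ is a martingale. It is bounded on $[0,1]$, because $x_i(1)\le \max_{s\in P}s_i$. Taking expectations at $t=1$ gives $\Pp(\tau_i\le 1)=\E[x_i(1)]$. Since $\{\tau_i\le1\}=\{x_i(1)=Z_i\}$, this is the claim for job $i$, and summing over $i$ finishes the proof.

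The main obstacle is making the compensator step rigorous when $s$ depends on the history in a general way. To handle this, I will condition on the internal randomness and proceed inductively between successive completion times of the other jobs. On each such interval, $\tilde x_i$ is a deterministic function of the past, so $\Pp(\tau_i> t\mid\cdot)=e^{-(\tilde x_i(t)-\tilde x_i(t_0))}$. Then $\E[x_i]=\E\int \ind\{u<\tau_i\}\,d\tilde x_i(u)$ matches $\Pp(\tau_i\le\cdot)$ by the elementary identity $\int_0^a e^{-y}\,dy=1-e^{-a}$. A cleaner alternative avoids martingales: write $\E[x_i(1)]=\E[\min(Z_i,\tilde x_i(1))]$ and note that $\E[\min(Z,a)]=\Pp(Z\le a)$ for $Z\sim\Exp(1)$. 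The subtlety in this route is that $\tilde x_i(1)$ may depend on $Z_i$, through completions of other jobs being influenced by whether job $i$ has completed. This dependence is exactly why I prefer the hazard-rate argument, which handles it by localizing in time.
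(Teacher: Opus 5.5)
Your hazard-rate proof is correct. The compensator of $\ind\{\tau_i\le t\}$ is indeed $x_i(t)$ by memorylessness. Your epoch-by-epoch justification also goes through: on each interval between completions, the service planned for job $i$ until the next completion of another job is fixed by the past. It is also independent of job $i$'s residual $\Exp(1)$ demand, so $\int_0^a e^{-y}\,dy=1-e^{-a}$ applies on each interval.

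The paper, however, takes exactly the route you set aside. With the same internal randomization, it runs the policy on $Z^{(i)}$, which replaces $Z_i$ by $\infty$, and lets $B_i$ be the work job $i$ receives by time $1$ in that run. The obstacle you raise does not arise with this definition. $B_i$ is a function of $(Z_j)_{j\neq i}$ and the randomization only, so it is independent of $Z_i$. The actual run coincides with the counterfactual one until job $i$ completes, and what the policy does to other jobs afterwards cannot change $x_i(1)$, which is then frozen at $Z_i$. Hence $x_i(1)=\min\{Z_i,B_i\}$ exactly, and $\E[\min\{Z_i,B_i\}\mid B_i]=1-e^{-B_i}=\Pp(Z_i\le B_i\mid B_i)$ finishes the proof. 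The dependence you worried about appears only if $\tilde x_i$ is extended past $\tau_i$ along the actual history rather than the counterfactual one.

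What each approach buys: the coupling gives a two-line proof with no compensators, predictability, or localization. Your epoch-wise argument is in fact that coupling applied locally on each inter-completion interval. Your martingale formulation proves more: $\ind\{\tau_i\le t\}-x_i(t)$ is a bounded martingale, so the identity holds at every time and every stopping time bounded by $1$. That is the same hazard-rate structure the paper uses later in Lemma~\ref{lem:online-equivalence} and in the generator computation of Lemma~\ref{lem:online-generator}.
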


\begin{proof}
Fix $i$ and set $Z^{(i)}\coloneqq(Z_1,\ldots,Z_{i-1},\infty,Z_{i+1},\ldots,Z_m)$. Run the same policy on $Z$ and $Z^{(i)}$ with the same internal randomization, and let $B_i$ be the work given to job $i$ by time $1$ under $Z^{(i)}$. Since $B_i$ depends only on the other demands and the randomization, it is independent of $Z_i$. If $Z_i>B_i$, both runs coincide throughout and job $i$ receives $B_i$ units of work. Otherwise, they coincide until it receives $Z_i$ units, when it completes in the actual run and service stops. Thus $x_i(1)=\min\{Z_i,B_i\}$. Therefore,
\begin{equation*}
\E[\min\{Z_i,B_i\}\mid B_i]=\int_0^\infty \Pp(\min\{Z_i,B_i\}>u\mid B_i)\,du=\int_0^{B_i}e^{-u}\,du
=
1-e^{-B_i}
=
\Pp(x_i(1)=Z_i\mid B_i).
\end{equation*}
Taking expectations and summing over $i$ proves the claim.
\end{proof}

The offline benchmark knows the entire vector of service demands in advance. For a realization $z\in\R_+^m$, its value is
\begin{equation*}
\OPTE_P(z)\coloneqq
\max_{x\in \R_+^m}\left\{
\sum_{i \in [m]} x_i:\ 0\leq x\leq z, \quad x\in P
\right\}.
\end{equation*}
Here $x_i$ represents a total allocation of work and the inequality is  understood coordinate-wise, whereas $s_i(t)$ is an instantaneous service rate. Both vectors are constrained by the same region $P$ because the horizon has length one: by convexity, the cumulative allocation $x \coloneqq x(1)=\int_0^1 s(t)\,dt$ belongs to $P$. Conversely, any feasible allocation $x$ in this program can be implemented by the constant service-rate vector $s(t)=x$. Over a horizon of length $T>0$, the corresponding constraint on total allocations would instead be $x\in T \cdot P$.

\begin{remark}
Processed work and job completions can differ on a given realization, even under online policies. For example, consider two jobs sharing a machine of unit capacity, with realized demands $Z_1=Z_2=3/4$. Serving the first job at rate one until completion and then serving the second processes one unit of work and completes one job. Serving both jobs at rate $1/2$ throughout the horizon also processes one unit of work, but completes neither job.

For independent unit-exponential demands, \Cref{lem:work-completions} shows that the expected total work processed equals the expected number of completed jobs under every online policy. This equality need not hold for clairvoyant policies, and an offline policy maximizing work need not maximize job completions. For independent heterogeneous demands $Z_i\sim\Exp(\mu_i)$, every online policy satisfies a weighted version of the equality in expectation, $\Pp(x_i(1)=Z_i)=\mu_i\E[x_i(1)]$ for each $i$. 
\end{remark}


\subsection{Equivalence of Online Policies through Flow Decomposition}

We now show an equivalence between a supply-matching allocation and a total service allocation under the same polytope. 
To connect the two models, define
\begin{equation*}
\calP\coloneqq
\left\{
x\in\R_+^m:\ x(A)\leq\Lambda(A)
\quad\forall A\subseteq[m]
\right\}.
\end{equation*}
The function $\Lambda$ is a normalized, monotone, submodular coverage function, so $\calP$ is its associated polymatroid. In particular, $\calP$ is compact, convex, downward closed, and contains the origin; it is a valid service rate set constraint. Operationally, the scheduler has access to servers indexed by $S\subseteq[m]$. Each server $S$ can divide a total processing rate of at most $\lambda_S$ among the jobs in $S$.
We specialize the scheduling model to this region and write $\OPTE(z)\coloneqq\OPTE_{\calP}(z)$.  Taking a feasible matching allocation (without the unit capacity constraint) and considering the aggregate for a given offline node/job clearly shows that the polymatroid constraints are necessary. The following flow decomposition lemma shows that they are in fact sufficient.

\begin{lemma}
\label{lem:flow-decomposition}
A vector $x\in\R_+^m$ belongs to $\calP$ if and only if there exist nonnegative allocations $(x_{i,S})_{i \in [m], S \subset [m]}$, such that
\begin{equation*}
\sum_{i\in S}x_{i,S}\leq\lambda_S
\quad\forall S\subseteq[m],
\qquad
\sum_{S\ni i}x_{i,S}=x_i
\quad\forall i\in[m].
\end{equation*}
\end{lemma}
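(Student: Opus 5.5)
The easy direction is necessity. Given allocations $(x_{i,S})$ with the stated properties, fix $A\subseteq[m]$ and write
\begin{equation*}
x(A)=\sum_{i\in A}\sum_{S\ni i}x_{i,S}=\sum_{S:\,S\cap A\neq\varnothing}\ \sum_{i\in S\cap A}x_{i,S}\leq\sum_{S:\,S\cap A\neq\varnothing}\lambda_S=\Lambda(A).
\end{equation*}
This uses nonnegativity of the $x_{i,S}$ and the server constraints. Hence $x\in\calP$.

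For sufficiency I would use a max-flow/min-cut argument on the bipartite network. There is a source $s$, a node for each type $S$ with arc $s\to S$ of capacity $\lambda_S$, and arcs $S\to i$ of infinite capacity for $i\in S$. Each job $i$ has an arc $i\to t$ of capacity $x_i$. A flow saturating all arcs into $t$, of value $x([m])$, gives exactly the required decomposition: set $x_{i,S}$ equal to the flow on $S\to i$. It therefore suffices to show every $s$–$t$ cut has capacity at least $x([m])$.

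Consider a finite-capacity cut, and let $B\subseteq[m]$ be the set of jobs on the source side. Every type $S$ on the source side must have $S\subseteq B$, since otherwise an infinite arc crosses the cut. So every type with $S\not\subseteq B$ lies on the sink side, and its source arc is cut. The cut capacity is then at least
\begin{equation*}
\sum_{S:\,S\not\subseteq B}\lambda_S+\sum_{i\in B}x_i=\Lambda([m]\setminus B)+x(B).
\end{equation*}
Applying the constraint $x([m]\setminus B)\le\Lambda([m]\setminus B)$ with $A=[m]\setminus B$ bounds this below by $x([m])$. Max-flow/min-cut then yields a flow of value $x([m])$, and hence the decomposition with equality $\sum_{S\ni i}x_{i,S}=x_i$.

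The only delicate point is identifying $\{S:S\not\subseteq B\}$ with $\{S:S\cap([m]\setminus B)\neq\varnothing\}$, which is immediate. Alternatively, one could cite Hall/Gale's supply–demand theorem directly, but the cut computation above is self-contained. Real-valued capacities pose no issue, since max-flow/min-cut holds for finite networks with real capacities.
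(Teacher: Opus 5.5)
Your proof is correct and follows the paper's argument: the necessity computation is identical, and for sufficiency you use the same max-flow/min-cut network with the arcs reversed (types adjacent to the source, jobs adjacent to the sink), so your source-side job set $B$ plays the role of the complement of the paper's set $A$ and gives the same bound $\Lambda([m]\setminus B)+x(B)\geq x([m])$. One step is left implicit: the flow value cannot exceed the total sink capacity $x([m])$, so a maximum flow saturates every arc into $t$.
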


\begin{proof}
Let $(x_{i,S})$ be a nonnegative allocation satisfying $\sum_{i\in S}x_{i,S}\leq\lambda_S$ for every $S\subseteq[m]$, and define $x_i\coloneqq\sum_{S\ni i}x_{i,S}$. For every $A\subseteq[m]$,
\begin{equation*}
x(A)
=
\sum_{S:\,S\cap A\neq\varnothing}
\sum_{i\in S\cap A}x_{i,S}
\leq
\sum_{S:\,S\cap A\neq\varnothing}\lambda_S
=
\Lambda(A).
\end{equation*}
So $x\in\calP$.

Conversely, fix $x\in\calP$. Consider a network with one source, one node for each vertex $i\in[m]$, one node for each type $S\subseteq[m]$, and one sink. Give the edge $\mathrm{source}\to i$ capacity $x_i$ and the edge $S\to\mathrm{sink}$ capacity $\lambda_S$. Add an edge $i\to S$ of infinite capacity when $i\in S$. It suffices to find a flow saturating all source edges. For such a flow, let $x_{i,S}$ be the flow on the edge $i\to S$. Flow conservation gives $\sum_{S\ni i}x_{i,S}=x_i$, while the capacities of the edges to the sink ensure $\sum_{i\in S}x_{i,S}\leq\lambda_S$.

A source--sink cut partitions the network nodes into two sides, with the source on one side and the sink on the other. Its capacity is the sum of the capacities of edges directed from the source side to the sink side. Let $A\subseteq[m]$ be the set of vertex nodes on the source side. For the cut to have finite capacity, every type node $S$ with $S\cap A\neq\varnothing$ must also lie on the source side. Otherwise, an infinite-capacity edge $i\to S$ with $i\in S\cap A$ would cross the cut. 

For this fixed $A$, a cheapest cut places all remaining type nodes on the sink side. Moving any of these nodes to the source side would add its nonnegative capacity $\lambda_S$ without removing any crossing edge. The edges crossing the resulting cut are therefore the source edges to vertices outside $A$ and the sink edges from types intersecting $A$. Since $x\in\calP$, their total capacity satisfies
\begin{equation*}
x([m]\setminus A)+\Lambda(A)
\geq
x([m]\setminus A)+x(A)
=
x([m]).
\end{equation*}

Every cut has capacity at least $x([m])$. The cut whose source side consists only of the source has capacity exactly $x([m])$, so this is the minimum cut capacity. By the max-flow/min-cut theorem, the maximum flow value is therefore $x([m])$. Since this equals the total capacity of the source edges, a maximum flow saturates all of them, which gives a valid $(x_{i,S})$ allocation.
\end{proof}

Thus $\calP$ describes the allocations supported by the mean type supplies, before imposing the unit capacities of the offline vertices. It also describes feasible instantaneous service rates when each type $S$ provides processing capacity $\lambda_S$ that can be shared among the jobs in $S$. Under the unit-horizon normalization, the same region $\calP$ describes total allocations from the mean type supplies in matching and instantaneous service rates in scheduling. 


The flow decomposition also gives a correspondence between online policies in the two models. Service rates can be implemented through the Poisson arrival streams, while matching decisions can conversely be represented by service rates.

\begin{lemma}
\label{lem:online-equivalence}
For Poisson arrivals $\boldsymbol{\lambda}$ and a service rate constraint $\calP$, every online policy in either model has a counterpart in the other such that the processes of matched vertices and completed jobs have the same law. For corresponding policies, the expected matching size equals the expected total work processed.
\end{lemma}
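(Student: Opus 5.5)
The plan is to build the two translations separately and then compare both processes through their jump intensities. In both models the relevant state is the set $U(t)$ of unmatched vertices (resp. unfinished jobs), together with the observed history. I will show that, under corresponding policies, $U(t)$ is a pure-death process with the same history-dependent jump intensities. A standard uniqueness argument for point processes with given compensators then yields equality in law of the histories. Throughout I take the compensator characterisation of a counting process, with respect to its natural filtration plus independent internal randomness, as the tool that pins down its law.

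\emph{Scheduling $\to$ matching.} Fix a nonanticipating scheduling policy with rates $s(t)\in\calP$, supported on unfinished jobs. By \Cref{lem:flow-decomposition}, applied pointwise in $t$ (with a measurable selection, e.g.\ a lexicographically minimal solution of the flow LP), write $s_i(t)=\sum_{S\ni i}x_{i,S}(t)$ with $\sum_{i\in S}x_{i,S}(t)\le\lambda_S$. The matching policy is then defined as follows. When a type-$S$ vertex arrives at time $t$, match it to $i\in S$ with probability $x_{i,S}(t)/\lambda_S$, and reject it with the remaining probability, using fresh internal randomness. If $i$ is already matched then $s_i=0$ and hence $x_{i,S}=0$, so every match is legal. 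By Poisson thinning, vertex $i$ is matched at rate $\sum_{S\ni i}x_{i,S}(t)=s_i(t)$ given the past. In the scheduling model, memorylessness says that an unfinished job $i$ with $Z_i\sim\Exp(1)$ completes at rate $s_i(t)$ given $\mathcal F_t$. This holds because, conditionally on the history, the residual demands of unfinished jobs are independent $\Exp(1)$. The matching policy simulates the scheduler's rate choices using only the observed match history, which plays the role of the completion history; rejections carry no information beyond this.

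\emph{Matching $\to$ scheduling.} Given a matching policy, its decisions at an arrival of type $S$ at time $t$ depend on the history $\mathcal H_{t^-}$. Let $p_{i,S}(t)$ be the conditional probability that such an arrival is matched to $i$. The policy may also use the realized arrival history, which has no counterpart in the scheduling model. I handle this by letting the scheduler simulate the unmatched-type arrivals internally: it samples a fictitious Poisson stream of arrivals and decides their fate by the policy's own randomness, and only the events "matched to $i$" are replaced by real completions. Concretely, the scheduler keeps a simulated history that includes arrivals which the matching policy rejected. It sets $x_{i,S}(t)=\lambda_S\,p_{i,S}(t)$, which is feasible since $\sum_i p_{i,S}\le1$, and uses rates $s_i=\sum_S x_{i,S}$, which lie in $\calP$ by \Cref{lem:flow-decomposition}. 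Whenever job $i$ completes, it attributes the completion to a type $S$ drawn with probability proportional to $x_{i,S}$. The non-matching arrivals are generated as independent Poisson streams of rate $\lambda_S(1-\sum_i p_{i,S})$. The combined simulated arrival stream is then Poisson($\lambda_S$) with the correct marks, so the joint law coincides.

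\emph{Reward equality and the main obstacle.} Once the processes of matched vertices and completed jobs agree in law, the expected matching size equals the expected number of completions, which equals the expected total work by \Cref{lem:work-completions}. The main obstacle is the rigorous identification of conditional completion intensities in the scheduling model. One must show that, under an arbitrary adaptive rate process, the residual demands stay i.i.d.\ $\Exp(1)$ given $\mathcal F_t$, so that the compensator of job $i$'s completion is $\int s_i\,dt$. I would prove this via the same coupling used in \Cref{lem:work-completions}, through the time change $\tau_i(u)=\inf\{t: x_i(t)\ge u\}$, and then invoke uniqueness of laws of multivariate point processes with given predictable intensities.
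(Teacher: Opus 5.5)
Your proposal is correct. The scheduling-to-matching direction is the paper's construction: flow decomposition, thinning with probabilities $x_{i,S}/\lambda_S$ taken just before the arrival, and memorylessness. You replace the paper's sequential ``rates agree until the next event'' coupling with the uniqueness theorem for marked point processes with given compensators, which is the same argument stated more formally.

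The converse direction takes a genuinely different route. The paper defines $p_{i,S}(t)$ conditional \emph{only} on the times and identities of previous matches. Rejected arrivals, types of matched arrivals and internal randomness are thus averaged out, and the scheduler uses $s_i=\sum_{S\ni i}\lambda_S p_{i,S}$ as a function of the completion history alone. This needs no auxiliary simulation. However, the claim that these averaged rates reproduce the law of the match process is a projection (innovation-theorem) statement about intensities on the coarser match filtration, which the paper leaves implicit in ``the same coupling argument.''

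You instead condition on the full history and have the scheduler regenerate its unobservable part: simulated rejections, plus a type attributed to each completion with probability $x_{i,S}/s_i$. The intensity kernels then coincide on the full filtration, so no filtering step is needed. You also obtain the stronger conclusion that the whole arrival-and-decision history has the same law. The cost is extra internal randomization, which is permitted because it is independent of $Z$, and residual demands remain i.i.d.\ $\Exp(1)$ given it.

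One wording fix: the rejected arrivals are not ``independent Poisson streams of rate $\lambda_S(1-\sum_i p_{i,S})$,'' since that rate depends on the history. Generate them by thinning independent rate-$\lambda_S$ streams with predictable retention probability $1-\sum_i p_{i,S}(t^-)$, which is exactly what your intensity computation uses.
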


\begin{proof}
First, fix an online scheduling policy with feasible-rate region $\calP$. By the flow decomposition lemma, choose nonnegative rates $s_{i,S}(t)$\footnote{The decomposition can be chosen as a measurable function of $s(t)$.} satisfying
\begin{equation*}
\sum_{i\in S}s_{i,S}(t)\leq\lambda_S
\quad\forall S\subseteq[m],
\qquad
\sum_{S\ni i}s_{i,S}(t)=s_i(t)
\quad\forall i\in[m].
\end{equation*}

The matching policy runs the scheduling rule, tracking its cumulative service and interpreting previous matches, with their vertex identities and times, as job completions. When a type-$S$ vertex arrives, with $\lambda_S>0$, match it to $i\in S$ with probability $s_{i,S}(t)/\lambda_S$, using the rates immediately before the arrival, and reject it with the remaining probability. The decomposition ensures that these probabilities sum to at most one. Since completed jobs receive no service, matched vertices receive no further assignments.

The conditional matching rate of an unmatched vertex $i$ is $\sum_{S\ni i}s_{i,S}(t)=s_i(t)$. By exponential memorylessness, this is also the conditional completion rate of the corresponding unfinished job. Use the same randomization for the scheduling rule in both constructions. From any shared history, their rates agree until the next event, so its time and identity have the same conditional law. Starting from the empty history, we can therefore couple successive matches and completions. Thus the vertex matching times have the same joint distribution as the corresponding job completion times, assigning time $\infty$ whenever a vertex remains unmatched or a job remains unfinished at the deadline.

Conversely, fix an online matching policy. Let $p_{i,S}(t)$ be its probability of assigning a type-$S$ arrival at time $t$ to $i$, conditional only on the identities and times of previous matches. These probabilities can be computed from the given policy and the known arrival law. Using completion history in place of matching history, define the scheduling rule $s_i(t)=\sum_{S\ni i}\lambda_Sp_{i,S}(t)$. These rates belong to $\calP$ by the flow decomposition lemma and vanish for completed jobs. The same coupling argument gives equality of the joint laws of matching and completion times.

Finally, corresponding policies have the same expected number of matches and job completions. Lemma~\ref{lem:work-completions} therefore gives
\begin{equation*}
\E[\text{number of matches by time }1]
=
\E\left[\sum_{i=1}^m x_i(1)\right]. \qedhere
\end{equation*}
\end{proof}
Given the above Lemma, for a fixed pair of corresponding online policies, we write $\ALG$ for their common expected reward.

For $P=\calP$, the two models therefore have the same optimal expected online reward. To compare their competitive ratios, it remains to relate their offline optima. In the next section, we prove that $\E[\OPTP(\mathbf N)]=\E[\OPTE(Z)]$. Together with the online correspondence, this identity allows competitive ratio guarantees to be transferred between corresponding matching and scheduling instances.

\subsection{A Shorter Proof for Exchangeable Arrivals}
\label{subsec:exchangeable-warmup}

Before treating general instances, we show how to heuristically derive the tight competitive ratio in the exchangeable case. Here a scalar ODE describes the online process, and an asymptotic matching formula gives the offline benchmark. This calculation also motivates the comparison ODE used in our general analysis. \medskip

The constant $1-e^{1-e}$ has a direct interpretation in exchangeable instances, namely those for which $\lambda_S=\lambda_T$ whenever $|S|=|T|$. Suppose arrivals have total rate $m$, and each independently samples $D$ uniform neighbors with replacement, where $D$ has a fixed generating function $\phi$. Greedy is optimal among online policies in this model. When a fraction $x$ of the offline vertices is matched, an arrival with $D$ sampled neighbors cannot be matched with probability $x^D$. The limiting matched fraction therefore satisfies $x'(t)=1-\phi(x(t))$, with $x(0)=0$.

The bipartite matching limit of \cite[Theorem~3]{BordenaveLelargeSalez2013}, specialized to this model, gives
\begin{equation*}
\lim_{m\to\infty}\frac{\E[\OPTP(\mathbf N)]}{m}
=
1-\max_{u\in[0,1]}\{\phi(u)-\Psi(u)\},
\end{equation*}
where $\Psi(u)\coloneqq u+(1-u)\log(1-u)$, with $\Psi(1)=1$.\footnote{We give the specialization in \Cref{app:exchangeable-limit}.}
The expansion $\Psi(u)=\sum_{d=2}^{\infty}u^d/(d(d-1))$ shows that $\Psi$ is itself a generating function, corresponding to the hard instance of \cite{ChierichettiEtAl2025}. The expected offline optimum is $m-o(m)$ precisely when $\phi(u)\leq\Psi(u)$ for every $u\in[0,1]$.

To make greedy progress as slowly as possible while retaining this offline value, we therefore choose the largest admissible generating function, namely $\phi=\Psi$. The ODE becomes $x'=(1-x)(1-\log(1-x))$. Setting $h(t)=-\log(1-x(t))$ gives $h'=1+h$, hence $x(t)=1-e^{1-e^t}$ and $x(1)=1-e^{1-e}$.

A short ODE comparison gives the same asymptotic guarantee for a general $\phi$. Let $\beta>0$ denote its scaled offline limit, and let $x$ now solve $x'=1-\phi(x)$, with $x(0)=0$. By definition of $\beta$, we have $\phi(u)\leq1-\beta+\Psi(u)$. Since $\Psi$ is convex and $\Psi(0)=0$, for every $u\in[0,1]$,
\begin{equation*}
1-\phi(\beta u)
\geq
\beta-\Psi(\beta u)
\geq
\beta\bigl(1-\Psi(u)\bigr).
\end{equation*}
For $y(t)\coloneqq\beta(1-e^{1-e^t})$, the function $y/\beta$ solves the ODE associated with $\Psi$, so the previous inequality gives $y'(t)=\beta(1-\Psi(y(t)/\beta))\leq1-\phi(y(t))$. Since $x'=1-\phi(x)$ and both trajectories start at zero, comparison gives $x(t)\geq y(t)$. At $t=1$, this gives the asymptotic ratio $x(1)/\beta\geq1-e^{1-e}$, with equality when $\phi=\Psi$.

In the rest of the paper, we allow arbitrary arrival laws without assuming exchangeability.

\section{A Combinatorial Poisson--Exponential Duality}
\label{sec:static-duality}

The online correspondence established in the previous section preserves expected rewards. We now show that the expected offline optima also coincide.

\begin{theorem}
\label{thm:offline-duality}
For Poisson matching with intensity vector $\boldsymbol\lambda$ and scheduling with feasible-rate region $\calP$ and $\Exp(1)$ service demands,
\begin{equation*}
\E[\OPTP(\mathbf N)]
=
\E[\OPTE(Z)].
\end{equation*}
\end{theorem}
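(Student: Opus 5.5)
The plan is to write both offline optima as a deterministic quantity minus the maximum of a random supermodular set function. Then I reduce the comparison of the two maxima to chains of nested sets, and on chains I apply the Poisson--exponential duality.

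\emph{Step 1: deficiency formulas.} By the deficiency version of Hall's theorem, $\OPTP(\mathbf N)=m-\max_{A\subseteq[m]}\bigl(|A|-N(A)\bigr)$, where the maximum includes $A=\varnothing$ with value $0$. For scheduling, I would use the polymatroid--box intersection formula, which also follows from the max-flow argument of \Cref{lem:flow-decomposition} with source capacities $z_i$:
\begin{equation*}
\OPTE(z)=\min_{A}\bigl(z([m]\setminus A)+\Lambda(A)\bigr)=z([m])-\max_A\bigl(z(A)-\Lambda(A)\bigr).
\end{equation*}
Since $\E[Z([m])]=m$, the theorem reduces to
\begin{equation*}
\E\bigl[\max_A F(A)\bigr]=\E\bigl[\max_A G(A)\bigr],\qquad F(A)=|A|-N(A),\quad G(A)=Z(A)-\Lambda(A).
\end{equation*}
Both $F$ and $G$ are supermodular, because $N$ and $\Lambda$ are coverage functions and hence submodular, while $|A|$ and $Z(A)$ are modular.

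\emph{Step 2: reduction to chains.} For a supermodular $H$ and a threshold $c$, the family $\mathcal L_c=\{A: H(A)\ge c\}$ is not a lattice in general. However, the set of maximizers of $H$ is a lattice, and more usefully the sets attaining at least $c$ can be handled through a topological argument. I would express $\ind\{\max_A H(A)\ge c\}$ as the reduced Euler characteristic of the order complex of $\mathcal L_c$. The claim to establish is that this complex is contractible whenever it is nonempty, using supermodularity to build a closure or join map: if $A,B\in\mathcal L_c$ then $A\cup B$ or $A\cap B$ stays in $\mathcal L_c$. This gives the identity
\begin{equation*}
\ind\{\max_A H(A)\ge c\}=\sum_{k\ge1}(-1)^{k-1}\sum_{A_1\subsetneq\cdots\subsetneq A_k}\ind\{H(A_j)\ge c\ \forall j\}.
\end{equation*}
Taking expectations, the law of $\max H$ depends only on the chain probabilities $\Pp(H(A_j)\ge c\ \forall j)$. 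Then $\E\max F=\sum_{k\ge1}\Pp(\max F\ge k)$ and $\E\max G=\int_0^\infty\Pp(\max G\ge c)\,dc$. Both are therefore the same signed combination over chains, and it suffices to prove, for each fixed chain $A_1\subsetneq\cdots\subsetneq A_k$,
\begin{equation*}
\sum_{r\ge1}\Pp\bigl(N(A_j)\le|A_j|-r\ \forall j\bigr)=\int_0^\infty\Pp\bigl(Z(A_j)\ge\Lambda(A_j)+c\ \forall j\bigr)\,dc.
\end{equation*}

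\emph{Step 3: chains via Poisson--exponential duality.} Along a chain, the increments $N(A_j)-N(A_{j-1})$ are independent Poisson variables with means $\Lambda(A_j)-\Lambda(A_{j-1})$. So $(N(A_j))_j$ has the law of the counts of a single unit-rate Poisson process $\Pi$ on $[0,\infty)$ in the intervals $[0,\Lambda(A_j)]$. With $T_n$ the $n$-th point of $\Pi$, the event $N(A_j)\le n_j-r$ is the event $T_{n_j-r+1}>\Lambda(A_j)$, where $n_j=|A_j|$. On the exponential side, $(Z(A_j))_j$ has the law of $(T_{n_j})_j$ for the gaps of the same kind of process.

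It remains to show that the sum over $r$ of $\Pp(T_{n_j-r+1}>\Lambda_j\ \forall j)$ equals $\int_0^\infty\Pp(T_{n_j}-c>\Lambda_j\ \forall j)\,dc$. I would split the integral over $c$ into the intervals $[T_{r-1},T_r)$. On the $r$-th interval, the shifted process $\Pi-c$ restricted to $(0,\infty)$ has its first point at $T_r-c$, and its $\ell$-th point is $T_{r-1+\ell}-c$. Using memorylessness, the length of this interval is $\Exp(1)$ and independent of the future gaps. The point is that integrating $dc$ over a gap of exponential length should reproduce exactly the probability of one fewer point, which is the discrete index shift $n_j\mapsto n_j-r+1$.

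\emph{Main obstacle.} I expect Step 3 to be the most delicate part. The bookkeeping must be exact, with a Palm/renewal identity such as $\E\int_{T_{r-1}}^{T_r}\ind\{\cdot\}\,dc=\Pp(\cdot\text{ with index shifted})$, and the boundary cases $\Lambda(A_j)=\Lambda(A_{j-1})$ and $A_1=\varnothing$ need checking. The contractibility claim in Step 2 also needs care: what I would verify is whether the union or intersection closure survives thresholding for supermodular $H$. If it does not, I would work directly with the lattice of maximizers of $H-\epsilon|A|$-perturbed functions instead.
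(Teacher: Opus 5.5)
Your overall architecture is the paper's: the same deficiency formulas, the same reduction of the law of $\max_A H(A)$ to chain probabilities through the Euler characteristic of the order complex of a superlevel family, and the same count/arrival-time representation of one rate-one Poisson process along a chain.

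Your Step 3, which you flag as the main obstacle, is sound. It is the threshold-integrated form of the paper's chain lemma, and splitting over $[T_{r-1},T_r)$ is the paper's time-shift argument. $\Pi(c)$ is independent of the shifted process $\Pi(c+\cdot)-\Pi(c)$, and $\int_0^\infty\Pp(\Pi(c)=r-1)\,dc=1$; after Fubini this gives exactly your index shift $n_j\mapsto n_j-r+1$. The boundary cases are harmless: zero-mean Poisson increments vanish, and $\varnothing$ never lies in the superlevel family for thresholds $c>0$. The paper keeps $a$ fixed and proves $\Pp(\max_A\dE(A)>a)=\Pp(\max_A\dP(A)>K_a)$ with $K_a\sim\Pois(a)$. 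This yields a stronger Gamma-mixture representation, but equality of means is all the theorem needs. Note also that your displayed identity computes the ordinary, not the reduced, Euler characteristic.

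The genuine gap is the contractibility claim in Step 2. It is the crux of the argument, and you leave it unproved. The pairwise fact you cite does not give contractibility. That fact is that for $A,B\in\mathcal L_c$ at least one of $A\cup B$, $A\cap B$ lies in $\mathcal L_c$. Moreover, $\mathcal L_c$ is in general closed under neither operation:
\begin{itemize}
\item For $H=|\cdot|$ and $c=1$, we have $\{1\}\cap\{2\}=\varnothing\notin\mathcal L_1$.
\item Take the supermodular function on $2^{[3]}$ with $H(\{1\})=10$, $H(\{2\})=H(\{3\})=-5$, $H(\{1,2\})=H(\{1,3\})=5$, $H(\{2,3\})=-10$, $H(\{1,2,3\})=1$. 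Then $\mathcal L_2=\{\{1\},\{1,2\},\{1,3\}\}$, which is not union-closed.
\end{itemize}

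The missing idea is to fix a single global maximizer $B$ of $H$; since $\mathcal L_c\neq\varnothing$, we have $B\in\mathcal L_c$. For every $A\in\mathcal L_c$, supermodularity together with $H(B)\ge H(A\cap B)$ gives
\begin{equation*}
H(A\cup B)\ge H(A)+H(B)-H(A\cap B)\ge H(A)\ge c.
\end{equation*}
So $f(A)=A\cup B$ is an order-preserving self-map of $\mathcal L_c$ with $A\subseteq f(A)\supseteq B$. Pointwise comparable order-preserving maps induce homotopic maps of order complexes. Hence the identity is homotopic to the constant map at $B$, and the complex is contractible.

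The paper implements the same idea explicitly. It deletes the sets not containing $B$ in decreasing order of cardinality, each by a deformation retraction onto $A\cup B$, and then cones off at $B$. Your observation that the maximizers form a lattice is close to this, but one maximizer suffices, and no perturbation argument is needed.
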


This identity is immediate when there is a single offline vertex. In that case, the matching optimum is $\min\{N,1\}$, where $N\sim\Pois(\lambda)$, while the scheduling optimum is $\min\{\lambda,Z\}$, where $Z\sim\Exp(1)$. Thus
\begin{equation*}
\E[\min\{N,1\}]=
\Pp(N\geq1)
=
1-e^{-\lambda}
=
\int_0^\lambda e^{-u}\,du
=\int_0^\infty \Pp(\min\{\lambda,Z\}>u)\,du=
\E[\min\{\lambda,Z\}].
\end{equation*}
Therefore, what the theorem establishes is that this simple identity extends to the combinatorial settings at hand.

We first express both offline optima through a max-deficiency characterization. We then prove the general result that the law of the maximum of a supermodular function is characterized by its law on chains, i.e. nested sets. On each such chain, the duality between Poisson counts and exponential waiting times for Poisson point processes gives an exact comparison, which we combine with the previous supermodular characterization result to prove the theorem.

\subsection{Offline Optima and Deficiencies}
\label{subsec:two-deficiencies}

For a set $A$ of offline vertices, only the $N(A)$ arrivals with a neighbor in $A$ can be matched to its vertices. Hence at least $|A|-N(A)$ vertices must remain unmatched whenever this difference is positive. Hall's theorem states that all offline vertices can be matched if and only if $N(A)\geq |A|$ for every $A\subseteq[m]$ \cite[Theorem~22.1]{Schrijver2003}. Ore's deficiency formula, which we re-derive below for self-containedness, gives the quantitative version of this result, identifying the maximum of these deficits with the minimum number of unmatched vertices \cite[Theorem~22.2]{Schrijver2003}.

The scheduling problem has a similar deficit characterization. The jobs in $A$ have total demand $Z(A)$, but the region $\calP$ allows at most $\Lambda(A)$ units of work to be allocated to them over the unit horizon. We therefore define
\begin{equation*}
\dP(A)\coloneqq |A|-N(A),
\qquad
\dE(A)\coloneqq Z(A)-\Lambda(A).
\end{equation*}
Both deficiencies are equal to $0$ at $A=\emptyset$, so their maxima over subsets of $[m]$ are nonnegative. The following lemma gives the two offline optima in terms of these deficiencies, which we obtain continuing the flow network analogy from earlier.

\begin{lemma}
\label{lem:two-deficiencies}
For every realization of the type counts and service demands,
\begin{equation*}
\OPTP(\mathbf N)=
m-\max_{A\subseteq[m]}\dP(A),\qquad
\OPTE(Z)=
Z([m])-\max_{A\subseteq[m]}\dE(A).
\end{equation*}
\end{lemma}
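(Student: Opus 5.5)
Both identities will come from max-flow/min-cut on networks of the same shape as in the proof of \Cref{lem:flow-decomposition}, so the two optima are handled in parallel. The plan is to compute the minimum cut as a function of the set $A$ of vertex nodes on the source side, and to read off the deficiency formula.

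For scheduling, note first that $\OPTE(z)=\max\{x([m]) : 0\le x\le z,\ x\in\calP\}$. By \Cref{lem:flow-decomposition}, this is exactly the maximum flow in the network with edges $\mathrm{source}\to i$ of capacity $z_i$, infinite edges $i\to S$ for $i\in S$, and edges $S\to\mathrm{sink}$ of capacity $\lambda_S$. Given any flow, its source-edge values give a feasible $x$. Conversely, any feasible $x$ admits a decomposition $(x_{i,S})$, which yields a flow of value $x([m])$. Now fix the set $A$ of vertex nodes on the source side. Exactly as in the proof of \Cref{lem:flow-decomposition}, a finite-capacity cut must put every type $S$ with $S\cap A\neq\varnothing$ on the source side, and the cheapest choice puts all other types on the sink side. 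This cut has capacity $z([m]\setminus A)+\Lambda(A)=z([m])-(z(A)-\Lambda(A))$. Minimizing over $A$ and applying max-flow/min-cut gives $\OPTE(Z)=Z([m])-\max_A\dE(A)$.

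For matching, use the same network with source capacities $1$ and sink capacities $N_S$, and add the constraint that $x_{i,S}\le 1$. The LP defining $\OPTP(\mathbf N)$ is then exactly this max-flow problem: the box constraint $x_{i,S}\le1$ is implied by $\sum_{S\ni i}x_{i,S}\le 1$, so it can be dropped. Hence infinite middle capacities are harmless. The identical cut computation yields capacity $(m-|A|)+N(A)$. Therefore $\OPTP(\mathbf N)=m-\max_A(|A|-N(A))=m-\max_A\dP(A)$, which is Ore's formula.

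The only point needing care is the cut analysis: showing that for a fixed $A$ the optimal placement of the type nodes is forced. One direction is forced by finiteness of the cut. The other is that moving a type node to the source side only adds its nonnegative capacity, since its incoming edges come only from vertex nodes. This was already done in \Cref{lem:flow-decomposition}, so I expect no real obstacle beyond verifying that the LP and flow formulations coincide, including the dropped box constraint and the fact that $x\le z$ is encoded by the source capacities.
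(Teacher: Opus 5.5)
Your proposal is correct and follows the paper's proof: both identities come from max-flow/min-cut on the flow-decomposition network, where the cheapest cut with vertex set $A$ on the source side has capacity $m-|A|+N(A)$, respectively $Z([m]\setminus A)+\Lambda(A)$. Your observation that the box constraint $x_{i,S}\le 1$ is redundant identifies the LP defining $\OPTP(\mathbf N)$ directly with the max-flow LP, so you do not need the integrality step that the paper invokes.
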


\begin{proof}
Consider the network from \Cref{lem:flow-decomposition}, with capacity one on each source edge to an offline vertex and capacity $N_S$ on the edge from type $S$ to the sink. The edge $i\to S$ has infinite capacity whenever $i\in S$. An integral flow corresponds to a matching, with $x_{i,S}$ equal to the flow from vertex $i$ to type $S$.

Let $A$ be the set of offline vertices on the source side of a cut. For the cut to have finite capacity, every type intersecting $A$ must also lie on the source side. A cheapest cut for this fixed $A$ puts all other type nodes on the sink side. Its capacity is $m-|A|+N(A)$. By max-flow/min-cut and integrality,
\begin{equation*}
\OPTP(\mathbf N)
=
\min_{A\subseteq[m]}\{m-|A|+N(A)\}
=
m-\max_{A\subseteq[m]}\dP(A).
\end{equation*}

For scheduling, replace the source capacity of vertex $i$ by $Z_i$ and the sink capacity of type $S$ by $\lambda_S$. By \Cref{lem:flow-decomposition}, feasible flows correspond exactly to allocations $x\in\calP$ satisfying $x\leq Z$, so the maximum flow value is $\OPTE(Z)$. The cheapest cut with vertex set $A$ on the source side now has capacity $Z([m]\setminus A)+\Lambda(A)$. Consequently,
\begin{equation*}
\OPTE(Z)
=
\min_{A\subseteq[m]}
\{Z([m]\setminus A)+\Lambda(A)\}
=
Z([m])-\max_{A\subseteq[m]}\dE(A). \qedhere
\end{equation*}
\end{proof}

Since $\E[Z([m])]=m$, equality of the expected offline optima reduces to equality of the expected maximum deficiencies. The functions $N$ and $\Lambda$ are submodular coverage functions, whereas $A\mapsto |A|$ and $A\mapsto Z(A)$ are modular. Both deficiencies are therefore supermodular.

\subsection{Supermodular Maxima through Chains}
\label{subsec:local-global}

Recall that a set function $d$ is supermodular if $d(A)+d(B)\leq d(A\cup B)+d(A\cap B)$ for every pair of sets $A,B$. The next proposition shows that the law of its maximum is determined by joint exceedance probabilities along chains of nested sets.

\begin{proposition}
\label{lem:supermodular-chains}
Let $d:2^{[m]}\to\R$ be a random supermodular function with $d(\varnothing)=0$ almost surely. For every $a\geq0$,
\begin{equation*}
\Pp\left(\max_{A\subseteq[m]}d(A)>a\right)
=
\sum_{k=1}^{m}(-1)^{k-1}
\sum_{\varnothing\neq A_1\subsetneq\cdots\subsetneq A_k\subseteq[m]}
\Pp\left(d(A_r)>a\ \text{for all }r\right).
\end{equation*}
\end{proposition}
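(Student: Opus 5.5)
The plan is to prove a pointwise, deterministic identity and then take expectations. Fix a realization of $d$ and set $U\coloneqq\{A\neq\varnothing: d(A)>a\}$. Because $a\geq0$ and $d(\varnothing)=0$, the empty set never lies in $U$, and $\max_A d(A)>a$ holds exactly when $U\neq\varnothing$. The right-hand side is the expectation of
\begin{equation*}
\chi(U)\coloneqq\sum_{k\geq1}(-1)^{k-1}\#\{\text{chains } A_1\subsetneq\cdots\subsetneq A_k \text{ in } U\}.
\end{equation*}
This is the reduced Euler characteristic plus one, that is, the Euler characteristic of the order complex of the poset $U$. So it suffices to show that $\chi(U)=\ind\{U\neq\varnothing\}$ for every realization. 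The sum is finite, so linearity of expectation then gives the proposition.

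The key structural input is supermodularity. If $A,B\in U$, then
\begin{equation*}
d(A\cup B)+d(A\cap B)\geq d(A)+d(B)>2a.
\end{equation*}
Hence at least one of $A\cup B$ and $A\cap B$ exceeds $a$. This is weaker than $U$ being a lattice, so I will not try to prove contractibility directly. Instead I will use a \emph{maximal element} argument. Let $M\coloneqq\bigcup_{A\in U}A$. I first claim that $M\in U$; more usefully, the union of any two members is in $U$ whenever their intersection is not.

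A cleaner route is to take $A^\ast$ maximizing $d$ over all sets, choosing a maximal one by inclusion. For any $B\in U$, supermodularity gives
\begin{equation*}
d(A^\ast\cup B)\geq d(A^\ast)+d(B)-d(A^\ast\cap B).
\end{equation*}
Maximality of $d(A^\ast)$ gives $d(A^\ast\cup B)\leq d(A^\ast)$, so $d(A^\ast\cap B)\geq d(B)>a$. Thus $A^\ast\cap B\in U$ or $A^\ast\cap B=\varnothing$.

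This suggests a two-step Quillen-type retraction. The map $B\mapsto A^\ast\cap B$ is order preserving and satisfies $A^\ast\cap B\subseteq B$. The map $C\mapsto A^\ast$ satisfies $C\subseteq A^\ast$ on the image. Comparable order-preserving maps induce homotopic maps on order complexes, so these two steps contract $U$ to the point $A^\ast$ and give $\chi(U)=1$.

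The problem is the case $A^\ast\cap B=\varnothing$, which happens when $d(B)\leq0$. Since $a\geq0$ and $B\in U$ forces $d(B)>a\geq0$, we get $d(A^\ast\cap B)>a\geq0=d(\varnothing)$, so the intersection is nonempty after all. The argument therefore closes, and $a\geq0$ is exactly what is needed here.

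To stay self-contained and avoid topology, I would replace the homotopy step by a direct chain-counting involution. Call a chain \emph{good} if it consists of sets contained in $A^\ast$. For chains in $U$, pair each chain with the chain obtained by toggling an appropriate element, either $A^\ast$ itself or the intersection of its top set with $A^\ast$. The standard proof of this is Rota's crosscut / closure lemma: for a closure-type map $f$ with $f(B)\leq B$, $\chi(U)=\chi(f(U))$. I would apply it twice, first with $f(B)=A^\ast\cap B$ mapping $U$ to $U\cap 2^{A^\ast}$, then observe that $U\cap 2^{A^\ast}$ has maximum $A^\ast$ and so is a cone, with $\chi=1$.

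I expect the main obstacle to be stating this Euler-characteristic invariance cleanly. I would prove it by Möbius inversion: $\chi(U)$ equals $-\mu_{\hat U}(\hat0,\hat1)$ in $U$ with a bottom and top adjoined. The invariance under a monotone map below the identity then follows from the closure form of Rota's theorem.
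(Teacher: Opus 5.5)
Your main argument is correct. It is the lattice dual of the paper's argument, and it uses a different topological tool.

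\textbf{Comparison with the paper.} The paper takes a maximizer $B$ of $d$ and uses $d(A\cup B)\geq d(A)+d(B)-d(A\cap B)\geq d(A)$. This shows the superlevel family is closed under $A\mapsto A\cup B$. It then contracts the order complex by hand: it deletes the sets not containing $B$ in decreasing order of cardinality, sliding each one's barycentric weight onto $A\cup B$, until a cone with apex $B$ remains.

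You instead use $d(A^\ast\cup B)\leq d(A^\ast)$ to get closure under $B\mapsto A^\ast\cap B$. You then replace the explicit deformation retraction by Quillen's fact that pointwise comparable order-preserving maps are homotopic, applied to $\mathrm{id}\geq f\leq\mathrm{const}_{A^\ast}$. Either lattice operation works with either tool. The paper's map satisfies $\mathrm{id}\leq g\geq\mathrm{const}_B$, and its deletion scheme dualizes to removing the sets not contained in $A^\ast$ in increasing order of cardinality.

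Your version is shorter and more conceptual, but it relies on one more cited fact from poset topology, whereas the paper writes the homotopy out. Your dual choice also uses $a\geq0$ in one extra place, to keep $A^\ast\cap B$ nonempty, and you check this correctly. Choosing $A^\ast$ inclusion-maximal is unnecessary.

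\textbf{Side remarks to repair.} The main argument does not depend on these, but they are wrong as written.
\begin{enumerate}
\item The aside that $\bigcup_{A\in U}A\in U$ is false. Take $d(X)=\sum_{i\in X}c_i$ with $c=(10,-5,-5)$ and $a=3$. Then $U=\{\{1\},\{1,2\},\{1,3\}\}$, while $d(\{1,2,3\})=0$. Delete the aside.
\item Philip Hall's theorem reads $\mu_{\hat U}(\hat0,\hat1)=\chi(U)-1$, not $\chi(U)=-\mu_{\hat U}(\hat0,\hat1)$. With your sign, the final cone, for which $\mu=0$, would get $\chi=0$.
\item The involution fails as described. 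Take $d(X)=\sum_{i\in X}c_i$ with $c=(10,10,-1)$ and $a=5$. Then $A^\ast=\{1,2\}$ and the chain $\{1,3\}\subsetneq\{1,2,3\}$ lies in $U$. But $A^\ast=A^\ast\cap\{1,2,3\}=\{1,2\}$ is not comparable with $\{1,3\}$, so neither proposed toggle gives a chain.
\end{enumerate}

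\textbf{Fixing the involution.} Toggle $A^\ast\cap B_c$ instead, where $B_c$ is the smallest member of the chain not contained in $A^\ast$. Members below $B_c$ lie in $A^\ast\cap B_c$, members above contain $B_c$, and the toggle leaves $B_c$ unchanged, so this is a sign-reversing involution. The chains inside $2^{A^\ast}$ then cancel by toggling $A^\ast$, leaving only $\{A^\ast\}$. This gives $\chi(U)=1$ without any topology.
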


Before giving the proof, we recall a few rudimentary notions about simplicial complexes. Although the proof can be carried out using elementary computations, this geometric viewpoint helps explain why the characterization has such a form. A finite simplicial complex is a finite collection of vertex sets, called simplices, that is closed under taking subsets. In other words, it is a downward-closed combinatorial set system. A simplex with $k$ vertices has dimension $k-1$. The geometric realization of the complex represents its simplices as points, edges, filled triangles, and higher-dimensional simplices, glued together along their common faces. For example, write $ab$ for $\{a,b\}$, and similarly for other vertex sets, omitting the empty simplex from the following lists. The collection $\{ab,bc,ac\}$ is not a simplicial complex, since it contains edges but omits their vertices. Adding these vertices gives the complex $\{a,b,c,ab,bc,ac\}$, whose geometric realization is the boundary of a triangle. Adding the simplex $abc$ gives $\{a,b,c,ab,bc,ac,abc\}$, whose realization is the filled triangle. \medskip

The Euler characteristic of a finite complex $K$ is the alternating sum
\begin{equation*}
\chi(K)\coloneqq\sum_{\varnothing\neq\sigma\in K}(-1)^{|\sigma|-1}.
\end{equation*}
The empty complex has Euler characteristic zero. For a complex consisting of vertices, edges, and filled triangles, this gives $\chi=V-E+T$, where $V$, $E$, and $T$ count these simplices. Thus the boundary of a triangle has Euler characteristic zero, while the filled triangle has Euler characteristic one.

The Euler characteristic is useful in that it connects these combinatorial counts to the topology of the geometric realization. A familiar example is Euler's formula $V-E+F=2$ for connected planar graphs. Here $F$ counts all faces, including the outer face, and the formula gives the Euler characteristic of the sphere decomposed into these vertices, edges, and faces. \medskip 

For our proof, the relevant topological property is contractibility. A complex is contractible if its geometric realization can be continuously deformed to a point within itself. Every nonempty contractible finite simplicial complex has Euler characteristic one \cite{Bjorner1995}.

The order complex of a finite partially ordered set has one vertex for each element and one simplex for each strict chain. Its Euler characteristic is therefore an alternating sum over nonempty chains, with a chain of $k$ elements contributing $(-1)^{k-1}$.

We apply this construction to the family of sets $A\subseteq[m]$ satisfying $d(A)>a$, ordered by inclusion, where $d$ is a supermodular function and $a>0$ is a fixed threshold. It remains to show that this order complex is contractible whenever it is nonempty.

\begin{proof}
Fix a realization of $d$ and a threshold $a\geq0$. Let $Q_a\coloneqq\{A\subseteq[m]:d(A)>a\}$, ordered by inclusion, and let $\mathcal K_a$ be its order complex. Since $d(\varnothing)=0$, the empty set does not belong to $Q_a$. A simplex of $\mathcal K_a$ is thus a chain $\varnothing\neq A_1\subsetneq\cdots\subsetneq A_k\subseteq[m]$ whose sets all satisfy $d(A_r)>a$. It follows that
\begin{equation*}
\chi(\mathcal K_a)
=
\sum_{k=1}^{m}(-1)^{k-1}
\sum_{\varnothing\neq A_1\subsetneq\cdots\subsetneq A_k\subseteq[m]}
\ind\{d(A_r)>a\ \text{for all }r\}.
\end{equation*}
We will show that $\mathcal K_a$ is contractible whenever it is nonempty. \medskip 

Assume $Q_a\neq\varnothing$ and choose $B$ maximizing $d$ over all subsets of $[m]$. Then $B\in Q_a$. For every $A\in Q_a$, supermodularity and maximality of $B$ give
\begin{equation*}
d(A\cup B)
\geq d(A)+d(B)-d(A\cap B)
\geq d(A)>a.
\end{equation*}
Thus $A\cup B$ also belongs to $Q_a$.

We delete, one at a time, the vertices $A\in Q_a$ that do not contain $B$, in decreasing order of cardinality. Consider such a vertex $A$ when it is about to be deleted, and put $C\coloneqq A\cup B$. The set $C$ belongs to $Q_a$ and has not been deleted, since it contains $B$. Every remaining chain containing $A$ can also contain $C$. Indeed, any set below $A$ is contained in $C$. Any remaining strict superset $U$ of $A$ must contain $B$, because otherwise it would have been deleted earlier. Hence $C=A\cup B\subseteq U$. This shows that adding $C$ to the chain still gives a chain.

We can therefore delete $A$ by continuously moving its weight onto $C$. Let $e_D$ denote the coordinate vector associated with vertex $D$. In this geometric realization, every point has the form $p=\sum_{D\in\sigma}\alpha_De_D$ for some simplex $\sigma$, with nonnegative coefficients summing to one. Set $\alpha_A=0$ if $A\notin\sigma$, and define
\begin{equation*}
H_t(p)\coloneqq p+t\alpha_A(e_C-e_A),
\qquad 0\leq t\leq1.
\end{equation*}
If $A\notin\sigma$, the point remains fixed. Otherwise, the chain property above ensures that $\sigma\cup\{C\}$ is also a simplex. Both endpoints of the motion belong to its geometric realization. Indeed, $H_0(p)=p$ belongs to the face $\sigma$, while $H_1(p)
=
\sum_{D\in\sigma\setminus\{A\}}\alpha_De_D+\alpha_Ae_C$ is a convex combination of vertices of $\sigma\cup\{C\}$. Since $H_t(p)=(1-t)H_0(p)+tH_1(p)$ and this geometric simplex is convex, the entire motion stays inside it. At $t=1$, the coefficient of $e_A$ is zero, so the image lies in the subcomplex obtained by deleting $A$. The map is continuous in $(p,t)$ and fixes every point whose $A$-coordinate is already zero. Thus it defines a deformation retraction onto this subcomplex.

After all these deletions, every remaining vertex contains $B$, and $B$ itself remains. It can therefore be added to every remaining chain. The remaining complex is a cone with apex $B$ which therefore can be continuously contracted to $B$. Hence the original complex $\mathcal K_a$ is contractible. We have proved that $\chi(\mathcal K_a)$ equals one when $\max_A d(A)>a$ and zero otherwise. Taking expectations in the alternating chain sum proves the proposition.
\end{proof}

The topological argument is classical. The key observation here is that supermodularity ensures that the order complex of the violating sets is contractible whenever it is nonempty. We can thus compare the Poisson and exponential deficiencies through identities on individual chains, without requiring their joint laws to agree over all subsets. We establish the required chain identity next.

\subsection{Poisson--Exponential Duality on a Chain}
\label{subsec:chain-duality}

We now compare the deficiencies along a fixed chain. The argument uses the relation between Poisson counts and exponential waiting times. If $\Pi$ is a Poisson process of rate one, with $\Pi(t)$ counting the number of arrivals up to time $t$, and $T_n$ is its $n$-th arrival time, then $T_n>t$ if and only if $\Pi(t)<n$. Moreover, the successive interarrival times are independent unit-exponential random variables.

\begin{lemma}
\label{lem:chain-duality}
Fix a chain $\varnothing\neq A_1\subsetneq\cdots\subsetneq A_k\subseteq[m]$. For every $a\geq0$, let $K_a\sim\Pois(a)$ be independent of $\mathbf N$. Then
\begin{equation*}
\Pp\left(\dE(A_r)>a\ \text{for all }r\right)
=
\Pp\left(\dP(A_r)>K_a\ \text{for all }r\right).
\end{equation*}
\end{lemma}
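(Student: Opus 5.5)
The plan is to realize both sides on a single rate-one Poisson process $\Pi$ on $[0,\infty)$. The exponential demands will serve as its interarrival times, and the Poisson counts as its increments over consecutive intervals. Set $A_0\coloneqq\varnothing$ and enumerate the vertices of $A_k$ so that those of $A_1$ come first, then those of $A_2\setminus A_1$, and so on. Thus $A_r$ consists of the first $|A_r|$ vertices in the enumeration.

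\textbf{Step 1 (exponential side).} Let $T_n$ be the $n$-th arrival time of $\Pi$. By the interarrival property, the vector $(T_n)_{n\le |A_k|}$ has the same law as the partial sums of the $Z_i$ over the enumerated vertices. In particular, jointly in $r$, $(Z(A_r))_{r\le k}$ has the law of $(T_{|A_r|})_{r\le k}$. The left-hand event is $\{Z(A_r)>a+\Lambda(A_r)\ \forall r\}$. By the count/waiting-time duality $T_n>t\iff \Pi(t)<n$, its probability therefore equals
\begin{equation*}
\Pp\bigl(\Pi(a+\Lambda(A_r))<|A_r|\ \text{for all } r\bigr).
\end{equation*}
This step relies on the thresholds $a+\Lambda(A_r)$ being deterministic.

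\textbf{Step 2 (Poisson side).} Decompose $\Pi(a+\Lambda(A_r))=\Pi(a)+\sum_{j\le r}\bigl(\Pi(a+\Lambda(A_j))-\Pi(a+\Lambda(A_{j-1}))\bigr)$. Monotonicity of $\Lambda$ makes the intervals $[a+\Lambda(A_{j-1}),a+\Lambda(A_j)]$ consecutive and disjoint. Hence $\Pi(a)\sim\Pois(a)$ and the increments are independent Poisson variables with means $\mu_j\coloneqq\Lambda(A_j)-\Lambda(A_{j-1})$. On the matching side, $N(A_r)-N(A_{r-1})=\sum_{S:\,S\cap A_r\ne\varnothing,\ S\cap A_{r-1}=\varnothing}N_S$. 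These sums run over disjoint families of types, so they are independent $\Pois(\mu_j)$ variables. Consequently $(\Pi(a),(\Pi(a+\Lambda(A_r))-\Pi(a))_r)$ has the same joint law as $(K_a,(N(A_r))_r)$, with $K_a$ independent of $\mathbf N$.

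\textbf{Step 3 (conclude).} Substituting, the probability from Step 1 equals $\Pp(K_a+N(A_r)<|A_r|\ \forall r)=\Pp(\dP(A_r)>K_a\ \forall r)$, which is the claim. Degenerate cases with $\mu_j=0$ are harmless, since the corresponding increments are identically zero on both sides.

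The main point requiring care is the joint-in-$r$ coupling. It requires nesting the chain consistently with the enumeration of the exponentials and checking independence of the Poisson increments from disjointness of the type families. Once the chain is fixed, both of these are routine.
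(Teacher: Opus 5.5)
Your proposal is correct and follows essentially the same route as the paper: you represent $(Z(A_r))_r$ as the arrival times $(T_{|A_r|})_r$ of a rate-one Poisson process, apply $T_n>t\iff\Pi(t)<n$, split off $\Pi(a)$, and match the remaining nested increments with the independent Poisson increments $N(A_r)-N(A_{r-1})$. The only difference is cosmetic: you match the shifted increments directly with $(K_a,(N(A_r))_r)$, whereas the paper first establishes $(N(A_r))_r\stackrel{d}{=}(\Pi(\Lambda(A_r)))_r$ and then performs the shift.
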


\begin{proof}
Set $A_0=\varnothing$. Along the chain, we view $t_r\coloneqq\Lambda(A_r)$ as the observation times of an auxiliary rate-one Poisson process. Since $A_{r-1}\subseteq A_r$, every type intersecting $A_{r-1}$ also intersects $A_r$. Splitting the types according to whether they already intersect $A_{r-1}$ gives
\begin{equation*}
N(A_r)=\sum_{S:\,S\cap A_r\neq\varnothing}N_S=\sum_{S:\,S\cap A_{r-1}\neq\varnothing}N_S
+\sum_{\substack{S:\,S\cap A_r\neq\varnothing\\
                     S\cap A_{r-1}=\varnothing}}N_S=N(A_{r-1})
+\sum_{\substack{S:\,S\cap A_r\neq\varnothing\\
                     S\cap A_{r-1}=\varnothing}}N_S.
\end{equation*}
Thus each increment $N(A_r)-N(A_{r-1})$ is a sum of independent Poisson random variables and is therefore Poisson. Applying the same decomposition to $\Lambda$ shows that its mean is $\Lambda(A_r)-\Lambda(A_{r-1})=t_r-t_{r-1}$. Moreover, each type contributes only when its neighborhood first intersects the chain, so different increments involve disjoint collections of type counts and are independent.

Now recall that, for a rate-one Poisson process $\Pi$, the number of points arriving between times $t$ and $t'$ has distribution $\Pi(t')-\Pi(t)\sim\Pois(t'-t)$, and increments over disjoint time intervals are independent. Since $0=\Lambda(A_0)\leq\Lambda(A_1)\leq\cdots\leq\Lambda(A_k)$, the increments $\Pi(\Lambda(A_r))-\Pi(\Lambda(A_{r-1}))$ are independent and have exactly the same distributions as the increments above. Both processes start at zero, so summing successive increments gives the joint identity
\begin{equation*}
\bigl(N(A_r)\bigr)_{r=1}^k
\stackrel{d}{=}
\bigl(\Pi(\Lambda(A_r))\bigr)_{r=1}^k.
\end{equation*}

The exponential demands have a similar representation through the arrival times of $\Pi$, with $|A_r|$ now playing the role of an arrival index. Let $T_n$ denote the $n$-th arrival time, with $T_0=0$. The interarrival times $T_j-T_{j-1}$ are independent unit-exponential random variables.
 We again compare increments along the chain. Since $A_{r-1}\subseteq A_r$,
\begin{align*}
Z(A_r)=\sum_{i\in A_r}Z_i=\sum_{i\in A_{r-1}}Z_i+\sum_{i\in A_r\setminus A_{r-1}}Z_i=Z(A_{r-1})+\sum_{i\in A_r\setminus A_{r-1}}Z_i.
\end{align*}
Thus $Z(A_r)-Z(A_{r-1})$ is a sum of $|A_r|-|A_{r-1}|$ independent unit-exponential variables. Moreover, the sets $A_r\setminus A_{r-1}$ are disjoint, so these increments are independent across $r$.

On the Poisson-process side, the time between arrivals $|A_{r-1}|$ and $|A_r|$ is
\begin{equation*}
T_{|A_r|}-T_{|A_{r-1}|}
=
\sum_{j=|A_{r-1}|+1}^{|A_r|}(T_j-T_{j-1}).
\end{equation*}
This is also a sum of $|A_r|-|A_{r-1}|$ independent unit-exponential variables. Different values of $r$ use disjoint blocks of interarrival times, so these increments are independent as well. The two sequences therefore have the same joint increment law. Since $Z(A_0)=T_0=0$, summing successive increments gives
\begin{equation*}
\bigl(Z(A_r)\bigr)_{r=1}^k
\stackrel{d}{=}
\bigl(T_{|A_r|}\bigr)_{r=1}^k.
\end{equation*}

We now use the elementary equivalence $T_n>t$ if and only if $\Pi(t)<n$. This gives
\begin{align*}
\Pp\bigl(\dE(A_r)>a\ \text{for all }r\bigr)
&=\Pp\bigl(Z(A_r)>\Lambda(A_r)+a\ \text{for all }r\bigr)\\
&=\Pp\bigl(T_{|A_r|}>\Lambda(A_r)+a\ \text{for all }r\bigr)\\
&=\Pp\bigl(\Pi(\Lambda(A_r)+a)<|A_r|\ \text{for all }r\bigr).
\end{align*}

To remove the common time shift $a$, write $\Pi(\Lambda(A_r)+a)=\Pi(a)+[\Pi(a+\Lambda(A_r))-\Pi(a)]$. By stationary independent increments, the bracketed terms have the same joint law as $(\Pi(\Lambda(A_r)))_{r=1}^k$ and are independent of the initial count $\Pi(a)\sim\Pois(a)$. Thus, taking $K_a\sim\Pois(a)$ independent of both $\Pi$ and $\mathbf N$, and using the Poisson representation above, gives
\begin{align*}
\Pp\bigl(\Pi(\Lambda(A_r)+a)<|A_r|\ \text{for all }r\bigr)
&=\Pp\bigl(K_a+\Pi(\Lambda(A_r))<|A_r|\ \text{for all }r\bigr)\\
&=\Pp\bigl(K_a+N(A_r)<|A_r|\ \text{for all }r\bigr)\\
&=\Pp\bigl(\dP(A_r)>K_a\ \text{for all }r\bigr).
\end{align*}
The same variable $K_a$ appears in every inequality because the initial interval $[0,a]$ is shared by all observation times.
\end{proof}

The chain condition is essential to this argument. Nested sets give independent increments in both representations. 
In fact, the identity does not hold for incomparable sets. For example, take $m=2$ with a single arrival type $\{1,2\}$ of intensity one. Then $N(\{1\})=N(\{2\})=N(\{1,2\})$, while $Z_1$ and $Z_2$ are independent. Hence, for every $a\geq0$,
\begin{equation*}
\Pp\bigl(\dE(\{1\})>a,\ \dE(\{2\})>a\bigr)
=e^{-2(1+a)},\qquad 
\Pp\bigl(\dP(\{1\})>K_a,\ \dP(\{2\})>K_a\bigr)
=e^{-(1+a)}.
\end{equation*}
Indeed, the first event requires $Z_1>\Lambda(\{1\})+a=1+a$ and $Z_2>\Lambda(\{2\})+a=1+a$, whereas the second is $\vert \{1\}\vert -N(\{1\})>K_a$ and $\vert \{2\}\vert -N(\{2\})>K_a$ which is equivalent to $N_{\{1,2\}}=K_a=0$.

\subsection{From Chain Duality to Offline Equivalence}
\label{subsec:global-duality}

Combining the two preceding results gives a distributional relation between the maximum deficiencies.

\begin{proposition}
\label{thm:deficiency-duality}
For every $a\geq0$, let $K_a\sim\Pois(a)$ be independent of $\mathbf N$. Then
\begin{equation}
\Pp\bigg(\max_{A\subseteq[m]}\dE(A)>a\bigg)
=
\Pp\bigg(\max_{A\subseteq[m]}\dP(A)>K_a\bigg).
\label{eq:tail-duality}
\end{equation}
In particular, for independent $E_1,E_2,\ldots\sim\Exp(1)$, taking the empty sum to be zero,
\begin{equation}
\max_{A\subseteq[m]}\dE(A)
\stackrel{d}{=}
\sum_{\ell=1}^{\max\limits_{A\subseteq[m]}\dP(A)} \!\!\!\!\! \!\!\!\!E_\ell \sim \operatorname{Gamma}\bigg(\max_{A\subseteq[m]}\dP(A),1\bigg).
\label{eq:gamma-mixture}
\end{equation}
\end{proposition}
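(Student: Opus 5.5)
The plan is to apply \Cref{lem:supermodular-chains} to both deficiencies and to match the resulting chain sums term by term using \Cref{lem:chain-duality}.

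\textbf{Tail identity.} Fix $a\geq0$. Both $\dE$ and $\dP$ are supermodular and vanish at $\varnothing$. Applying \Cref{lem:supermodular-chains} to $\dE$ at threshold $a$ expresses $\Pp(\max_A\dE(A)>a)$ as the alternating sum over chains of $\Pp(\dE(A_r)>a\ \forall r)$. By \Cref{lem:chain-duality}, each such term equals $\Pp(\dP(A_r)>K_a\ \forall r)$.

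For the Poisson side, condition on $K_a=j$, using that $K_a$ is independent of $\mathbf N$. For each integer $j\geq0$, apply \Cref{lem:supermodular-chains} to $\dP$ with the deterministic threshold $j$. This gives
\begin{equation*}
\Pp\Bigl(\max_A\dP(A)>j\Bigr)=\sum_{k}(-1)^{k-1}\sum_{\text{chains}}\Pp\bigl(\dP(A_r)>j\ \forall r\bigr).
\end{equation*}
Multiply by $\Pp(K_a=j)$ and sum over $j$. Since the chain sum is finite, the exchange of sums is trivial, and the result equals the alternating chain sum of $\Pp(\dP(A_r)>K_a\ \forall r)$. The two alternating sums then coincide term by term, which gives \eqref{eq:tail-duality}.

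\textbf{Gamma representation.} Write $M\coloneqq\max_A\dP(A)$, a nonnegative integer, and let $G\coloneqq\sum_{\ell=1}^{M}E_\ell$ with the $E_\ell$ independent of $M$. For each $a\geq0$, Poisson--exponential duality for a rate-one process with interarrival times $E_\ell$ gives
\begin{equation*}
\Pp(G>a\mid M)=\Pp(\Pi(a)<M\mid M)=\Pp(K_a<M\mid M),
\end{equation*}
and taking expectations yields $\Pp(G>a)=\Pp(M>K_a)$. Combined with \eqref{eq:tail-duality}, this shows that $\max_A\dE(A)$ and $G$ have the same tail at every $a\geq0$. Both variables are nonnegative (the maximum includes $A=\varnothing$), so they have the same law. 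The Gamma notation is then a mixture description: $G$ is $\operatorname{Gamma}(M,1)$ conditionally on $M$, with $\operatorname{Gamma}(0,1)=\delta_0$.

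\textbf{Main obstacle.} The only delicate point is that the threshold $K_a$ is random while \Cref{lem:supermodular-chains} is stated for a fixed threshold. Conditioning on $K_a$, which is independent of $\mathbf N$, handles this. One must also check that the proposition applies at integer thresholds $j\geq0$, where the strict inequality $\dP>j$ matches its statement exactly.
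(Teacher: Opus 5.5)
Your proposal is correct and follows the paper's proof essentially step for step. You apply \Cref{lem:supermodular-chains} to $\dE$, replace each chain term via \Cref{lem:chain-duality}, and reassemble the Poisson side by conditioning on the independent threshold $K_a$; your explicit sum over $j$ is just the paper's expectation of conditional probabilities. You then obtain the Gamma mixture, as the paper does, from the count--waiting-time equivalence together with nonnegativity of both variables.
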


\begin{proof}
Fix $a\geq0$ and let $K_a\sim\Pois(a)$ be independent of $\mathbf N$. All chain sums below range over nonempty subsets of $[m]$. Applying \Cref{lem:supermodular-chains} to $\dE$, then replacing each chain probability using \Cref{lem:chain-duality}, gives
\begin{align*}
\Pp\left(\max_{A\subseteq[m]}\dE(A)>a\right)
&=
\sum_{k=1}^m(-1)^{k-1}
\sum_{A_1\subsetneq\cdots\subsetneq A_k}
\Pp\bigl(\dE(A_r)>a\ \text{for all }r\bigr)\\
&=
\sum_{k=1}^m(-1)^{k-1}
\sum_{A_1\subsetneq\cdots\subsetneq A_k}
\Pp\bigl(\dP(A_r)>K_a\ \text{for all }r\bigr)\\
&=
\E\left[
\sum_{k=1}^m(-1)^{k-1}
\sum_{A_1\subsetneq\cdots\subsetneq A_k}
\Pp\bigl(\dP(A_r)>K_a\ \text{for all }r\mid K_a\bigr)
\right]\\
&=
\E\left[
\Pp\left(\max_{A\subseteq[m]}\dP(A)>K_a\mid K_a\right)
\right]\\
&=
\Pp\left(\max_{A\subseteq[m]}\dP(A)>K_a\right).
\end{align*}
To apply \Cref{lem:supermodular-chains} at the random threshold $K_a$, we condition on $K_a$. Its independence from $\mathbf N$ leaves the law of $\dP$ unchanged, so we can apply the proposition with $K_a$ held fixed. This proves \eqref{eq:tail-duality}.

For the second statement, condition on $\mathbf N$. This fixes the number of terms in the sum without changing the joint law of the independent unit-exponential variables $E_\ell$. Using again the Poisson point process equivalence $T_n>a$ if and only if $\Pi(a)<n$, and then averaging over $\mathbf N$, gives
\begin{align*}
\Pp(\!\!\!\!\!\!\sum_{\ell=1}^{\max\limits_{A\subseteq[m]}\dP(A)} \!  \!\!\!\!\!\! \!\!\!E_\ell>a)
=
\E[
\Pp(K_a<\max_{A\subseteq[m]}\dP(A)\mid\mathbf N)
]
=
\Pp(\max_{A\subseteq[m]}\dP(A)>K_a)=
\Pp(\max_{A\subseteq[m]}\dE(A)>a),
\end{align*}
where the last equality follows from \eqref{eq:tail-duality}. The first equality also covers the empty sum, since both the event that this sum exceeds $a\geq0$ and the event $K_a<0$ are impossible. Both variables are nonnegative, so equality of their tail probabilities for every $a\geq0$ proves \eqref{eq:gamma-mixture}.
\end{proof}

The Poisson maximum deficiency is the number of vertices left unmatched by an optimal matching. The exponential maximum deficiency is the amount of demand left unprocessed by an optimal offline allocation. Thus \eqref{eq:gamma-mixture} represents the latter as a sum of unit-exponential quantities indexed by the former. We can now prove the main Theorem regarding offline equality. 

\begin{proof}[Proof of \Cref{thm:offline-duality}]
In \eqref{eq:gamma-mixture}, each exponential variable has mean one and is independent of the number of summands. Hence the two maximum deficiencies have the same expectation. Since $\E[Z([m])]=m$, \Cref{lem:two-deficiencies} gives
\begin{equation*}
\E[\OPTE(Z)]=
m-\E\left[\max_{A\subseteq[m]}\dE(A)\right]=
m-\E\left[\max_{A\subseteq[m]}\dP(A)\right]
=
\E[\OPTP(\mathbf N)]. \qedhere
\end{equation*}
\end{proof}

We write $\OPT\coloneqq\E[\OPTP(\mathbf N)]=\E[\OPTE(Z)]$ for the common expected offline optimum. Together with \Cref{lem:online-equivalence}, this shows that a guarantee $\ALG\geq\alpha\OPT$ transfers between corresponding policies in the two models.

Setting $a=0$ in \eqref{eq:tail-duality} also gives $\Pp(\OPTP(\mathbf N)=m)=\Pp(Z\in\calP)$. Thus the probability that all offline vertices can be matched equals the probability that all service demands can be processed by the deadline.

\subsection{Weighted Offline Equality}
\label{subsec:weighted-static}

The equality of expected offline optima extends to nonnegative edge rewards. For each type $S\subseteq[m]$ and vertex $i\in S$, let $w_{i,S}\geq0$ be the reward for matching a type-$S$ arrival to $i$. Write $\OPTP_w(\mathbf N)$ for the maximum total matching reward under the realized counts $\mathbf N$.

On the scheduling side, each type $S$ corresponds to a server that can divide a total processing rate of at most $\lambda_S$ among the jobs in $S$. Each unit of work it gives to job $i$ earns reward $w_{i,S}$. For a demand vector $z\in\R_+^m$, the offline allocation optimum is
\begin{equation*}
\begin{aligned}
\OPTE_w(z)\coloneqq
\max_{\substack{x_{i,S}\geq0\\
                 S\subseteq[m],\,i\in S}}
\quad&
\sum_{S\subseteq[m]}\sum_{i\in S}w_{i,S}x_{i,S}\\
\text{subject to}\quad&
\sum_{i\in S}x_{i,S}\leq\lambda_S
\qquad\forall S\subseteq[m],\\
&
\sum_{S\ni i}x_{i,S}\leq z_i
\qquad\forall i\in[m].
\end{aligned}
\end{equation*}
Here $x_{i,S}$ is the total work allocated by server $S$ to job $i$ over the unit horizon. The weighted matching optimum has the same formulation with type capacities $N_S$ and unit vertex capacities.

\begin{theorem}
\label{thm:weighted-static}
For every nonnegative edge-reward vector $w$ and independent service demands $Z_i\sim\Exp(1)$,
\begin{equation*}
\E[\OPTP_w(\mathbf N)]
=
\E[\OPTE_w(Z)].
\end{equation*}
This also holds for types with the same neighborhood but different edge weights.
\end{theorem}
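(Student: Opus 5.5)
We reduce the weighted statement to the unweighted \Cref{thm:offline-duality} by a layer-cake decomposition of the weights. The key point is that both weighted optima are maximum-weight problems over \emph{polymatroid-like} structures: for fixed $\mathbf N$, the set of vectors $(x_{i,S})$ feasible for the matching LP is a bipartite $b$-matching polytope, and the weighted optimum is attained greedily by thresholding weights. Concretely, for a threshold $\theta\geq0$, let $E_\theta\coloneqq\{(i,S): w_{i,S}>\theta\}$ and let $\OPTP_{\theta}(\mathbf N)$ and $\OPTE_{\theta}(Z)$ denote the \emph{unweighted} optima when only edges in $E_\theta$ are allowed. Both are again instances of the unweighted problems. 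We first allow types with the same neighborhood but different weights by splitting each type into subtypes with independent Poisson counts. This is legitimate since a Poisson process thinned by an independent mark is a family of independent Poisson processes. Restricting edges then amounts to replacing each (sub)type $S$ by its neighborhood $S\cap\{i:(i,S)\in E_\theta\}$. So \Cref{thm:offline-duality}, applied to the modified intensity vector, gives $\E[\OPTP_\theta(\mathbf N)]=\E[\OPTE_\theta(Z)]$ for every $\theta$.

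The main step, and the obstacle, is the identity $\OPTP_w(\mathbf N)=\int_0^\infty \OPTP_\theta(\mathbf N)\,d\theta$ pointwise, and similarly for $\OPTE_w$. Thresholding alone would give only an inequality $\le$ in general weighted matching, since the optimal weighted solution need not be simultaneously optimal for every threshold. The saving structure is that in both problems one side has a \emph{transversal/gammoid} structure. In the matching problem, the sets of offline vertices matchable form a transversal matroid, but here weights live on edges, not vertices. To get a family of edge sets simultaneously optimal, I would instead view the problem from the type side. Each arrival (resp.\ unit of server capacity) chooses among edges, while vertex capacities are $1$ (resp.\ $z_i$). The feasible edge-load vectors form the intersection of two partition-type constraints, so a plain nested greedy fails in general.

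I therefore expect to need a different route. Write the weighted optimum by LP duality as a min over vertex prices $p_i\ge0$ and type prices $q_S\geq0$ with $p_i+q_S\geq w_{i,S}$, giving $\sum_i p_i z_i+\sum_S q_S\lambda_S$ (and with $1$, $N_S$ in place of $z_i,\lambda_S$). Then I would aim to prove equality of expectations by interpolation. Take $w$ with finitely many values and induct on the number of distinct weight levels, using the chain-based machinery. Specifically, the weighted deficiency $\max_{p}\{\dots\}$ can be written as a maximum of a supermodular function over \emph{chains of vertex sets} indexed by weight levels. Then extend \Cref{lem:supermodular-chains} to a lattice product $2^{[m]}\times\cdots$ and apply the chain duality \Cref{lem:chain-duality} level by level.

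If that becomes unwieldy, the fallback I would try first is a direct check that for unit-capacity matching with Poisson supply, an optimal fractional solution on the scheduling side can be chosen nested across thresholds. This would rest on the gross-substitutes property of weighted bipartite matching: the assignment valuation is $M^\natural$-concave, which yields a threshold-consistent optimum. That consistency would make the layer-cake identity exact and finish the proof by integrating the unweighted equality over $\theta$ via Fubini, using finite first moments. The final sentence, about equal neighborhoods with different weights, is handled by the subtype splitting above.
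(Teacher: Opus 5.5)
\textbf{The layer-cake identity is false.} Your first step (splitting into subtypes, restricting neighborhoods, and applying \Cref{thm:offline-duality} at each threshold $\theta$) is fine. The identity it feeds is false, though, and no threshold-consistency property can rescue it. Take offline vertices $1,2$, a type $S_1=\{1\}$ with $w_{1,S_1}=1$, and a type $S_2=\{1,2\}$ with $w_{1,S_2}=2$ and $w_{2,S_2}=1$. On the positive-probability event $N_{S_1}=N_{S_2}=1$ you get $\OPTP_w(\mathbf N)=2$. However, $\OPTP_\theta(\mathbf N)=2$ for $\theta<1$ and $\OPTP_\theta(\mathbf N)=1$ for $1\le\theta<2$, so $\int_0^\infty\OPTP_\theta(\mathbf N)\,d\theta=3$. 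Since $\OPTP_w\le\int_0^\infty\OPTP_\theta\,d\theta$ holds in every realization, the strict gap survives in expectation. The unique maximum matching of $E_1$ is $\{(1,S_2)\}$, and it lies in no maximum matching of $E_0$. The $M^\natural$-concavity of the assignment valuation is a statement about the value as a function of the set of arrivals. It does not produce a matching that is maximum in every $E_\theta$, because matchings form an intersection of two matroids, not a matroid. So the fallback, integrating the unweighted theorem over $\theta$, cannot work.

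\textbf{The lattice-chain route also fails.} Your second route starts where the paper does, with vertex-price duality (\Cref{lem:weighted-dual-prices}), but the continuation fails. A lattice version of \Cref{lem:supermodular-chains} and \Cref{lem:chain-duality} are both tail identities. Chaining them would give the weighted analogue of \eqref{eq:tail-duality}, and that analogue is false. Take one vertex and one type of weight $2$ and rate $\lambda$, so $\OPTP_w=2\min\{N,1\}$ and $\OPTE_w=2\min\{\lambda,Z\}$. The maximal deficiencies relative to the price $p=2$ are $2\ind\{N=0\}$ and $2(Z-\lambda)_+$. Then $\Pp(2(Z-\lambda)_+>a)=e^{-\lambda-a/2}$ differs from $\Pp(2\ind\{N=0\}>K_a)=e^{-\lambda}(1+a)e^{-a}$, for instance at $a=1$. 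Already the one-point chain $\{p=0\}$ violates the chain identity.

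The reason is that along a general chain in $\{0,\dots,W\}^m$ a coordinate can rise by more than one unit, and a residual price $b_S$ can drop at several steps. The increments are then multiples of the same $Z_i$ or reuse the same $N_S$, so they are no longer independent. Even expected maxima disagree on such chains. Take vertex $1$ with a type $\{1\}$ of weight $2$ and rate $\lambda$, an isolated vertex $2$, and the chain $(0,0)<(2,1)$. Then $\E\max\{2N,3\}=2\lambda+3e^{-\lambda}+\lambda e^{-\lambda}$, whereas $\E\max\{2\lambda,2Z_1+Z_2\}=2\lambda+4e^{-\lambda}-e^{-2\lambda}$.

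\textbf{The missing idea} is to compare expectations on small cells rather than tails on arbitrary chains. For integer weights, write $\min_{[0,W]^m}\Phi=\sum_{\sigma}(-1)^{\dim\sigma}\max_\sigma\Phi$ over the Freudenthal triangulation of $[0,W]^m$ into unit-cube chain simplices (\Cref{lem:convex-euler}). This applies because $\Phi_{\mathrm P}$ and $\Phi_{\mathrm E}$ are convex and affine on those simplices (\Cref{lem:potential-affine}). On each such simplex every coordinate rises by at most one and every $b_S$ drops at most once. The count--waiting-time argument then gives equal expected maxima (\Cref{lem:weighted-price-chains}), even though the tails differ. Rational and real weights follow by homogeneity and continuity.
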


The proof in \Cref{app:weighted-offline} expresses the optima through dual prices on offline vertices and builds a simplicial complex from chains of price vectors. An Euler-characteristic identity reduces the comparison of expected optima to local comparisons on this complex, which follow from the same Poisson count--waiting-time identity.

\paragraph{Random Edge Weights:} We can also handle random edge rewards with arbitrary dependence within each arrival, by treating each possible reward vector as a distinct type with the same neighborhood and deterministic edge rewards.
Suppose each type-$S$ arrival draws a nonnegative reward vector $W_S=(W_{i,S})_{i\in S}$ with law $F_S$, independently across arrivals and of the arrival processes. The vector is observed on arrival, and its coordinates may be arbitrarily dependent. Write $F=(F_S)_{S\subseteq[m]}$. Let $\mathcal A$ denote the realized collection of arrivals and their reward vectors, and write $\OPTP_F(\mathcal A)$ for the maximum total matching reward. If $F_S$ has finite support, each profile $w$ defines a subtype $(S,w)$ with neighborhood $S$, intensity $\lambda_S F_S(\{w\})$, and deterministic edge rewards $w_i$. These subtypes form independent Poisson streams. Thus specifying the refined type fixes both the eligible vertices and the rewards, while different refined types may still share the same neighborhood.

On the scheduling side, each refined type becomes a server with the corresponding deterministic processing rate and rewards. More generally, $F_S$ specifies the composition of a pool with total processing rate $\lambda_S$. The portion with reward profiles in a measurable set $B$ has processing rate $\lambda_S F_S(B)$. For continuous distributions, this gives a fluid model with continuously heterogeneous processing capacity. 
The pool composition determined by $F_S$ and the reward associated with each profile are known and deterministic, only the service demands are random.

At time $t$, the scheduler chooses measurable fractions $u_{i,S}(t,w)\geq0$, satisfying $\sum_{i\in S}u_{i,S}(t,w)\leq1$ for $F_S$-almost every $w$. These choices give
\begin{align*}
s_i(t)
&=
\sum_{S\ni i}\lambda_S
\int_{\R_+^S} u_{i,S}(t,w)\,F_S(dw),\\
\text{reward rate at time }t
&=
\sum_{S\subseteq[m]}\lambda_S
\int_{\R_+^S}\sum_{i\in S}w_i u_{i,S}(t,w)\,F_S(dw).
\end{align*}
Job $i$ receives no further service once its cumulative work reaches $Z_i$. 

For a realized demand vector $z$, the offline scheduler knows the demands and chooses the allocation fractions over the entire horizon. Its optimum is
\begin{equation}
\begin{aligned}
\OPTE_F(z)\coloneqq
\sup_u\quad&
\int_0^1
\sum_{S\subseteq[m]}\lambda_S
\int_{\R_+^S}\sum_{i\in S}w_i u_{i,S}(t,w)\,F_S(dw)\,dt\\
\text{subject to}\quad&
\sum_{i\in S}u_{i,S}(t,w)\leq1
\qquad\text{for a.e. }(t,w),\quad\forall S\subseteq[m],\\
&
\int_0^1\sum_{S\ni i}\lambda_S
\int_{\R_+^S} u_{i,S}(t,w)\,F_S(dw)\,dt
\leq z_i
\qquad\forall i\in[m].
\end{aligned}
\label{eq:random-reward-allocation}
\end{equation}
The supremum runs over measurable functions $u_{i,S}:[0,1]\times\R_+^S\to[0,1]$. The first constraint holds almost everywhere with respect to $dt\,F_S(dw)$ and limits the instantaneous use of each capacity profile. The second bounds the cumulative work given to each job. The objective is the total reward earned over the horizon.

\begin{corollary}
\label{cor:random-edge-rewards}
If $\sum_{S\subseteq[m]}\lambda_S\E[\|W_S\|_1]<\infty$, then, for independent $Z_i\sim\Exp(1)$,
\begin{equation*}
\E[\OPTP_F(\mathcal A)]
=
\E[\OPTE_F(Z)].
\end{equation*}
\end{corollary}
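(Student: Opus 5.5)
The plan is to reduce to the finite-support case, which is already covered by \Cref{thm:weighted-static}, and then pass to general laws $F_S$ by approximation on both sides.

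\textbf{Finite support.} If every $F_S$ has finite support, split each type $S$ into subtypes $(S,w)$ with intensity $\lambda_S F_S(\{w\})$ and deterministic rewards $w$. By Poisson thinning these subtypes form independent Poisson streams, so $\OPTP_F(\mathcal A)$ has the law of a weighted matching optimum with several types sharing a neighborhood. On the scheduling side, \eqref{eq:random-reward-allocation} becomes the weighted program $\OPTE_w$ on these subtypes. To see this, aggregate $u_{i,S}(t,w)$ over $t$ into totals $x_{i,(S,w)}=\int_0^1\lambda_SF_S(\{w\})u_{i,S}(t,w)\,dt$; conversely, any feasible total is realized by constant fractions. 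The last sentence of \Cref{thm:weighted-static} then gives equality of expectations.

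\textbf{General laws.} Pick discretizations $W_S^{(n)}=\lfloor 2^n W_S\rfloor/2^n$, truncated to coordinates at most $n$, so that $W_S^{(n)}\uparrow W_S$ coordinatewise and each has finite support. Use the same arrivals and attach to each arrival its discretized vector, with a coupling in which the discretization is a deterministic function of the true vector.

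\textbf{Matching side.} The optimum $\OPTP_{F^{(n)}}(\mathcal A)$ is nondecreasing in $n$, since rewards only increase. It converges to $\OPTP_F(\mathcal A)$ pathwise, because the arrival set is a.s. finite and the maximum over finitely many matchings of increasing linear functionals converges. The bound $\OPTP\le\sum_{\text{arrivals}}\|W\|_1$ has mean $\sum_S\lambda_S\E\|W_S\|_1<\infty$, so monotone convergence gives $\E[\OPTP_{F^{(n)}}]\to\E[\OPTP_F]$.

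\textbf{Scheduling side.} Interpret $F^{(n)}$ as a pool in which each profile $w$ carries reward $w^{(n)}\le w$ with the same capacity; this is exactly the pushforward of $F_S$ under the discretization map. Any feasible $u$ for the discretized problem lifts to the original via $u_{i,S}(t,w)=u^{(n)}_{i,S}(t,w^{(n)})$, with the same work and smaller reward. Hence $\OPTE_{F^{(n)}}(z)\le\OPTE_F(z)$ and the sequence is monotone in $n$. For the reverse, fix $z$ and a feasible $u$ for $F$. The same $u$ is feasible for $F^{(n)}$, viewed as functions of the original $w$, and its reward under $w^{(n)}$ converges to its reward under $w$ by monotone convergence. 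The integrability bound $\OPTE_F(z)\le\sum_S\lambda_S\E\|W_S\|_1$ gives $\lim_n\OPTE_{F^{(n)}}(z)=\OPTE_F(z)$.

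There is one subtlety. The finite-support program optimizes over functions of $w^{(n)}$ only, while the lifted program allows functions of $w$. These have the same value: average $u$ over each discretization cell, which preserves both constraints and, since the rewards $w^{(n)}$ are constant on each cell, preserves the reward. Monotone convergence in $n$, dominated by the same constant bound, then passes the equality to the limit.

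\textbf{Main obstacle.} The delicate step is the scheduling-side identification: showing that the fluid program with continuous pools is the monotone limit of the discretized ones. The cell-averaging argument handles the measurability issue. I would check carefully that the uniform integrability bound makes all expectations finite, so that limits and expectations commute.
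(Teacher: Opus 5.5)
Your proposal is correct and follows the paper's route. You reduce to finitely supported $F_S$ via the refined types $(S,w)$ and \Cref{thm:weighted-static}, then approximate general laws. The only difference is in the limit passage: you use monotone convergence along a nested dyadic discretization, with an explicit cell-averaging step to identify the lifted program with the finite-support one. The paper instead bounds the change in both expected optima directly by $\sum_{S}\lambda_S\E[\|W_S-W_S^{(n)}\|_1]\to0$ and treats the identification in a single phrase, ``grouping profiles with the same approximated vector.''
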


\begin{proof}
First suppose each $F_S$ has finite support. Apply \Cref{thm:weighted-static} to the refined types $(S,w)$ described above. Every feasible schedule induces subtype allocations
\begin{equation*}
x_{i,(S,w)}
=
\lambda_S F_S(\{w\})
\int_0^1u_{i,S}(t,w)\,dt.
\end{equation*}
These allocations satisfy the subtype-capacity and job-demand constraints and yield the same reward. Conversely, every feasible subtype allocation can be implemented using constant allocation fractions over the unit horizon. Thus the allocation optimum in \Cref{thm:weighted-static} equals $\OPTE_F(z)$, proving the identity for finitely supported distributions.

For general distributions, approximate each $W_S$ by a finitely supported vector $W_S^{(n)}$ through coordinatewise truncation and rounding. The integrability assumption ensures that
\begin{equation*}
\sum_{S\subseteq[m]}\lambda_S
\E\bigl[\|W_S-W_S^{(n)}\|_1\bigr]
\longrightarrow0.
\end{equation*}
This quantity bounds the change in either expected offline optimum. For matching, it bounds the expected sum of reward changes over all arrivals. For allocation, it bounds the reward change over all available type capacity. Grouping profiles with the same approximated vector gives the finite-support model, for which equality holds. Passing to the limit proves the claim.
\end{proof}

As a special case of the deterministic edge weights result, assigning a weight $w_i$ to each offline vertex corresponds to setting $w_{i,S}=w_i$ whenever $i\in S$. The objective then becomes $\sum_{i=1}^m w_i x_i$, where $x_i=\sum_{S\ni i}x_{i,S}$. This case will yield the equality of expected allocation regions below.

\subsection{Expected Allocation Regions}

The two models place randomness on different sides of the allocation problem. Matching has Poisson type supplies and unit vertex capacities. Scheduling has deterministic type capacities $\lambda_S$ and exponential service demands. The theorem shows that this transfer of randomness preserves the expected optimum for arbitrary neighborhoods. The above offline equality identity also has a geometric consequence, in terms of the expectation of random convex sets. Indeed, the two models have the same expected feasible allocation region.

For a random region $\mathcal K$, let $\E[\mathcal K]$ denote the set of vectors $\E[X]$ for all random vectors $X$ such that $X\in\mathcal K$ almost surely.\footnote{This is the expectation of random sets in the sense of Aumann \cite{Aumann1965}.} 

\begin{corollary}
\label{cor:expected-regions}
The matching and scheduling models satisfy
\begin{equation*}
\E\left[
\left\{
x\in\R_+^m:
x(A)\leq N(A),\quad\forall A\subseteq[m]
\right\}
\cap[0,1]^m
\right]
=
\E\big[
\calP\cap\prod_{i=1}^m[0,Z_i]
\big] \subset \calP\cap[0,1]^m.
\end{equation*}
Moreover, except for the trivial case $\Lambda([m])=0$, the last inclusion is strict. 
\end{corollary}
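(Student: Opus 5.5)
The plan is to identify the two Aumann expectations through their support functions. Both random regions are compact, convex, downward-closed subsets of $\R_+^m$. Call them $\mathcal K^{\mathrm{P}}$ for matching and $\mathcal K^{\mathrm{E}}$ for scheduling. The set $\mathcal K^{\mathrm{P}}$ is uniformly bounded (it lies in $[0,1]^m$), and $\mathcal K^{\mathrm{E}}$ lies in $\calP$, so both are integrably bounded. By standard facts on Aumann expectations of random compact convex sets in finite dimension (the probability space is nonatomic in both cases, or one simply uses convexity of the regions), each of $\E[\mathcal K^{\mathrm P}]$ and $\E[\mathcal K^{\mathrm E}]$ is a compact convex set. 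Its support function is $h(w)=\E[\max_{x\in\mathcal K}\langle w,x\rangle]$.

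Both expectations are also downward closed in $\R_+^m$. If $X\in\mathcal K$ almost surely and $0\le y\le\E[X]$, then $X\cdot(y/\E[X])$ coordinatewise, with $0/0:=0$, stays in $\mathcal K$ and has mean $y$. For downward-closed convex compacts in $\R_+^m$, equality follows from equality of support functions on $w\in\R_+^m$ alone, since the support in any direction $w$ equals the support at $w^+$.

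So it suffices to show, for each $w\ge0$,
\[
\E\Bigl[\max_{x\in\mathcal K^{\mathrm P}}\langle w,x\rangle\Bigr]=\E\Bigl[\max_{x\in\mathcal K^{\mathrm E}}\langle w,x\rangle\Bigr].
\]
\begin{itemize}
\item On the matching side, the flow decomposition argument of \Cref{lem:flow-decomposition} applies with $\lambda_S$ replaced by $N_S$, which are nonnegative capacities. It shows that $\{x\ge0:x(A)\le N(A)\ \forall A\}$ is exactly the set of aggregates of allocations with $\sum_{i\in S}x_{i,S}\le N_S$. Intersecting with $[0,1]^m$ adds the vertex constraints. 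Hence the left-hand maximum is $\OPTP_w(\mathbf N)$ with vertex weights $w_{i,S}=w_i$.
\item On the scheduling side, \Cref{lem:flow-decomposition} itself identifies the right-hand maximum with $\OPTE_w(Z)$ for the same weights.
\end{itemize}
\Cref{thm:weighted-static} then gives the equality, as noted after \Cref{cor:random-edge-rewards}.

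For the inclusion, the pointwise inclusion $\mathcal K^{\mathrm E}\subseteq\calP$ gives $\E[\mathcal K^{\mathrm E}]\subseteq\calP$ by convexity and closedness of $\calP$. Similarly, $\E[\mathcal K^{\mathrm P}]\subseteq[0,1]^m$.

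For strictness, I will exhibit a direction $w\ge0$ where the support functions differ. Take $i$ with $\Lambda(\{i\})>0$, which exists when $\Lambda([m])>0$, and set $w=e_i$.
\begin{itemize}
\item The support of $\calP\cap[0,1]^m$ in direction $e_i$ is $\min\{1,\Lambda(\{i\})\}$, attained at $x=\min\{1,\Lambda(\{i\})\}e_i$.
\item The support of the expected region is $\E[\min\{Z_i,\Lambda(\{i\})\}]=1-e^{-\Lambda(\{i\})}$, which is strictly smaller for every $\Lambda(\{i\})>0$.
\end{itemize}
A point of $\calP\cap[0,1]^m$ attaining the larger value therefore lies outside the expectation.

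The main obstacle is the measure-theoretic step: that the Aumann expectation is closed and convex and is characterized by its support function. This requires integrable boundedness and either nonatomicity of the probability space or convexity of the values; both hold here. Everything else reduces to results already established in the paper.
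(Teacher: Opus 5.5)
Your proposal is correct and follows the paper's proof essentially step for step. It proves equal support functions in nonnegative directions via Theorem~\ref{thm:weighted-static} with vertex weights $w_{i,S}=w_i$, extends to all directions by downward closure, obtains the inclusion from convexity of $\calP$, and gets strictness from the point $\min\{\Lambda(\{i\}),1\}e_i$ against the bound $1-e^{-\Lambda(\{i\})}$. You also make explicit two details the paper leaves implicit: downward closure of the expected regions (by coordinatewise rescaling of a selection), and the flow-decomposition identification of the realized maxima with $\OPTP_w(\mathbf N)$ and $\OPTE_w(Z)$.
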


The corollary gives a coordinatewise interpretation of the duality. Every expected allocation vector achievable by an offline policy in one model is also achievable in the other. Thus transferring randomness from type supplies to vertex capacities preserves all trade-offs between the expected allocations to individual vertices. The corollary hence preserves fairness objectives defined in terms of expected allocations. For example, maximizing $\min_{i\in[m]}\E[X_i]$ or $\sum_{i=1}^m u_i(\E[X_i])$, where the $u_i$ are increasing concave utilities, gives the same optimal offline value in both models.

\begin{proof}
Fix nonnegative offline vertex weights $w_1,\ldots,w_m$. By \Cref{thm:weighted-static}, the expected maximum of $\sum_{i=1}^m w_i x_i$ over the realized feasible region is the same in both models.

For any random feasible aggregate allocation $X\in\R_+^m$, linearity gives $w\cdot\E[X]=\E[w\cdot X]$, which is at most the expected realized optimum. This bound is attained by choosing a measurable optimal allocation in each realization. Thus the two expected allocation regions have the same maximum for every nonnegative linear objective. Both expected regions are compact, convex and downward closed in $\R_+^m$. Downward closure allows coordinates with negative weights to be set to zero, so equality of maxima for nonnegative weights implies equality for arbitrary weights. The two expected regions therefore have the same support function and, being compact and convex, must coincide.

For the inclusion in the mean region, take any random feasible scheduling allocation $X$. Convexity of $\calP$ gives $\E[X]\in\calP$, while $X_i\leq Z_i$ gives $\E[X_i]\leq1$. Hence $\E[X]\in\calP\cap[0,1]^m$. If $\Lambda([m])>0$, choose $i$ with $\Lambda(\{i\})>0$. Every feasible scheduling allocation satisfies $X_i\leq\min\{\Lambda(\{i\}),Z_i\}$, so
\begin{equation*}
\E[X_i]
\leq
1-e^{-\Lambda(\{i\})}
<
\min\{\Lambda(\{i\}),1\}.
\end{equation*}
However, the vector whose $i$th coordinate is $\min\{\Lambda(\{i\}),1\}$ and whose other coordinates are zero belongs to $\calP\cap[0,1]^m$, by monotonicity of $\Lambda$. It therefore belongs to the mean region but not to the common expected allocation region, proving strict inclusion.
\end{proof}

\begin{remark}[Fluid Relaxation]
Replacing the random type supplies by their means gives the classical fluid relaxation. By \Cref{cor:expected-regions}, its value bounds the expected offline optimum:
\begin{equation*}
\OPT
\leq
\max_{x\in\calP\cap[0,1]^m}x([m]).
\end{equation*}
For a single offline vertex with arrival intensity one, the fluid benchmark equals one, while the expected offline optimum is the probability of at least one arrival, namely $1-e^{-1}$. Accepting the first arrival attains this expected optimum. Thus no online policy can guarantee a ratio greater than $1-1/e$ against the fluid benchmark, even though online decisions incur no loss in this example. This ratio is tight, attained by the Poisson version of the Suggested Matching algorithm of \cite[Section~4.1 and Appendix~C]{FeldmanEtAl2009}.

The corollary shows that replacing Poisson type supplies by their means gives a relaxation, but simultaneously replacing unit vertex capacities by independent unit-exponential capacities recovers the expected allocation region exactly. Our competitive guarantees are measured against this common expected offline optimum, for which ratios above $1-1/e$ are possible.
\end{remark}

\section{From Offline Allocations to Online Scheduling}
\label{sec:online-service}

We now prove a competitive ratio lower bound for the online scheduling problem, through some explicit policy. We prove this guarantee for an arbitrary compact, convex, downward-closed feasible-rate region $P\subseteq\R_+^m$ containing the origin and independent service demands $Z_i\sim\Exp(1)$, hence it is more general than the matching guarantee.  To do that, we construct an online scheduling policy that preserves a constant fraction of any feasible offline allocation, coordinate by coordinate. As before, $s_i(t)$ denotes the service rate of job $i$, and $x_i(t)$ its cumulative processed work.

Fix a measurable offline allocation $Y$ satisfying $Y(z)\in P$ and $0\leq Y(z)\leq z$ for every $z\in\R_+^m$. The allocation may depend on the entire demand vector. We use it to construct an online policy satisfying
\begin{equation*}
\E[x_i(1)]\geq(1-e^{1-e})\,\E[Y_i(Z)]
\qquad\forall i\in[m].
\end{equation*}
Choosing an optimal offline allocation will then give the same competitive guarantee for scheduling and Poisson matching.

\subsection{The Posterior Allocation Policy}
\label{subsec:posterior-projection}

A natural policy is to imitate a clairvoyant allocation using the information currently available. This suggests serving each unfinished job at a rate equal to the conditional expectation of its offline allocation, this corresponds to a projection of the offline allocation on the information available at time $t$. Kesselheim, Molinaro, Patton, and Singla~\cite{KesselheimEtAl2026}  develop this posterior-matching principle for several online allocation problems.

In our model, the scheduler's decisions also determine what it observes, since job completions depend on the service provided. The observed filtration therefore depends on the policy itself, so defining the rates through conditional expectations would specify the policy only implicitly. We instead define the rates directly from the observed state, using exponential memorylessness, and then prove that they equal the conditional expectations under the history generated by this policy. Completed jobs receive zero service.

At time $t$, sample an auxiliary demand vector $\widetilde Z$. For each completed job, set $\widetilde Z_i=x_i(t)$, since its demand is now known. For each unfinished job, set $\widetilde Z_i=x_i(t)+E_i$, where the $E_i$ are independent $\Exp(1)$ variables. Define
\begin{equation}
s_i(t)\coloneqq
\begin{cases}
\E[Y_i(\widetilde Z)],&\text{if job $i$ is unfinished},\\
0,&\text{if job $i$ is completed}.
\end{cases}
\label{eq:state-feedback}
\end{equation}
Here the expectation is taken over the auxiliary exponential variables, with the observed state held fixed. Every vector $Y(\widetilde Z)$ belongs to $P$, so its expectation belongs to $P$ by convexity. Setting the completed coordinates to zero preserves feasibility by downward closure. The rule therefore uses only the observed history and chooses feasible service rates.

Let $\mathcal F_t$ denote the observed service and completion history up to time $t$. The following identities show that the constructed policy is indeed the posterior projection, and how it relates to the projected demand. 

\begin{lemma}
\label{lem:adaptive-posterior}
For every unfinished job $i$,
\begin{equation}
s_i(t)=\E[Y_i(Z)\mid\mathcal F_t],
\qquad
x_i(t)+1=\E[Z_i\mid\mathcal F_t].
\label{eq:two-posterior-controls}
\end{equation}
\end{lemma}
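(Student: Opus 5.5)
The plan is to identify the conditional law of the full demand vector $Z$ given $\mathcal F_t$. Once that law is known, both identities in \eqref{eq:two-posterior-controls} follow by direct computation. The claim is: conditionally on $\mathcal F_t$, the completed coordinates $Z_i$ equal their observed $x_i(t)$. The unfinished coordinates are distributed as $x_i(t)+E_i$ with $E_i$ i.i.d.\ $\Exp(1)$, independent of each other and of the completed coordinates' values. In other words, $Z\mid\mathcal F_t\stackrel{d}{=}\widetilde Z$, where $\widetilde Z$ is the auxiliary vector used in \eqref{eq:state-feedback}.

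Given this claim, the first identity is immediate. For unfinished $i$,
\[
\E[Y_i(Z)\mid\mathcal F_t]=\E[Y_i(\widetilde Z)]=s_i(t)
\]
by definition of the rule. The second follows from $\E[x_i(t)+E_i]=x_i(t)+1$, since $x_i(t)$ is $\mathcal F_t$-measurable.

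To prove the claim, I will use the same device as in \Cref{lem:work-completions}. Condition on the internal randomization. The policy's trajectory up to time $t$ is then a deterministic functional of which jobs have completed and when. More precisely, the service path $(x(u))_{u\le t}$ is determined by the completion history, and job $i$ completes at the first time $x_i(u)$ reaches $Z_i$.

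Concretely, fix an observed history. It consists of a completed set $C$ with completion times and service levels $x_j(\tau_j)$ for $j\in C$, and the current levels $x_i(t)$ for $i\notin C$. This history occurs exactly on the event
\[
\{Z_j=x_j(\tau_j)\ \forall j\in C\}\cap\{Z_i>x_i(t)\ \forall i\notin C\},
\]
and the service paths are fixed functions of the history along it. Because the $Z_i$ are independent with product density $\prod e^{-z_i}$, the conditional density on this event factorizes. The completed coordinates become point masses. Each unfinished coordinate has density proportional to $e^{-z_i}\ind\{z_i>x_i(t)\}$, which is $x_i(t)+\Exp(1)$ by memorylessness, and these coordinates are independent.

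The main obstacle is making this rigorous. The history has continuous components, namely completion times, so conditioning is on a measure-zero event, and the policy's own rates are defined through the state. I would handle this in two steps. First, order the completions as events $\tau_1<\tau_2<\cdots$ and argue inductively on the number of completions. Between consecutive completions the rates are a deterministic function of the state at the last completion, so the path is a deterministic ODE solution. By continuity and monotonicity of $x_i$, the map from $Z$ to completion times is injective on the relevant coordinates. Second, write the joint law of completion times and identities via a change of variables $z_j\mapsto\tau_j$. This yields the factorized form above, with the unfinished coordinates' survival factors $e^{-x_i(t)}$ separating out.

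Two technical points remain. The existence and well-posedness of the state-feedback ODE should be assumed or obtained from measurability of $Y$. For randomized policies the argument applies after conditioning on the internal randomness.
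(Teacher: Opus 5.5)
Your proposal is correct and takes the same route as the paper. Both identify the conditional law of $Z$ given $\mathcal F_t$ with that of the auxiliary vector $\widetilde Z$: completed coordinates are fixed at their observed work, and unfinished residuals are independent $\Exp(1)$ and independent of the history. Both then read off the two identities from the definition of the rule and the unit mean of the residuals.

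The only difference is in how that conditional law is obtained. The paper gets it in one line from exponential memorylessness, citing Glynn--Haas. You instead sketch a justification through a completion-by-completion factorization, which adds detail rather than changing the argument.
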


\begin{proof}
For an unfinished job $i$, the only information revealed about its demand is that $Z_i>x_i(t)$. By memorylessness of independent exponentials, the residual demands $Z_i-x_i(t)$ of unfinished jobs are independent $\Exp(1)$ variables conditional on $\mathcal F_t$,  see~\cite[Section~2.3]{GlynnHaas2015}.

The allocation $Y_i(Z)$ may also depend on the demands of completed jobs. These are known, as $Z_j=x_j(t)$ for each completed job $j$. Thus $\widetilde Z$ has the conditional law of the full vector $Z$ given $\mathcal F_t$. The policy definition and the mean-one residual demand give, for every unfinished job $i$,
\begin{equation*}
s_i(t)\!=\E[Y_i(\widetilde Z)]
=\E[Y_i(\widetilde Z)\!\mid\mathcal F_t]
=\E[Y_i(Z)\!\mid\mathcal F_t],
\, \, \,
\E[Z_i\mid\mathcal F_t]\!=x_i(t)+\E[Z_i-x_i(t)\!\mid\mathcal F_t]=x_i(t)+1. \qedhere
\end{equation*}
\end{proof}

Subtracting the two identities gives, while job $i$ is unfinished,
\begin{equation}
x_i(t)+1-s_i(t)
=
\E[Z_i-Y_i(Z)\mid\mathcal F_t]
\geq0.
\label{eq:posterior-gap}
\end{equation}
The conditional expected demand left unallocated by $Y$ is exactly the difference between $x_i(t)+1$ and our policy's service rate. Since $x_i'(t)=s_i(t)$, while job $i$ is unfinished we have the differential inequality
\begin{equation*}
x_i'(t)\leq x_i(t)+1.
\end{equation*}
This suggests a comparison to a reference policy that solves the associated ODE $s_i(t)=x'_i(t)=x_i(t)+1$, which stops when the total work reaches $Z_i$. The induced rate need not be feasible in $P$.

To obtain a lower bound on our policy's expected work, we will control its loss relative to this reference. The identity \eqref{eq:posterior-gap} gives us the rate difference at the current state in terms of the conditional expected demand left unallocated offline. We will use it to bound the loss in expected work in terms of $\E[Z_i-Y_i(Z)]$. Since
\begin{equation*}
\E[Y_i(Z)]=1-\E[Z_i-Y_i(Z)],
\end{equation*}
this will allow us to compare the expected online work with the expected offline allocation.

\subsection{A One-Job Comparison Policy}
\label{subsec:reference-process}

Consider a job $i\in[m]$ that is unfinished at time $t$, after receiving $h$ units of work. From this time onward, serve it at a rate  $s_i(t)=x'_i(t)=x_i(t)+1$, until either the deadline or its completion. We call this the comparison policy, and recall that typically $s(t) \not \in P$ .

Let $q(t,h)$ denote the expected additional work processed by time $1$ under this policy. Thus $q(t,h)$ is the continuation value of the comparison policy, while $h+q(t,h)$ includes the work already processed. The remaining demand is unit exponential, so we can compute $q$ directly from the service prescribed by the ODE.

\begin{proposition}
\label{prop:reference-value}
For $0\leq t\leq1$ and $h\geq0$,
\begin{equation}
q(t,h)
=
1-\exp\left(-(h+1)(e^{1-t}-1)\right).
\label{eq:q-explicit}
\end{equation}
In particular, $q(1,h)=0$ and $q(0,0)=1-e^{1-e}$.
\end{proposition}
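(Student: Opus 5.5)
The plan is to solve the comparison ODE explicitly and then reuse the computation from \Cref{lem:work-completions}. Fix $t\in[0,1]$ and $h\geq0$. Starting from time $t$ with cumulative work $h$, the comparison policy serves at rate $x'(u)=x(u)+1$ with $x(t)=h$. This linear ODE has the unique solution
\begin{equation*}
x(u)=(h+1)e^{u-t}-1,\qquad u\in[t,1].
\end{equation*}
Consequently, absent completion, the additional work delivered by the deadline is
\begin{equation*}
B(t,h)\coloneqq x(1)-h=(h+1)\bigl(e^{1-t}-1\bigr).
\end{equation*}
This quantity is deterministic, because the comparison policy does not depend on the demand realization except through stopping at completion.

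Next we use memorylessness. Conditional on the job being unfinished with $h$ units processed, the residual demand $R\coloneqq Z_i-h$ is $\Exp(1)$. The comparison policy delivers additional work $\min\{R,B(t,h)\}$: it either completes the job after $R$ further units, or reaches the deadline having delivered $B(t,h)$. Exactly as in the proof of \Cref{lem:work-completions},
\begin{equation*}
q(t,h)=\E[\min\{R,B\}]=\int_0^{B}e^{-u}\,du=1-e^{-B(t,h)},
\end{equation*}
which is \eqref{eq:q-explicit}. For the special cases, $t=1$ gives $B=0$ and hence $q=0$, while $(t,h)=(0,0)$ gives $B=e-1$ and hence $q=1-e^{1-e}$.

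There is no substantive obstacle here. The only point requiring care is justifying that the additional work equals $\min\{R,B\}$. This holds because the trajectory up to completion is the deterministic ODE solution, and service stops exactly when the cumulative work reaches $Z_i$. That is the same coupling used in \Cref{lem:work-completions}, with $B$ now deterministic.
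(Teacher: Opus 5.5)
Your proposal is correct and follows essentially the same route as the paper. You solve the linear ODE $x'=x+1$ from $(t,h)$, read off the deterministic additional work $(h+1)(e^{1-t}-1)$, and evaluate $\E[\min\{R,B\}]$ for an $\Exp(1)$ residual demand via the tail integral.
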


\begin{proof}
Before completion, the cumulative work $x(v)$ under the comparison policy satisfies $\frac{dx}{dv}=x+1$ with the initial condition $x(t)=h$. 
Separating variables and integrating gives
\begin{equation*}
\log\left(\frac{x(v)+1}{h+1}\right)=\int_h^{x(v)}\frac{du}{u+1}=v-t,
\end{equation*}
hence the total work is $x(v)=(h+1)e^{v-t}-1$ until completion. If the job remained unfinished until the deadline, its cumulative work would be $x(1)=(h+1)e^{1-t}-1$. Since $h$ units have already been processed, the additional work would be
\begin{equation*}
x(1)-h
=
(h+1)e^{1-t}-1-h
=
(h+1)(e^{1-t}-1).
\end{equation*}

Let $E_i\sim\Exp(1)$ denote the residual demand. The additional work is the minimum of $E_i$ and the work prescribed by the ODE. Using the tail integral formula for its expectation gives
\begin{align*}
q(t,h)=
\E\left[\min\left\{E_i,(h+1)(e^{1-t}-1)\right\}\right]
&=
\int_0^{(h+1)(e^{1-t}-1)}\Pp(E_i>u)\,du\\
&=
\int_0^{(h+1)(e^{1-t}-1)}e^{-u}\,du\\
&=
1-\exp\left(-(h+1)(e^{1-t}-1)\right).
\end{align*}
The endpoint values follow by substitution.
\end{proof}
Recall that by \Cref{lem:work-completions}, $q(t,h)$ is also the probability of completion by the deadline. Starting from time zero with no prior service, the completion probability by time $t$ is $1-e^{1-e^t}$, the same curve as the limiting matched fraction in the extremal exchangeable instance.

One key aspect of the comparison policy is that it treats each job separately. Conditional on the current state, the additional work for job $i$ depends only on its residual demand $E_i$. Its expected total additional work is therefore the sum of $q(t,x_i(t))$ over unfinished jobs. This allows us to carry out the comparison one job at a time.

We next record two properties of $q$ that will control the loss relative to the comparison policy. The first identity expresses the balance between work already processed and expected work remaining under this policy. The second bounds $1+\partial_h q(t,h)-q(t,h)$, the marginal value of serving a job now when we follow the comparison policy afterwards. To interpret this last quantity, $\varepsilon$ additional service at fixed time gives $\varepsilon$ immediate work and $\varepsilon\partial_h q(t,h)$ additional continuation value, but removes $q(t,h)$ with probability $\varepsilon$ upon job completion. 
\begin{equation*}
\varepsilon\bigl(1+\partial_h q(t,h)-q(t,h)\bigr)
+
o(\varepsilon).
\end{equation*}

\begin{lemma}
\label{lem:reference-sensitivity}
For $0\leq t\leq1$ and $h\geq0$, the function $q$ satisfies the PDE
\begin{equation}
\partial_t q(t,h)
+
(h+1)\bigl(1+\partial_h q(t,h)-q(t,h)\bigr)
=
0.
\label{eq:q-backward}
\end{equation}
As a consequence,
\begin{equation}
0\leq
1+\partial_h q(t,h)-q(t,h)
\leq
-\partial_t q(t,0).
\label{eq:q-sensitivity}
\end{equation}
\end{lemma}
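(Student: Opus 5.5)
The plan is to work directly from the explicit formula \eqref{eq:q-explicit}. Write $g(t)\coloneqq e^{1-t}-1\geq0$, so that $q(t,h)=1-e^{-(h+1)g(t)}$, with $g'(t)=-e^{1-t}=-(g(t)+1)$. Differentiating gives
\begin{equation*}
\partial_h q(t,h)=g(t)e^{-(h+1)g(t)},\qquad
\partial_t q(t,h)=(h+1)g'(t)e^{-(h+1)g(t)}=-(h+1)(g(t)+1)e^{-(h+1)g(t)}.
\end{equation*}
Since $1-q=e^{-(h+1)g}$, we obtain $1+\partial_hq-q=(1+g)e^{-(h+1)g}$. Multiplying by $(h+1)$ then gives exactly $-\partial_t q$, which proves \eqref{eq:q-backward}. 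This is a mechanical check.

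For \eqref{eq:q-sensitivity}, the identity $1+\partial_h q(t,h)-q(t,h)=(1+g(t))e^{-(h+1)g(t)}$ shows nonnegativity at once, because $g\geq0$. The function is nonincreasing in $h$, because $g\geq0$ makes the exponent nonincreasing. Hence it is at most its value at $h=0$, namely $(1+g(t))e^{-g(t)}$. Applying \eqref{eq:q-backward} at $h=0$, this value equals $-\partial_t q(t,0)$, which gives the upper bound.

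As a sanity check on the PDE, I would also note the probabilistic interpretation. Along the comparison trajectory $h'=h+1$, the quantity $h+q$ is a martingale-type balance: the drift $\partial_t q+(h+1)\partial_h q+(h+1)$ exactly compensates the completion hazard $(h+1)$ times the lost continuation $q$. This is not needed for the proof, since the direct computation suffices. There is no real obstacle. The only point requiring care is the monotonicity in $h$, which rests on $g(t)\geq0$ for $t\leq1$.
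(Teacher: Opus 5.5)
Your proposal is correct and follows the paper's proof essentially line for line. You differentiate the explicit formula to get $1+\partial_h q-q=e^{1-t}\exp\bigl(-(h+1)(e^{1-t}-1)\bigr)$ and $\partial_t q=-(h+1)$ times the same expression, then use nonnegativity and monotonicity in $h$ to bound the expression by its value at $h=0$.
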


\begin{proof}
Differentiating \eqref{eq:q-explicit} gives
\begin{align*}
1+\partial_h q(t,h)-q(t,h)
&=
e^{1-t}\exp\left(-(h+1)(e^{1-t}-1)\right),\\
\partial_t q(t,h)
&=
-(h+1)e^{1-t}
\exp\left(-(h+1)(e^{1-t}-1)\right).
\end{align*}
The first identity follows directly. The expression on the first line is nonnegative and nonincreasing in $h$, so it is bounded above by its value at $h=0$. By the second line, this value is $-\partial_t q(t,0)$.
\end{proof}

\subsection{Comparing the Two Policies}

Let $X_t$ denote the observed state, consisting of the cumulative work of every job and the set of unfinished jobs. Under our policy, this state is Markov: both the service rates and the conditional laws of the remaining demands are determined by the current state. For each job $i$, consider the potential
\begin{equation}
f_i(t,X_t)
\coloneqq
x_i(t)
+
\ind\{x_i(t)<Z_i\}q(t,x_i(t)).
\label{eq:comparison-potential}
\end{equation}
Conditional on $\mathcal F_t$, this is the expected total work for job $i$ if we follow our policy until time $t$ and then switch to the comparison policy. It includes the work already processed and, if the job is unfinished, the expected additional work under the comparison policy.
At time zero, the potential equals the comparison policy's expected work and at the deadline, it equals our policy's work:
\begin{equation*}
f_i(0,X_0)=q(0,0)=1-e^{1-e},
\qquad
f_i(1,X_1)=x_i(1).
\end{equation*}
Thus the expected decrease of $f_i$ between time $0$ and $1$ is exactly the loss in expected work relative to the comparison policy. We will bound this decrease by computing its instantaneous expected rate of change and integrating.

For a time-homogeneous Markov process, the \emph{infinitesimal generator} $\mathcal L$ gives the instantaneous expected rate of change of a function $f$, conditional on the current state. Given that our potential depends on $t$, we also include the time in the state. For a suitable function $f$, let
\begin{equation}
(\mathcal L f)(t,x)
\coloneqq
\lim_{\delta\to 0^+}
\frac{\E[f(t+\delta,X_{t+\delta})\mid X_t=x]-f(t,x)}{\delta}.
\label{eq:generator}
\end{equation}
Thus $(\mathcal L f)(t,x)$ is the instantaneous expected rate of change of $f(t,X_t)$, conditional on $X_t=x$. It accounts for both the explicit dependence on time and the evolution of the state. \emph{Dynkin's formula} integrates this expected rate of change \cite{Davis1984}. Under the usual regularity and integrability conditions,
\begin{equation}
\E[f(t,X_t)]-\E[f(0,X_0)]
=
\E\left[\int_0^t(\mathcal L f)(u,X_u)\,du\right].
\label{eq:dynkin}
\end{equation}
This is the analogue of the fundamental theorem of calculus for expected changes along the process.

The following lemma computes this expected rate of change for our potential under the online policy.

\begin{lemma}
\label{lem:online-generator}
Let $\mathcal L$ be the generator of the online process. While job $i$ is unfinished, with $h=x_i(t)$,
\begin{equation}
(\mathcal L f_i)(t,X_t)
=
\partial_t q(t,h)
+
s_i(t)\bigl(1+\partial_h q(t,h)-q(t,h)\bigr).
\label{eq:online-drift}
\end{equation}
After job $i$ completes, the left-hand side is zero.
\end{lemma}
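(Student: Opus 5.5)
We compute the generator directly from the definition \eqref{eq:generator}, conditioning on the observed state $X_t=x$ with job $i$ unfinished and $x_i(t)=h$. Over a short interval $[t,t+\delta]$ there are two kinds of events. Either job $i$ does not complete, and its work moves deterministically from $h$ to $h+\int_t^{t+\delta}s_i(u)\,du=h+s_i(t)\delta+o(\delta)$. Or job $i$ completes. Completions of other jobs $j\neq i$ may change the future rates $s_i$, but they affect $f_i$ only through $x_i$ and the indicator. They therefore enter only at higher order. Each such completion occurs with probability $O(\delta)$ and changes $s_i$ on an interval of length $O(\delta)$, so its effect on $x_i(t+\delta)$ is $o(\delta)$.

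\textbf{Completion probability.} By memorylessness (as in the proof of \Cref{lem:adaptive-posterior}), the residual demand $Z_i-h$ is $\Exp(1)$ conditional on $\mathcal F_t$. The probability that job $i$ completes in $[t,t+\delta]$ is therefore
\begin{equation*}
1-\exp\Bigl(-\int_t^{t+\delta}s_i(u)\,du\Bigr)=s_i(t)\delta+o(\delta).
\end{equation*}
This uses right-continuity of $s_i$ along the path between events. I would justify it from the state-feedback definition \eqref{eq:state-feedback}: between events the rate is a function of $(x(t),\text{unfinished set})$. Continuity in $x$ follows from continuity of the law of $\widetilde Z$ under shifts together with boundedness of $Y$ by the compact region $P$.

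\textbf{Value in each case.} If job $i$ completes, then $f_i(t+\delta)=x_i(t+\delta)=h+O(\delta)$. If it does not complete, then
\begin{equation*}
f_i(t+\delta)=h+s_i\delta+q\bigl(t+\delta,\,h+s_i\delta\bigr)+o(\delta).
\end{equation*}
Expanding $q$ to first order using the smoothness of the explicit formula \eqref{eq:q-explicit} gives
\begin{equation*}
\E[f_i(t+\delta)\mid X_t]-\bigl(h+q(t,h)\bigr)
=s_i\delta+\partial_t q\,\delta+s_i\,\partial_h q\,\delta-s_i\delta\,q(t,h)+o(\delta).
\end{equation*}
The last term is the lost continuation value $q$ weighted by the completion probability. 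Dividing by $\delta$ and letting $\delta\to0$ yields \eqref{eq:online-drift}.

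\textbf{After completion.} Once job $i$ has completed, $f_i=x_i(t)$ is frozen because its service rate is zero, so the generator vanishes.

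\textbf{Main obstacle.} The delicate part is the rigorous control of the error terms, not the expansion itself. One must show that the limit in \eqref{eq:generator} exists uniformly enough. This requires bounding the contribution of two or more events in $[t,t+\delta]$ by $O(\delta^2)$, using that total completion rates are bounded by $\max_{s\in P}\sum_i s_i<\infty$. It also requires that $q$, $\partial_t q$ and $\partial_h q$ are bounded on $[0,1]\times\R_+$, which is immediate from \eqref{eq:q-explicit}. With these bounds the $o(\delta)$ terms are uniform, and I would present the process as a piecewise-deterministic Markov process in the sense of \cite{Davis1984}. In that framework the generator is exactly drift plus jump rate times jump size, which gives the formula directly.
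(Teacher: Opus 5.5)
Your proposal is correct and follows essentially the same route as the paper. Both compute the generator by a first-order expansion in which job $i$ completes with probability $s_i(t)\delta+o(\delta)$ and forfeits $q(t,h)$, the no-completion branch is Taylor-expanded, and completions of other jobs contribute only $O(\delta^2)$; your justification of right-continuity of $s_i$ (total-variation continuity of the shifted exponential law together with boundedness of $Y$) makes explicit a step the paper leaves implicit. One small slip: $\partial_t q$ is not bounded on $[0,1]\times\R_+$, since $\partial_t q(1,h)=-(h+1)$, but this is harmless because reachable states satisfy $h\le\max_{s\in P}s_i$, which is how the paper states the regularity needed for Dynkin's formula.
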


\begin{proof}
Condition on $\mathcal F_t$, with job $i$ unfinished, and write $h=x_i(t)$. Over an interval of length $\delta$, bounded service rates imply that the probability of any completion is $O(\delta)$ and that of two or more completions is $O(\delta^2)$. By the exponential residual-demand law, job $i$ completes with probability $s_i(t)\delta+o(\delta)$ and does not complete with probability $1-O(\delta)$.

If no job completes, then $x_i(t+\delta)=h+s_i(t)\delta+o(\delta)$.  Since the job remains unfinished, a Taylor expansion and recalling that $x'_i(t)=s_i(t)$, gives
\begin{align*}
f_i(t+\delta,X_{t+\delta})-f_i(t,X_t)
&=
x_i(t+\delta)
+q(t+\delta,x_i(t+\delta))-h-q(t,h)\\
&=
s_i(t)\delta
+\partial_t q(t,h)\delta
+\partial_h q(t,h)s_i(t)\delta
+o(\delta)\\
&=
\delta\bigl(
\partial_t q(t,h)
+s_i(t)(1+\partial_h q(t,h))
\bigr)
+o(\delta).
\end{align*}

If job $i$ completes, its final cumulative work is $h+O(\delta)$ and its continuation value disappears, so the potential changes by $-q(t,h)+O(\delta)$. If only other jobs complete, they cause no jump in this potential; its change over the interval is $O(\delta)$, contributing only $O(\delta^2)$ to the conditional expectation. Together with the probabilities of a job completing or not completing, this yields 
\begin{equation*}
\E[f_i(t+\delta,X_{t+\delta})-f_i(t,X_t)\mid\mathcal F_t]=
\delta\bigl(
\partial_t q(t,h)
+s_i(t)(1+\partial_h q(t,h)-q(t,h))
\bigr)
+o(\delta).
\end{equation*}
Dividing by $\delta$ and letting $\delta\to0$ gives the stated identity. After job $i$ completes, its potential remains constant.
\end{proof}

The identity \eqref{eq:q-backward} shows that the generator of the comparison process, applied to $f_i$, is zero. Indeed, if we already follow the comparison policy, delaying the switch to that same policy changes nothing. The potential therefore has zero conditional expected drift. 

We can now prove the following result. 
\begin{proposition}
\label{prop:online-offline-comparison}
For every job $i\in[m]$, $\E[x_i(1)]
\geq
(1-e^{1-e})\E[Y_i(Z)]$.
\end{proposition}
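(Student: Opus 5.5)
The plan is to apply Dynkin's formula \eqref{eq:dynkin} to the potential $f_i$ from \eqref{eq:comparison-potential} over $[0,1]$. At the endpoints we have $f_i(0,X_0)=q(0,0)=1-e^{1-e}$ and $f_i(1,X_1)=x_i(1)$, so
\begin{equation*}
\E[x_i(1)]-(1-e^{1-e})=\E\left[\int_0^1(\mathcal L f_i)(t,X_t)\,dt\right],
\end{equation*}
and it suffices to lower bound the integrated generator.

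First I rewrite the drift from \Cref{lem:online-generator} using the PDE \eqref{eq:q-backward}. While job $i$ is unfinished, with $h=x_i(t)$, substituting $\partial_t q(t,h)=-(h+1)\bigl(1+\partial_h q-q\bigr)$ gives
\begin{equation*}
(\mathcal L f_i)(t,X_t)=-\bigl(x_i(t)+1-s_i(t)\bigr)\,g(t,x_i(t)),\qquad g\coloneqq 1+\partial_h q-q.
\end{equation*}
By \eqref{eq:posterior-gap}, the first factor equals $\E[Z_i-Y_i(Z)\mid\mathcal F_t]\geq0$. By \eqref{eq:q-sensitivity}, $0\leq g\leq-\partial_t q(t,0)$. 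After completion the drift is zero, so for all $t$,
\begin{equation*}
(\mathcal L f_i)(t,X_t)\geq-\ind\{x_i(t)<Z_i\}\,\E[Z_i-Y_i(Z)\mid\mathcal F_t]\,\bigl(-\partial_t q(t,0)\bigr).
\end{equation*}

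Next I take expectations. The indicator of being unfinished at time $t$ is $\mathcal F_t$-measurable, and $Z_i-Y_i(Z)\geq0$, so the tower property gives
\begin{equation*}
\E\bigl[\ind\{x_i(t)<Z_i\}\E[Z_i-Y_i\mid\mathcal F_t]\bigr]=\E\bigl[\ind\{x_i(t)<Z_i\}(Z_i-Y_i)\bigr]\leq\E[Z_i-Y_i(Z)].
\end{equation*}
Since $-\partial_t q(t,0)$ is deterministic and nonnegative, Fubini yields
\begin{equation*}
\E\left[\int_0^1\mathcal L f_i\,dt\right]\geq-\E[Z_i-Y_i(Z)]\int_0^1-\partial_t q(t,0)\,dt=-\E[Z_i-Y_i(Z)]\bigl(q(0,0)-q(1,0)\bigr).
\end{equation*}
By \Cref{prop:reference-value}, $q(0,0)-q(1,0)=1-e^{1-e}$. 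Combining this with $\E[Z_i]=1$ gives
\begin{equation*}
\E[x_i(1)]\geq(1-e^{1-e})\bigl(1-\E[Z_i-Y_i(Z)]\bigr)=(1-e^{1-e})\E[Y_i(Z)].
\end{equation*}

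The main obstacle is justifying Dynkin's formula. The state is a piecewise-deterministic Markov process: it follows deterministic flows between completion jumps, with at most $m$ jumps. The rates are bounded because $P$ is compact, $q$ is smooth and bounded, and $f_i$ is bounded by $\max_{s\in P}\|s\|_\infty+1$. I would invoke the PDMP version of Dynkin's formula \cite{Davis1984}. Alternatively, I would argue directly by conditioning on the successive completion times and applying the fundamental theorem of calculus along each deterministic segment, adding the expected jump terms. Measurability of the rule \eqref{eq:state-feedback} in the state also needs to be checked.
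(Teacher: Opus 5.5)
Your proposal is correct and follows the paper's proof essentially step for step. You rewrite the generator via \eqref{eq:q-backward} as $-(x_i(t)+1-s_i(t))(1+\partial_h q-q)$, bound it using \eqref{eq:posterior-gap} and \eqref{eq:q-sensitivity}, and integrate with Dynkin's formula, Fubini, and the tower property to obtain the factor $q(0,0)-q(1,0)=1-e^{1-e}$. The only difference is cosmetic: you carry the indicator $\ind\{x_i(t)<Z_i\}$ and drop it using $Z_i-Y_i(Z)\geq0$, whereas the paper observes that the bound $(\mathcal L f_i)\geq \E[Z_i-Y_i(Z)\mid\mathcal F_t]\,\partial_t q(t,0)$ also holds after completion, since its right-hand side is then nonpositive.
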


\begin{proof}
Fix a job $i$. While it is unfinished, write $h=x_i(t)$. By \Cref{lem:online-generator} and \eqref{eq:q-backward}, the expected rate of change of its potential is
\begin{equation*}
(\mathcal L f_i)(t,X_t)=-\bigl(h+1-s_i(t)\bigr)
\bigl(1+\partial_h q(t,h)-q(t,h)\bigr).
\end{equation*}
By \eqref{eq:posterior-gap}, $h+1-s_i(t)
=
\E[Z_i-Y_i(Z)\mid\mathcal F_t]
\geq0$. The sensitivity bound \eqref{eq:q-sensitivity} therefore gives
\begin{equation*}
(\mathcal L f_i)(t,X_t)
= -\E[Z_i-Y_i(Z)\mid\mathcal F_t]
\bigl(1+\partial_h q(t,h)-q(t,h)\bigr)\geq 
\E[Z_i-Y_i(Z)\mid\mathcal F_t] \cdot \partial_t q(t,0).
\end{equation*}
This inequality also holds after completion: the left-hand side is then zero, while the right-hand side is nonpositive.

The service rates are bounded, there are at most $m$ completions, and $f_i$ and its derivatives along service trajectories are bounded on the states reached by time $1$. Thus Dynkin's formula applies. Using the endpoint values of $f_i$, Fubini, and the tower property,

\begin{align*}
(1-e^{1-e})-\E[x_i(1)]=\E[f_i(0,X_0)]-\E[f_i(1,X_1)]
&=-\E\left[\int_0^1(\mathcal L f_i)(t,X_t)\,dt\right]\\
&\leq \E\left[
\int_0^1(-\partial_t q(t,0))
\E[Z_i-Y_i(Z)\mid\mathcal F_t]\,dt\right]\\
&=
\int_0^1(-\partial_t q(t,0)) \cdot 
\E\bigl[\E[Z_i-Y_i(Z)\mid\mathcal F_t]\bigr]\,dt\\
&=
\E[Z_i-Y_i(Z)]\int_0^1(-\partial_t q(t,0)) \, dt\\
&=
\E[Z_i-Y_i(Z)]\bigl(q(0,0)-q(1,0)\bigr)\\
&=
(1-e^{1-e})\E[Z_i-Y_i(Z)].
\end{align*}
Finally, $\E[Z_i]=1$, so rearranging gives
\begin{equation*}
\E[x_i(1)]
\geq
(1-e^{1-e})\bigl(1-\E[Z_i-Y_i(Z)]\bigr)
=
(1-e^{1-e})\E[Y_i(Z)]. \qedhere
\end{equation*}
\end{proof}
The coordinate-wise guarantee also extends to ex-ante concave objectives, as in \cite{duetting2026fairprophets}. For any nonnegative, coordinatewise nondecreasing concave function $f$ and feasible offline allocation $Y$, monotonicity and concavity give
\begin{equation*}
f\bigl(\E[x(1)]\bigr)\geq f\left(\left(1-\frac{e}{e^e}\right)\E[Y(Z)]\right)\geq\left(1-\frac{e}{e^e}\right)f\bigl(\E[Y(Z)]\bigr).
\end{equation*}
Choosing $Y$ to optimize this objective yields the same competitive guarantee. By \Cref{lem:online-equivalence,cor:expected-regions}, this applies both to expected work allocated to individual jobs and to matching probabilities of individual offline vertices.

We now specialize to $P=\calP$ and choose a measurable optimal offline allocation $Y(Z)$, so that $Y(Z)\in\calP$, $Y(Z)\leq Z$, and $Y(Z)([m])=\OPTE(Z)$. 

\begin{corollary}
\label{cor:poisson-competitive}
For every Poisson matching instance with known arrival intensities, there exists an online policy satisfying
\begin{equation*}
\ALG\geq(1-e^{1-e})\OPT.
\end{equation*}
\end{corollary}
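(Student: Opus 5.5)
The plan is to assemble the corollary from three earlier results: the coordinatewise scheduling guarantee of \Cref{prop:online-offline-comparison}, the policy correspondence of \Cref{lem:online-equivalence}, and the offline equality of \Cref{thm:offline-duality}. Fix a Poisson instance $\boldsymbol\lambda$ with $\OPT>0$. Specialize the scheduling model to $P=\calP$, which is compact, convex, downward closed and contains the origin, as noted after its definition. Let $Y$ be the measurable optimal offline allocation fixed just before the statement, so that $Y(Z)\in\calP$, $0\le Y(Z)\le Z$, and $Y(Z)([m])=\OPTE(Z)$.

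First, I will run the posterior allocation policy \eqref{eq:state-feedback} built from this $Y$. \Cref{prop:online-offline-comparison} gives $\E[x_i(1)]\ge(1-e^{1-e})\E[Y_i(Z)]$ for every job $i$. Summing over $i$ and using linearity of expectation yields
\begin{equation*}
\E\Bigl[\sum_{i=1}^m x_i(1)\Bigr]\ge(1-e^{1-e})\,\E[\OPTE(Z)].
\end{equation*}

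Second, I will apply \Cref{lem:online-equivalence} to this nonanticipating scheduling policy. It produces an online matching policy whose expected matching size equals the expected total work processed, and this common value is $\ALG$.

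Third, \Cref{thm:offline-duality} gives $\E[\OPTE(Z)]=\E[\OPTP(\mathbf N)]=\OPT$. Chaining these three facts gives $\ALG\ge(1-e^{1-e})\OPT$.

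The only step requiring care is the measurability and existence of $Y$. The feasible set $\{x\in\calP: x\le z\}$ is a polytope that varies continuously with $z$, so a measurable selection of a maximizer exists. For example, one can take the lexicographically smallest optimal vertex, or apply a standard measurable-selection theorem. This also makes the rates $s_i(t)=\E[Y_i(\widetilde Z)]$ well defined.

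A second point is that the scheduling policy must be admissible for \Cref{lem:online-equivalence}. Its rates must be feasible, bounded, and a measurable function of the observed state, so that the flow decomposition can be chosen measurably. Feasibility was already checked after \eqref{eq:state-feedback}, and the rates are bounded because $\calP$ is compact. I therefore expect no real obstacle beyond citing these facts.
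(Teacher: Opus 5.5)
Your proposal is correct and follows the paper's proof exactly. You apply \Cref{prop:online-offline-comparison} to a measurable optimal allocation $Y$, sum over jobs, identify the expected work with $\ALG$ via \Cref{lem:online-equivalence}, and replace $\E[\OPTE(Z)]$ by $\OPT$ via \Cref{thm:offline-duality}. Your extra remarks on measurable selection and on the admissibility of the policy are sound, and they only spell out what the paper takes for granted.
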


\begin{proof}
Applying \Cref{prop:online-offline-comparison} to the optimal offline allocation and summing over jobs gives 
\begin{equation*}
\ALG=\E\left[\sum_{i=1}^m x_i(1)\right]
\geq
(1-e^{1-e})\,\E[Y(Z)([m])]
=
(1-e^{1-e})\,\E[\OPTE(Z)]
=
(1-e^{1-e})\OPT.
\end{equation*}
The first equality is \Cref{lem:online-equivalence}, and the last equality is by \Cref{thm:offline-duality}.
\end{proof}

\section{Transfer with Exactly $n$ i.i.d. Arrivals and Tightness}
\label{sec:fixed-count}

We transfer the Poisson guarantee to exactly $n$ i.i.d.~arrivals from a known distribution. Huang and Shu~\cite{HuangShu2021} compared the offline benchmarks using concavity and obtained an online transfer with an $O(n^{-1/2})$ loss for $O(1)$-approximately monotone policies. Yan~\cite[Appendix~A.1]{Yan2026JailletLu} removed this assumption, including for nonnegative edge weights, with an $O(n^{-1/3})$ loss. We refine Yan's simulation by removing its truncation and using a direct sampling comparison of the offline benchmarks. This gives an explicit $O(n^{-1/2})$ loss for arbitrary policies, also with nonnegative edge weights. The benchmark comparison follows from concavity of the expected maximum matching, as done by Huang and Shu~\cite[proof of Theorem~1]{HuangShu2021}. We include a direct sampling proof for completeness.

Fix $n\geq1$ and a known distribution $\nu$ on finitely many arrival types, each specifying a neighborhood and nonnegative edge weights. Let $\OPT_k$ denote the expected maximum matching weight for $k$ independent samples from $\nu$, with $\OPT_0=0$. Consider a rate-$n$ Poisson process on $[0,1]$ whose arrival types are independent samples from $\nu$. Its total arrival count is $K\sim\Pois(n)$, and its expected offline optimum is $\OPT=\E[\OPT_K]$. 

\begin{lemma}
\label{lem:uniform-position-transfer}
For matching with nonnegative edge weights, the offline benchmarks satisfy
\begin{equation}
\OPT\geq
\left(1-e^{-n}\frac{n^n}{n!}\right)\OPT_n.
\label{eq:poissonized-benchmark}
\end{equation}
Moreover, every Poisson online policy with expected reward $\ALG$ can be converted into an online policy for exactly $n$ i.i.d.~arrivals whose expected reward satisfies
\begin{equation}
\ALG_n\geq\ALG-e^{-n}\frac{n^n}{n!}\OPT_n.
\label{eq:uniform-position-transfer}
\end{equation}
\end{lemma}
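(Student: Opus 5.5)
For the benchmark inequality \eqref{eq:poissonized-benchmark}, I will compare $\OPT_k$ with $\OPT_n$ by sampling. The plan is to show $\OPT_k\geq \min\{k/n,1\}\OPT_n$ for every $k\geq 0$. For $k\geq n$ this holds because $\OPT_k$ is nondecreasing in $k$: adding arrivals cannot decrease the maximum weight with nonnegative weights. For $k<n$, realize $k$ i.i.d.\ samples as a uniformly random $k$-subset of $n$ i.i.d.\ samples. Then restrict an optimal matching on the $n$ samples to the chosen subset. Each matched edge survives with probability $k/n$, so $\OPT_k\geq (k/n)\OPT_n$. Averaging over $K\sim\Pois(n)$ gives $\OPT\geq \E[\min\{K/n,1\}]\,\OPT_n$. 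It remains to compute
\[
1-\E[\min\{K/n,1\}]=\E[(1-K/n)^+]=\tfrac1n\E[(n-K)^+].
\]
A standard Poisson identity gives $\E[(n-K)^+]=n\,\Pp(K=n)$. One way to see it is to write $\E[(n-K)^+]=\sum_{k<n}(n-k)e^{-n}n^k/k!$ and telescope, using $k\cdot n^k/k!=n\cdot n^{k-1}/(k-1)!$. This yields exactly $1-e^{-n}n^n/n!$.

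For the online transfer \eqref{eq:uniform-position-transfer}, I will refine Yan's simulation as follows. Draw $K\sim\Pois(n)$ and i.i.d.\ uniform arrival times in $[0,1]$ for $K$ points, sorted. Assign the $n$ real arrivals, in order, to the first $\min\{K,n\}$ Poisson slots. If $K>n$, generate the extra $K-n$ arrivals as fictitious i.i.d.\ samples from $\nu$, placed at the later slots. The simulated stream then has exactly the Poisson law, and we run the Poisson policy on it, executing matches online. A match on a real arrival is taken. A match on a fictitious arrival is not executed, but the simulated state treats that vertex as matched.

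The loss of this simulation comes from two sources. When $K<n$, the remaining $n-K$ real arrivals are served by nothing, but this costs nothing relative to $\ALG$. When $K>n$, we lose the reward the policy earns on the $K-n$ fictitious arrivals, and bounding this loss is the main step. By exchangeability, conditional on $K=k>n$, the expected Poisson-policy reward on the last $k-n$ arrivals is at most its reward on any $k-n$ of them. I would bound it by the offline value of those arrivals, which is at most $\OPT_{k-n}\leq \frac{k-n}{n}\OPT_n$ when $k-n\leq n$ by the sampling step. This needs care, since the online policy's reward on a subset of arrivals is not obviously bounded by that subset's offline optimum. I would argue that the rewards earned on a set of arrivals form a matching using only those arrivals, hence are at most its offline optimum. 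Taking expectations then gives $\OPT_{k-n}$, which is at most $\lceil (k-n)/n\rceil\OPT_n$ by subadditivity, or at most $\frac{k-n}{n}\OPT_n$ using concavity.

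Summing over $k$ gives a loss of at most $\frac1n\E[(K-n)^+]\OPT_n$. Since $\E[K-n]=0$, we have $\E[(K-n)^+]=\E[(n-K)^+]=n e^{-n}n^n/n!$, which matches \eqref{eq:uniform-position-transfer}. I expect the main obstacle to be rigorously justifying $\OPT_j\leq (j/n)\OPT_n$ for $j>n$. This requires concavity of $k\mapsto\OPT_k$, which is the property the paper attributes to Huang and Shu; by itself the sampling argument gives only the reverse direction for $k<n$. If concavity is unavailable, I would instead bound the fictitious reward more crudely through the same exchangeability, as $\frac{k-n}{k}$ of the total reward.
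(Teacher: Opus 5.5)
Your benchmark argument is correct and follows the paper: sampling plus monotonicity gives $\OPT_k\geq\min\{k/n,1\}\OPT_n$, and $\E[(n-K)^+]=n\Pp(K=n)$.

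The online transfer, however, has a genuine gap, and it is in the construction itself, not only in the final estimate. You place the fictitious arrivals deterministically in the last $K-n$ virtual slots. This lets a policy concentrate its reward on late arrivals. Exchangeability of the arrival types says nothing about how an online policy's reward is spread across positions.

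Here is a counterexample. Take one offline vertex and a single arrival type adjacent to it, with unit weight, so $\OPT_n=1$. Let the Poisson policy match the first arrival in $[1-c/n,1]$, so $\ALG=1-e^{-c}$. In your simulation that arrival is real only when $\Pi(1-c/n)\leq n-1$, where $\Pi$ counts virtual arrivals. Hence $\ALG_n=(1-e^{-c})\,\Pp(\Pois(n-c)\leq n-1)\to(1-e^{-c})/2$. The loss is therefore of order $\OPT_n$, not $e^{-n}n^n/n!\,\OPT_n$.

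Your loss estimate also fails on its own terms. Bounding the reward on the fictitious arrivals by $\OPT_{k-n}$ is valid. But $\OPT_{k-n}\leq\frac{k-n}{n}\OPT_n$ is false when $k-n<n$, which is the typical regime since $k-n=O(\sqrt n)$. There, sampling (and concavity) give the reverse inequality; in the example above, $\OPT_j=1$ for every $j\geq1$. So this route only bounds the loss by roughly $\Pp(K>n)\OPT_n$.

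The fix is essentially the fallback in your last sentence, which is what the paper does, but it requires changing the simulation. When $K=k>n$, place the real arrivals, in their original order, at a uniformly random $n$-subset of the $k$ virtual positions. Choose this subset independently of the event times, the types, and the policy's randomization. To stay online, process the preceding fictitious arrivals before each real one. Because the policy cannot distinguish real from fictitious arrivals, each edge of the simulated matching lies on a real arrival with probability exactly $n/k$. The conditional loss is then $(1-n/k)$ times the simulated reward, hence at most $(1-n/k)\OPT_k$.

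The inequality you flagged as the main obstacle, $\OPT_k\leq(k/n)\OPT_n$ for $k\geq n$, needs no concavity. It is your own sampling argument with $n$ of the $k$ arrivals retained: $\OPT_n\geq(n/k)\OPT_k$. This bounds the conditional loss by $\frac{k-n}{n}\OPT_n$. Summing over $k>n$ gives $\E[(K-n)^+]\,\OPT_n/n=e^{-n}n^n/n!\,\OPT_n$.
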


\begin{proof}
\textbf{Offline transfer.} For each realization of $k$ arrivals, do the following. Take a maximum-weight matching on $k$ arrivals, and select $\ell\leq k$ of these arrivals uniformly at random. Retaining only the matching edges incident to the selected arrivals gives a feasible matching on the sampled $\ell$ arrivals. Each edge is retained with probability $\ell/k$, so this matching has expected weight $(\ell/k)\OPT_k$. Since the selected arrivals are still $\ell$ independent samples from $\nu$, their expected offline optimum is $\OPT_\ell$. Therefore,
\begin{equation*}
\OPT_\ell\geq\frac{\ell}{k}\OPT_k.
\end{equation*}
Together with monotonicity, this gives
\begin{equation*}
\OPT_k\geq\min\{k/n,1\}\OPT_n,
\qquad
\OPT_k\leq\frac{k}{n}\OPT_n\quad\text{for }k\geq n.
\end{equation*}
The Poisson recursion $k\Pp(K=k)=n\Pp(K=k-1)$ gives
\begin{align*}
\E[(n-K)^+]
&=\E[(K-n)^++n-K]\\
&=\E[(K-n)^+]+n-\E[K]
=\E[(K-n)^+]\\
&=\sum_{k=n+1}^{\infty}(k-n)\Pp(K=k)\\
&=n\sum_{k=n+1}^{\infty}\bigl[\Pp(K=k-1)-\Pp(K=k)\bigr]\\
&=n\Pp(K=n)
=ne^{-n}\frac{n^n}{n!}.
\end{align*}
Hence
\begin{align*}
\OPT
\geq\E\left[\min\left\{\frac{K}{n},1\right\}\right]\OPT_n
=\E\left[1-\frac{(n-K)^+}{n}\right]\OPT_n
=\left(1-e^{-n}\frac{n^n}{n!}\right)\OPT_n.
\end{align*}
\medskip

\textbf{Online transfer.} Fix a Poisson online policy with expected reward $\ALG$ when the total arrival rate is $n$. We construct a policy for exactly $n$ i.i.d.~arrivals using Yan's simulation~\cite[Appendix~A.1]{Yan2026JailletLu}, without truncation. Independently of the real arrivals, sample a rate-$n$ Poisson process on $[0,1]$, with total count $K$ and ordered event times $T_1<\cdots<T_K$. We run the Poisson policy on this virtual time scale. The simulated policy observes only the virtual arrival history, it is not given $K$, future event times, or the real/dummy labels in advance.

If $K\leq n$, then upon the $i$-th real arrival, for $i\leq K$, advance the virtual time to $T_i$ and present this arrival to the simulated policy. Implement its matching decision and ignore the remaining $n-K$ real arrivals. If $K>n$, generate $K-n$ additional 'fake' arrivals independently from $\nu$. Choose a uniformly random $n$-subset of the $K$ virtual positions, independently of the event times and arrival types. Place the real arrivals in these positions in their original order, and the additional simulated arrivals in the remaining positions. Run the Poisson policy on this combined sequence at the sampled event times. This produces a feasible matching on the real and simulated arrivals. Removing all edges incident to simulated arrivals leaves a feasible matching on the real arrivals. The construction can be realized online because, upon each real arrival, we process any preceding simulated arrivals and immediately implement the policy's decision for the real arrival. Simulated matches reserve vertices only within the simulation.

The simulated input has the same law as the original Poisson input, so the simulated matching has expected weight $\ALG$. Conditional on $K=k>n$, the combined arrival types are i.i.d.~and independent of the positions chosen for the real arrivals. Since the Poisson policy does not observe which arrivals are real, its matching is also independent of these positions. Each matched edge therefore survives deletion with probability $n/k$. The conditional expected weight of the simulated matching is at most $\OPT_k$, so the conditional expected loss is at most $(1-n/k)\OPT_k$. There is no loss when $K\leq n$. Using $\OPT_k\leq(k/n)\OPT_n$ for $k\geq n$ and averaging gives
\begin{align*}
\ALG-\ALG_n
&\leq\sum_{k>n}\Pp(K=k)\left(1-\frac nk\right)\OPT_k\\
&\leq\sum_{k>n}\Pp(K=k)\frac{k-n}{n}\OPT_n\\
&=\frac{\E[(K-n)^+]}{n}\OPT_n
=e^{-n}\frac{n^n}{n!}\OPT_n. \qedhere
\end{align*}
\end{proof}

This lets us transfer a competitive ratio established in the Poisson model to the i.i.d. model at a small loss. 
We now specialize to unweighted matching. Let $\mathsf{CR}_n$ denote the optimal worst-case competitive ratio for exactly $n$ i.i.d.~arrivals from a known distribution, over all offline vertex sets and arrival distributions.

\begin{corollary}
\label{cor:fixed-count-guarantee}
For every $n\geq1$,
\begin{equation*}
\mathsf{CR}_n\geq1-\frac{e}{e^e}-\left(2-\frac{e}{e^e}\right)e^{-n}\frac{n^n}{n!}\geq1-\frac{e}{e^e}-\frac{2-e/e^e}{\sqrt{2\pi n}}.
\end{equation*}
In particular,
\begin{equation*}
\lim_{n\to\infty}\mathsf{CR}_n=1-\frac{e}{e^e}.
\end{equation*}
\end{corollary}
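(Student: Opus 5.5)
We combine \Cref{cor:poisson-competitive} with \Cref{lem:uniform-position-transfer}. Fix an instance for exactly $n$ i.i.d.\ arrivals, with offline vertex set $[m]$ and arrival distribution $\nu$ on neighborhoods. The associated Poisson instance has total rate $n$ and type intensities $\lambda_S=n\,\nu(S)$. Write $p_n\coloneqq e^{-n}n^n/n!$ and $c\coloneqq 1-e/e^e$.

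\Cref{cor:poisson-competitive} provides a Poisson online policy with $\ALG\geq c\,\OPT$. \Cref{lem:uniform-position-transfer} converts it into a policy for exactly $n$ arrivals, and combining \eqref{eq:uniform-position-transfer} with \eqref{eq:poissonized-benchmark} gives
\begin{equation*}
\ALG_n\geq c\,\OPT-p_n\OPT_n\geq c(1-p_n)\OPT_n-p_n\OPT_n=\bigl(c-(1+c)p_n\bigr)\OPT_n .
\end{equation*}
The coefficient $1+c$ is smaller than the claimed $2-c$, because $c>1/2$. Hence this already yields $\mathsf{CR}_n\geq c-(2-c)p_n$. If one prefers to match the stated constant exactly, it also follows from the cruder estimate $\OPT\geq\OPT_n-p_n\OPT_n$, since the weaker bound is implied by the stronger one. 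Taking the infimum over instances with $\OPT_n>0$ gives the first inequality.

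For the second inequality, we need $p_n\leq 1/\sqrt{2\pi n}$. This is Robbins' form of Stirling's bound, $n!\geq\sqrt{2\pi n}\,(n/e)^n$, which gives $e^{-n}n^n/n!\leq(2\pi n)^{-1/2}$ directly.

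For the limit, the lower bound tends to $c$. The matching upper bound $\mathsf{CR}_n\leq c+o(1)$ should come from the hard instance of \cite{ChierichettiEtAl2025}, which is stated for i.i.d.\ arrivals and gives $\limsup_n\mathsf{CR}_n\leq c$. The only delicate point is confirming that their bound is asymptotic in $n$ for fixed-count arrivals. If instead their construction is phrased via large instances with some $n$, I would note that one may take $n$ proportional to $m$, or argue via a Poisson-to-fixed-count comparison in the reverse direction, using concentration of $K$ around $n$.
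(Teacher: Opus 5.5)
Your argument is the paper's own proof. You combine $\ALG\geq c\,\OPT$ from \Cref{cor:poisson-competitive} with both inequalities of \Cref{lem:uniform-position-transfer} to get $\ALG_n\geq\bigl(c-(1+c)p_n\bigr)\OPT_n$. You then apply Stirling's bound, and cite the fixed-count hard instances of \cite{ChierichettiEtAl2025} for the limit, exactly as the paper does.

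One correction to your commentary. Since $e/e^e=1-c$, the stated coefficient is $2-e/e^e=1+c$, so your display already gives exactly the claimed bound and nothing further is needed. Your remark that $1+c$ is smaller than $2-c$ because $c>1/2$ is backwards: $c>1/2$ gives $1+c>2-c$. Had the target really been $c-(2-c)p_n$, the deduction you introduce with \emph{hence} would therefore be invalid.

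Delete that sentence. Also delete the one about a \emph{cruder estimate}, since $\OPT\geq\OPT_n-p_n\OPT_n$ is just \eqref{eq:poissonized-benchmark} restated.
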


\begin{proof}
Fix an instance with exactly $n$ i.i.d.~arrivals and consider its Poissonization with total rate $n$. By \Cref{cor:poisson-competitive}, there is a Poisson policy satisfying $\ALG\geq(1-e/e^e)\OPT$. Applying both inequalities of \Cref{lem:uniform-position-transfer}, the corresponding fixed-count policy satisfies
\allowdisplaybreaks
\begin{align*}
\ALG_n
&\geq\ALG-e^{-n}\frac{n^n}{n!}\OPT_n\\
&\geq\left(1-\frac{e}{e^e}\right)\OPT-e^{-n}\frac{n^n}{n!}\OPT_n\\
&\geq\left[\left(1-\frac{e}{e^e}\right)\left(1-e^{-n}\frac{n^n}{n!}\right)-e^{-n}\frac{n^n}{n!}\right]\OPT_n\\
&=\left[1-\frac{e}{e^e}-\left(2-\frac{e}{e^e}\right)e^{-n}\frac{n^n}{n!}\right]\OPT_n.
\end{align*}
Since this guarantee holds for every instance, it proves the first bound on $\mathsf{CR}_n$. Stirling's inequality $n!\geq\sqrt{2\pi n}(n/e)^n$ gives $e^{-n}n^n/n!\leq1/\sqrt{2\pi n}$, and hence
\begin{equation*}
\mathsf{CR}_n\geq1-\frac{e}{e^e}-\frac{2-e/e^e}{\sqrt{2\pi n}}.
\end{equation*}
Letting $n\to\infty$ yields
\begin{equation*}
\liminf_{n\to\infty}\mathsf{CR}_n\geq1-\frac{e}{e^e}.
\end{equation*}
Chierichetti, Giacchini, Panconesi, and  Vattani~\cite{ChierichettiEtAl2025} construct instances with $n$ i.i.d.~arrivals from a known distribution on which every online policy satisfies
\begin{equation*}
\frac{\ALG_n}{\OPT_n}\leq1-\frac{e}{e^e}+o(1).
\end{equation*}
This gives the matching upper bound on $\limsup_{n\to\infty}\mathsf{CR}_n$ and proves the stated limit.
\end{proof}

We can at the same time finish the proof of the main theorem. 

\begin{proof}[Proof of \Cref{thm:CR-poisson}]
The same transfer establishes sharpness in the known-rate Poisson model. For any online policy on the Poisson instances corresponding to the construction of Chierichetti, Giacchini, Panconesi, and  Vattani~\cite{ChierichettiEtAl2025}, \Cref{lem:uniform-position-transfer} gives
\begin{equation*}
\frac{\ALG}{\OPT}\leq\frac{\ALG_n/\OPT_n+e^{-n}n^n/n!}{1-e^{-n}n^n/n!}\leq1-\frac{e}{e^e}+o(1).
\end{equation*}
Together with \Cref{cor:poisson-competitive}, this shows that the optimal universal competitive ratio in the known-rate Poisson model is $1-e/e^e$. 
\end{proof}
Remark that to prove the tight competitive ratio for Poisson arrivals, the previously known transfer results were sufficient. Here the main advantage is keeping the improved $n^{-1/2}$ loss for all policies.

\section{Not Knowing the Arrival Law is Strictly Harder}
\label{sec:unknown-law}

We now show that knowing the arrival law strictly improves the optimal competitive ratio in the Poisson model. We construct a prior over arrival laws for which the offline optimum matches almost every vertex, but an online policy cannot identify a small class of vertices early enough to give them priority.

\begin{theorem}
\label{thm:unknown-poisson}
In the Poisson arrival model, there exists a constant $\varepsilon>0$ such that
\begin{equation*}
\mathsf{CR}^{\text{unknown}}
\coloneqq\sup_{\ALG}\inf_{\boldsymbol{\lambda}\geq0}\frac{\ALG}{\OPT}
\leq1-\frac{e}{e^e}-\varepsilon<1-\frac{e}{e^e}.
\end{equation*}
\end{theorem}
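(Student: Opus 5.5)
We will prove the bound by Yao's principle: construct a prior over arrival laws, show that every online policy (even one that knows the prior) obtains at most $(1-e/e^e-\varepsilon)$ times the expected offline optimum averaged over the prior, and deduce that for every policy some law in the support is bad. To pass from a ratio of averages to a worst-case ratio, we will take the prior supported on laws that are symmetric copies of one another under relabeling of vertices, so all of them have the same $\OPT$. We start from the near-tight instance family behind \Cref{subsec:exchangeable-warmup}: $m$ offline vertices, total rate $m$, and degree generating function $\Psi$. Here $\OPT\approx m$ and greedy attains $x(1)=1-e^{1-e}$. This instance is exchangeable, so knowing the law brings no gain, and every policy's performance is controlled by the ODE $x'=1-\Psi(x)$ together with the $x'\le x+1$ structure of \Cref{prop:online-offline-comparison}.

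\textbf{The hard family.} We hide a random subset $H$ of offline vertices of size $\eta m$, chosen uniformly. We then modify the arrival law so that vertices in $H$ have a slightly different neighborhood structure. For instance, some small rate of arrivals has all neighbors in $H$, or vertices in $H$ receive lower-degree arrivals. The perturbation is chosen so that $\OPT$ stays $m-o(m)$ (equivalently, so that a Bordenave--Lelarge--Salez type criterion $\phi\le\Psi$ still holds for the multi-type instance). It must also force any policy achieving near $1-e^{1-e}$ to prioritize $H$ from the very start. The benchmark is the known-law guarantee of \Cref{cor:poisson-competitive}. With knowledge of $H$, the posterior policy attains $1-e^{1-e}$. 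The instance must be designed so that, with $H$ unknown, the policy incurs a constant-order loss during an initial time window $[0,\tau]$.

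\textbf{Information bound.} We show that during $[0,\tau]$ the observed history reveals little about $H$. Each vertex appears in $O(\tau)$ arrivals in expectation, so the posterior on whether $i\in H$ moves only by $O(\tau)$ in total variation, via a per-vertex likelihood-ratio/KL bound between Poisson processes. Consequently, the allocation rates in that window are nearly independent of $H$. Any allocation independent of $H$ loses at least $c\,\tau$ in the matched fraction relative to the $H$-aware ideal trajectory. Our plan is to quantify this through a linearization of the fluid ODE around the extremal trajectory $x(t)=1-e^{1-e^t}$. In the extremal instance every constraint is tight, since $\phi=\Psi$ touches the boundary, so a mismatch of order $\tau$ early on cannot be recovered later and propagates to a loss $\varepsilon=\Theta(\eta\tau)$ at time $1$.

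\textbf{Main obstacle.} The key difficulty is the recoverability step: showing that an early misallocation is not compensated later, against arbitrary adaptive policies rather than greedy alone. We would first try to reuse the potential $f_i$ and Dynkin argument of \Cref{prop:online-offline-comparison} in reverse, as an upper bound. The idea is to show that on the extremal instance the expected drift of the total potential is at most $-c$ times the mass of service directed to non-$H$ vertices at times when $H$-vertices needed it. If this fails, the fallback is a concentration argument with $m\to\infty$, which reduces everything to a deterministic fluid control problem whose value we bound by Pontryagin-type comparison. Balancing $\eta$ and $\tau$ against the $o(1)$ errors from concentration and from the $\OPT$ perturbation then yields a fixed $\varepsilon>0$.
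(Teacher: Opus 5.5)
\textbf{There is a genuine gap.} Your Yao-type framing matches how the paper passes from an average ratio to a worst-case one: a uniform prior over a hidden subset, with all laws related by vertex relabelings so that $\OPT$ is constant. But the two steps that actually carry the theorem are left open, and the tools you propose for them would not close the gap.

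The first is an upper bound on \emph{every} adaptive policy, which you list as the main obstacle. Reversing the potential $f_i$ and the Dynkin argument of \Cref{prop:online-offline-comparison} cannot give it. The inequality $x_i'\leq x_i+1$ comes from $s_i(t)=\E[Y_i(Z)\mid\mathcal F_t]$ and holds only for the posterior policy, and the comparison ODE is not an upper bound for other policies.

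The second problem is the benchmark. Comparing with ``the $H$-aware ideal trajectory'' does not give what the theorem needs. A loss of $c\,\eta\tau$ relative to the $H$-aware value implies a ratio below $1-e/e^e$ only if that value is within $o(\eta\tau)$ of $1-e/e^e$ on your perturbed instance, and nothing in your plan guarantees this.

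Your information bound also fails as stated for the $\Psi$ degree law. There $\E[D]=\infty$, and a fixed vertex lies in about $1+\log m$ arrivals per unit time. Moreover, a bound confined to $[0,\tau]$ does not exclude later learning and recovery.

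\textbf{The missing idea} is a counting inequality valid for every policy: an arrival is lost whenever all its neighbors are already matched. The paper builds the instance so that this bound is essentially the only constraint, and so that the hidden information is exactly what a policy would need to escape it.
\begin{itemize}
\item \emph{Construction.} It hides a \emph{large} set $H$ with $|H|=s^2m$. Each arrival draws $D$ with $\Pp(D=d)=1/[d(d-1)]$. With probability $s^D$ it samples its neighbors from $H$; otherwise its neighborhood is $[m]$.
\item \emph{Differential inequality.} With $k$ matched vertices in $H$, a fresh arrival is a restricted arrival whose neighbors are all matched with probability exactly $\Psi(k/(sm))$. Jensen then gives $A'(t)\leq1-\Psi\bigl((A(t)-I(t))/s\bigr)$ for any policy, where $I(t)$ is the matched mass outside $H$.
\item \emph{Comparison.} An integrating-factor comparison with $s\,x(t/s)$ yields $A(1)\leq s\,x(1/s)+\frac{\Psi'(x(1/s))}{s}\int_0^1I(t)\,dt$. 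Strict concavity of $x$ gives $s\,x(1/s)\leq x(1)-(1-s)\bigl(x(1)-x'(1)\bigr)$, a loss of first order in $1-s$.
\item \emph{Learning bound over the whole horizon.} Restricted arrivals reveal members of $H$, and the posterior probability that an unseen vertex lies outside $H$ is exactly $(1-s^2)m/U(t)$. Concentration of $U(t)/m$ around $1-s^2+s^2(1-s)^{t/s}$ then gives $\int_0^1I(t)\,dt=o(1-s)$.
\item \emph{Offline value.} $\OPT=m-o(m)$ follows by dominating the restricted arrivals with the instance of \cite{ChierichettiEtAl2025} on $H$ and filling $[m]\setminus H$ with universal arrivals. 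No multi-type Bordenave--Lelarge--Salez criterion is needed.
\end{itemize}
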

The infimum ranges over finite Poisson instances with $\OPT>0$.
The upper bound holds even when the policy knows the total rate and the prior over arrival laws.

Recall the generating function of the worst-case exchangeable degree distribution of Chierichetti, Giacchini, Panconesi, and  Vattani~\cite{ChierichettiEtAl2025}:
\begin{equation*}
\Psi(z)\coloneqq z+(1-z)\log(1-z)=\sum_{d=2}^{\infty}\frac{z^d}{d(d-1)},
\qquad 0\leq z<1.
\end{equation*}

\paragraph{Construction of the hard instance.}
By the superposition and thinning properties of Poisson processes, specifying the neighborhood rates $\lambda_S$ is equivalent to choosing a total arrival rate and independently sampling a neighborhood at each arrival. Fix a rational $s\in(0,1)$ and choose $m$ such that $s^2m$ is an integer. For each set $H\subseteq[m]$ of size $s^2m$, we construct a fixed Poisson instance as follows. We choose the total arrival rate to be $m$, so that the expected number of arrivals over $[0,1]$ equals the number of offline vertices. Each arrival independently draws an integer $D$ with $\Pp(D=d)=1/[d(d-1)]$ for $d\geq2$. With probability $s^D$, the arrival draws $D$ vertices independently and uniformly from $H$, with replacement, and its neighborhood consists of the distinct vertices sampled. Otherwise, its neighborhood is $[m]$. We call these arrivals restricted and universal, respectively. For each fixed $H$, arrivals are independently classified as restricted with probability $\sum_{d=2}^{\infty}s^d/[d(d-1)]=\Psi(s)$ and universal otherwise. By Poisson thinning, the restricted and universal arrivals therefore form independent Poisson processes with rates $m\Psi(s)$ and $m(1-\Psi(s))$.

We place a uniform prior on this family by choosing $H$ uniformly once, before any arrivals, and using the corresponding instance throughout. The policy knows the family and the prior, but not the chosen $H$. Conditional on $H$, arrival neighborhoods are independent and identically distributed. Averaging over $H$ makes them dependent, since past neighborhoods provide information about $H$ and hence about future neighborhoods. \medskip 

The policy can recognize restricted arrivals because their neighborhoods are proper subsets of $[m]$. Each such arrival reveals that its neighbors belong to $H$, but leaves the membership of all other vertices unresolved. Universal arrivals provide no information about $H$. If $H$ were known, the policy could prioritize vertices in $[m]\setminus H$ on universal arrivals, since these vertices cannot be matched by restricted arrivals. We show that a policy knowing only the prior cannot identify these vertices quickly enough.
Throughout the argument, we will take $m\to\infty$ with $s$ fixed. We will then choose one rational $s$ sufficiently close to one. \medskip

We first bound the offline optimum for each fixed $H$. We recall that $\OPT$ is the expected offline performance averaged over  $H$. 
\begin{lemma}
\label{lem:unknown-offline}
For fixed $s$, the expected offline optimum is identical for every choice of $H$ and satisfies
\begin{equation*}
\lim_{m \to \infty} \frac{\OPT}{m} =1.
\end{equation*}
\end{lemma}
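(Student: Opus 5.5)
The plan has two parts: show that the offline optimum does not depend on $H$, and then show that almost every vertex can be matched offline.

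Independence of $H$ follows from symmetry. Any permutation $\pi$ of $[m]$ maps the instance for $H$ to the instance for $\pi(H)$, since $\pi$ preserves the sampling law of restricted neighborhoods and fixes the universal neighborhood $[m]$. The maximum matching size is invariant under relabeling offline vertices. Since permutations act transitively on $s^2m$-subsets, $\OPT$ is the same for every $H$. So it suffices to fix one $H$ and show $\OPT\geq m-o(m)$. The bound $\OPT\leq m$ is trivial.

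For the lower bound I would use \Cref{thm:offline-duality} together with \Cref{lem:two-deficiencies}. Then $m-\OPT=\E[\max_A\dP(A)]$, and I would bound this expected maximum deficiency by an explicit matching argument run in two stages.

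\textbf{Stage 1 (restricted arrivals and $H$).} Restricted arrivals form an exchangeable instance on the $|H|=s^2m$ vertices of $H$. Their total rate is $m\Psi(s)$. Each arrival samples $D$ neighbors uniformly from $H$, where $D$ has law $\Pp(D=d)\propto s^d/[d(d-1)]$, so its generating function is $\phi(u)=\Psi(su)/\Psi(s)$. In normalized time, the arrival rate per vertex of $H$ is $\Psi(s)/s^2$. The Bordenave–Lelarge–Salez formula quoted in \Cref{subsec:exchangeable-warmup} (specialized in \Cref{app:exchangeable-limit}) applies after rescaling the rate. It gives the fraction of $H$ left unmatched by restricted arrivals as $\max_u\{c\,\phi(u)-\Psi(u)\}$ with $c=\Psi(s)/s^2$, i.e.\ $\max_u\{\Psi(su)/s^2-\Psi(u)\}$. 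I expect this quantity to be strictly positive, roughly of order $1-s$. Call the unmatched fraction $\eta(s)$, so about $\eta s^2 m$ vertices of $H$ are free. The restricted arrivals themselves are $m\Psi(s)\leq s^2m$ in number, since $\Psi(s)\le s^2$.

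\textbf{Stage 2 (universal arrivals).} Universal arrivals number $\Pois(m(1-\Psi(s)))$ and can be matched to any remaining vertex. They are needed for the $m-s^2m$ vertices outside $H$ plus the roughly $\eta s^2 m$ unmatched vertices of $H$. The count is right because the total count of all arrivals is $\Pois(m)$, so the total deficiency equals $m$ minus the number of matched restricted arrivals, up to $O(\sqrt m)$ fluctuations. Concretely, the greedy choice of matching every restricted arrival possible within $H$ and then filling all remaining vertices with universal arrivals leaves $\max\{0,\,m-M_R-N_U\}$ vertices unmatched. Here $M_R$ is the size of the Stage 1 matching and $N_U$ the number of universal arrivals. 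This is $o(m)$ in expectation exactly when $M_R\geq m\Psi(s)-o(m)$, i.e.\ when essentially every restricted arrival gets matched.

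So the crux, and the main obstacle, is the claim that almost all restricted arrivals can be matched into $H$. This is a left-perfect matching statement: the matching is near-perfect on the arrival side, not on $H$. I would prove it through the same limit formula applied to the bipartite graph viewed from the arrival side. The cleaner route is to compute the limiting matching size $M_R/m$ from the formula and verify the identity $M_R/m=\Psi(s)-s^2\,[\ldots]$ analytically. Equivalently, I would check that $s^2(1-\eta(s))=\Psi(s)$, i.e.\ that $\max_u\{\Psi(su)/s^2-\Psi(u)\}=1-\Psi(s)/s^2$. This would follow if the maximum is attained at $u=1$, where the expression equals $\Psi(s)/s^2-1$. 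Something seems off with that sign, so I would first recheck the sign convention in the formula.

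Whichever way the signs fall, verifying this extremal identity for $\Psi$ is the step I expect to be delicate. The concentration needed to pass from limiting fractions to expectations is routine, since the deficiency is bounded by $m$ and Poisson counts concentrate at scale $\sqrt m$.
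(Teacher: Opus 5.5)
Your reduction is sound. The permutation argument gives independence of $H$, and the two-stage matching correctly reduces the lemma to showing $\E[M_R]\geq m\Psi(s)-o(m)$, i.e.\ that almost every restricted arrival can be matched inside $H$. Invoking \Cref{thm:offline-duality} is unnecessary; exhibiting the matching suffices. But this claim is the entire content of the lemma, you leave it unproved, and the tool you propose for it is misapplied.

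The limit in \Cref{app:exchangeable-limit} is for a total arrival rate equal to the number of offline vertices over the fixed horizon $[0,1]$. You cannot rescale the rate by multiplying $\phi$ by $c=\Psi(s)/s^2$. In fact your expression $\max_u\{\Psi(su)/s^2-\Psi(u)\}$ equals $0$, attained at $u=0$. It would therefore assert that all of $H$ is matched using only about $m\Psi(s)<s^2m$ arrivals, which is impossible. This, not a sign convention, is why your check at $u=1$ breaks down.

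The fix is to pad with degree-zero arrivals. View the restricted stream as having rate $s^2m$ on the $s^2m$ vertices of $H$, with generating function $\tilde\phi(u)=1-c+\Psi(su)/s^2$. The formula then gives unmatched fraction $\max_{u\in[0,1]}\{1-c+\Psi(su)/s^2-\Psi(u)\}$. The idea you are missing is the inequality $\Psi(su)\leq s^2\Psi(u)$ on $[0,1]$, which is immediate from $\Psi(z)=\sum_{d\geq2}z^d/[d(d-1)]$ and $s^d\leq s^2$ for $d\geq2$. So the maximum is $1-c$, attained at $u=0$. The matched fraction of $H$ therefore tends to $c$, i.e.\ $\E[M_R]/m\to\Psi(s)$, and your Stage~2 then finishes the argument.

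The paper exploits the same inequality $s^D\leq s^2$ but needs no computation. It replaces the restriction probability $s^D$ by $s^2$, so the restricted arrivals become exactly the hard instance of \cite{ChierichettiEtAl2025} at rate one per vertex of $H$ (rate $s^2m$, generating function $\Psi$). Their offline optimum is $s^2m-o(m)$, and universal arrivals of rate $(1-s^2)m$ fill $[m]\setminus H$ up to $O(\sqrt m)$. A monotone coupling then turns each restricted arrival of the modified instance universal with probability $1-s^{D-2}$. This recovers the original law while only adding edges, so the original optimum is at least as large. Your direct route, once repaired, yields the exact limiting matched fraction inside $H$ for the original instance. However, it requires the degree-zero padding and the generating-function comparison, which the coupling sidesteps.
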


\begin{proof}
Fix $H$ and replace the restriction probability $s^D$ by $s^2$, which makes restricted arrivals with smaller neighborhoods more likely. The restricted arrivals then form the Poissonized instance of \cite[Definition~8]{ChierichettiEtAl2025} on $s^2m$ offline vertices, with total arrival rate $s^2m$. By \cite[Theorem~14]{ChierichettiEtAl2025} and the offline benchmark comparison in \Cref{lem:uniform-position-transfer}, their expected offline optimum is $s^2m-o(m)$.

The universal arrivals have rate $(1-s^2)m$ and can be matched to distinct vertices of $[m]\setminus H$ until either these vertices or the universal arrivals are exhausted. Their expected contribution is therefore $(1-s^2)m-O(\sqrt m)$. Combining this matching with the matching within $H$ gives expected size $m-o(m)$.

Since $D\geq2$ implies $s^D\leq s^2$, we can couple by stochastic dominance the modified instance with the original one by replacing some restricted neighborhoods by $[m]$. This only adds edges, so the original instance also has expected offline optimum $m-o(m)$. Any two choices of $H$ are related by a permutation of the offline vertices, so their expected offline optima are identical. Averaging over the uniform prior therefore gives $\OPT=m-o(m)$.
\end{proof}

The next lemma bounds the time integral of the expected fraction of vertices that an online policy can match outside $H$. This integral will control the error term in the differential comparison below.

\begin{lemma}
\label{lem:unknown-learning}
For an online policy, let $I(t)$ denote the expected number of matched vertices in $[m]\setminus H$ by time $t$, divided by $m$. Uniformly over online policies, as $s \to 1$,
\begin{equation*}
\limsup_{m\to\infty}\int_0^1 I(t)\,dt=o(1-s).
\end{equation*}
\end{lemma}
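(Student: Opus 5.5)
The plan is to show that the only way to match a vertex of $[m]\setminus H$ is through a universal arrival, and that at such an arrival the policy cannot tell these vertices apart from the vertices of $H$ that have not yet been exposed by restricted arrivals. Say a vertex is \emph{revealed} by time $t$ if it lies in the neighborhood of some restricted arrival before $t$; revealed vertices belong to $H$. Restricted neighborhoods are subsets of $H$, so a vertex outside $H$ can only be matched by a universal arrival. Universal arrivals have neighborhood $[m]$ and carry no information about $H$. The uniform prior on $H$ is invariant under permutations fixing the revealed set $R_t$. Hence, conditional on $\mathcal F_t$, the unrevealed vertices are exchangeable with respect to membership in $H$, whatever the policy has done: matching decisions depend only on the history, and matching a vertex reveals nothing about its membership. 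Conditional on $\mathcal F_t$, $H\setminus R_t$ is then a uniformly random subset of size $s^2m-|R_t|$ of the $m-|R_t|$ unrevealed vertices. So whenever a universal arrival is matched to an unrevealed vertex, it hits $[m]\setminus H$ with conditional probability
\begin{equation*}
\pi_t\coloneqq\frac{(1-s^2)m}{m-|R_t|}.
\end{equation*}

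The next step is to estimate $|R_t|$. Each restricted arrival makes $D$ uniform samples from $H$. The total sampling rate into $H$ is $m\sum_{d\geq2}d\,s^d/[d(d-1)]=m\,s\,L_s$, with $L_s\coloneqq\log(1/(1-s))$. Each vertex of $H$ is therefore sampled at rate $L_s/s$, independently across vertices after Poissonization of the samples. It is unrevealed at time $t$ with probability $(1-s)^{t/s}$. The count $|H\setminus R_t|$ is a sum of independent indicators, so it concentrates around $s^2m(1-s)^{t/s}$ with $O(\sqrt m)$ fluctuations, uniformly in $t\in[0,1]$ (use a union bound over a grid of times and monotonicity in $t$). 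Since the prior and the restricted process do not depend on the policy, this bound holds uniformly over policies. In the limit $m\to\infty$ this gives
\begin{equation*}
\pi_t\leq\frac{1-s^2}{1-s^2+s^2(1-s)^{t/s}}+o(1)\leq C\min\{1,(1-s)^{1-t}\}+o(1)
\end{equation*}
for a constant $C$ and $s$ near one.

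Now combine. The number of universal arrivals by time $t$ has mean $m(1-\Psi(s))t$, and $1-\Psi(s)=(1-s)(1+L_s)$. Each universal arrival matches at most one vertex, and it lands outside $H$ with conditional probability at most $\pi_u$. Taking expectations via the tower property, and using the trivial bound $I(t)\leq1-s^2$, we get
\begin{equation*}
I(t)\leq\min\Bigl\{1-s^2,\;(1-s)(1+L_s)\int_0^t C(1-s)^{1-u}\,du\Bigr\}+o(1).
\end{equation*}
Here $\int_0^t(1-s)^{1-u}du\leq(1-s)^{1-t}/L_s$, so the second term is $O\bigl((1-s)(1-s)^{1-t}\bigr)$. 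Hence $I(t)\leq C'(1-s)\min\{1,(1-s)^{1-t}\}$ in the limit. Integrating over $t\in[0,1]$ gives
\begin{equation*}
\limsup_{m\to\infty}\int_0^1I(t)\,dt\leq C'(1-s)\int_0^1(1-s)^{1-t}\,dt\leq\frac{C'(1-s)}{L_s}=o(1-s).
\end{equation*}
The argument is uniform over policies because only the exchangeability bound was used.

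The main obstacle is making the exchangeability step rigorous for adaptive, randomized policies. One has to argue carefully that the conditional law of $H$ given the full history, including the policy's past matches, is uniform over subsets containing $R_t$ of the right size. A subtle point is that later restricted arrivals may hit already matched vertices, which is harmless but must be accounted for. One must also convert the per-arrival conditional probability into a bound on $I(t)$ via a compensator (martingale) argument for the universal Poisson stream. The concentration of $|R_t|$ and the $o(1)$ errors as $m\to\infty$ are routine by comparison.
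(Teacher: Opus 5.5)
\textbf{Overall.} Your argument is correct in substance and has the same skeleton as the paper's:
\begin{itemize}
\item given the history, $H$ is uniform over supersets of the seen set, so each unseen vertex lies outside $H$ with probability $\pi_t=(1-s^2)m/U(t)$, where $U(t)=m-|R_t|$;
\item only universal arrivals can match vertices outside $H$;
\item $U(t)/m$ concentrates around $1-s^2+s^2(1-s)^{t/s}$.
\end{itemize}

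\textbf{Where the routes differ.} The difference is in how you convert the posterior into a bound on $I(t)$.

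The paper evaluates the posterior at the \emph{current} time $t$. The set of matched unseen vertices is $\mathcal F_t$-measurable and has size at most $M(t)$. Hence $I(t)\le(1-s^2)\E[M(t)/U(t)]=(1-\Psi(s))\,t\,\E[(1-s^2)m/U(t)]$, using only independence of $M(t)$ and $U(t)$. This sidesteps two of the issues you flag as the main obstacle: no compensator or martingale argument is needed, and vertices later hit by restricted arrivals simply become seen.

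The price is that $t\,\pi_t$ is weaker than your $\int_0^t\pi_u\,du$, because $\pi_u$ increases in $u$. Near $t=1$ the paper's bound is of order $(1-s)\log\frac{1}{1-s}$, which exceeds the trivial bound $1-s^2$. So the paper must truncate at $1-s^2$ and conclude by dominated convergence, which gives only a qualitative $o(1-s)$.

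Your per-match accounting uses Campbell's formula for the universal stream, which is independent of $R_u$. It needs slightly more justification, but it yields the explicit rate $O\bigl((1-s)/\log\frac{1}{1-s}\bigr)$ and makes the truncation unnecessary.

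\textbf{Concentration step.} Part of this step is misstated, though easily repaired.
\begin{itemize}
\item The unseen indicators are not independent across vertices. A single restricted arrival samples $D$ vertices at once, so the indicators are in fact positively correlated, and $|H\setminus R_t|$ is not a sum of independent indicators.
\item The per-vertex unseen probability is $e^{-rt}$ with $r=m[\Psi(s)-\Psi(s-1/(sm))]\le L_s/s$. It equals $(1-s)^{t/s}$ only in the limit $m\to\infty$; the inequality goes in your favor.
\end{itemize}
The fix is the paper's second-moment bound. Restricted arrivals containing two given vertices have rate $O_s(1/m)$, so pairwise covariances are $O_s(1/m)$ and $\operatorname{Var}(U(t))=O_s(m)$. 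Cauchy--Schwarz with $U(t)\ge(1-s^2)m$ (or Chebyshev plus $\pi_u\le 1$) then gives $\E[m/U(t)]\to 1/(1-s^2+s^2(1-s)^{t/s})$ pointwise in $t$. Together with reverse Fatou, this is all you need; the uniformity in $t$ via a grid and union bound is unnecessary.
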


\begin{proof}
Let $U(t)$ denote the number of vertices that have not appeared in any restricted arrival by time $t$. \medskip

\textbf{Law of unseen vertices outside of $H$.} Every neighborhood of size strictly smaller than $m$ identifies a restricted arrival, and all its vertices belong to $H$. Thus, conditional on the history up to time $t$, $H$ must contain the $m-U(t)$ vertices already seen in restricted arrivals. Its remaining $s^2m-(m-U(t))$ vertices must lie among the $U(t)$ unseen vertices.

Since the prior on $H$ is uniform and the restricted arrival law is exchangeable, the remaining vertices of $H$ form a uniformly chosen subset of the unseen vertices, conditional on the observed history. Consequently, each unseen vertex lies outside $H$ with conditional probability
\begin{equation*}
\frac{U(t)-\bigl(s^2m-(m-U(t))\bigr)}{U(t)}=\frac{(1-s^2)m}{U(t)}.
\end{equation*} 
\medskip 

\textbf{Relating $U(t)$ and $I(t)$.} Universal arrivals form a Poisson process of rate $m(1-\Psi(s))$, independent of $H$ and the restricted arrivals, and hence of $U(t)$. Let $M(t)$ denote the number of universal arrivals by time $t$.

Every matched unseen vertex must have been matched through a universal arrival, so there are at most $M(t)$ such vertices. Conditional on the history, each lies outside $H$ with probability $(1-s^2)m/U(t)$ as proved above. By linearity of conditional expectation, taking expectations over the history and dividing by $m$ gives
\begin{equation*}
I(t)\leq(1-s^2)\E\left[\frac{M(t)}{U(t)}\right]=(1-s^2)(1-\Psi(s))t\,\E\left[\frac{m}{U(t)}\right],
\end{equation*}
where the equality uses independence of $M(t)$ and $U(t)$ and $\E[M(t)]=m(1-\Psi(s))t$. Since there are only $(1-s^2)m$ vertices outside $H$, we also have $I(t)\leq1-s^2$. Combining the two bounds yields
\begin{equation}
\frac{I(t)}{1-s^2}\leq\min\left\{1,(1-\Psi(s))t\,\E\left[\frac{m}{U(t)}\right]\right\}.
\label{eq:unknown-unseen-bound}
\end{equation}
\medskip

\textbf{Computing $\E [U(t)]$.}
We now compute the mean of $U$. We will see later on that because $U$ concentrates, it is sufficient.  
Fix $H$ and a vertex $v\in H$. An arrival draws $D=d$ with probability $1/[d(d-1)]$ and is then restricted with probability $s^d$. Conditional on these events, each of its $d$ independent samples avoids $v$ with probability $(s^2m -1)/(s^2m)$. Thus
\begin{equation*}
\Pp(D=d,\ \text{arrival is restricted and contains }v\mid H)=\frac{1}{d(d-1)}\,s^d\left[1-\left(1-\frac{1}{s^2m}\right)^d\right].
\end{equation*}
Summing over $d$ gives the probability that an arrival is restricted and contains $v$. By Poisson thinning, such arrivals form a Poisson process whose rate is $m$ times this probability,
\begin{align*}
m\sum_{d=2}^{\infty}\frac{s^d}{d(d-1)}\left[1-\left(1-\frac{1}{s^2m}\right)^d\right]=m\left[\Psi(s)-\Psi\left(s-\frac{1}{sm}\right)\right]\xrightarrow[m \to \infty]{}\frac{\Psi'(s)}{s}=-\frac{\log(1-s)}{s}.
\end{align*}
The vertex $v$ remains unseen at time $t$ exactly when no restricted arrival containing it has occurred. A Poisson process of rate $r$ has no arrivals by time $t$ with probability $e^{-rt}$. Hence
\begin{align*}
\Pp(v\text{ is unseen at time }t\mid H)\xrightarrow[m \to \infty]{}\exp\left(\frac{t}{s}\log(1-s)\right)=(1-s)^{t/s}.
\end{align*}
All $(1-s^2)m$ vertices outside $H$ remain unseen, since restricted arrivals only sample from $H$. Each of the $s^2m$ vertices inside $H$ remains unseen with the same probability, which converges to $(1-s)^{t/s}$. Thus, for any $v\in H$, linearity of expectation gives
\begin{equation*}
\frac{\E[U(t)\mid H]}{m}=1-s^2+s^2\,\Pp(v\text{ is unseen at time }t\mid H)\xrightarrow[m\to\infty]{}1-s^2+s^2(1-s)^{t/s}.
\end{equation*}
The conditional mean is identical for every $H$, so the same limit holds for $\E[U(t)]/m$. \medskip

\textbf{Variance of $U(t)$.} 
We will show that $\operatorname{Var}(U(t))=O_s(m)$. Fix $H$ and two distinct vertices $v,w\in H$. Conditional on an arrival being restricted and having $D=d$, for both $v$ and $w$ to appear, one of the $d$ draws must return $v$ and a different draw must return $w$. There are $d(d-1)$ choices of these two draws. Since sampling is independent and with replacement, each choice has probability $1/(s^2m)^2$. This gives by union bound
\begin{equation*}
\Pp(v,w\text{ both appear}\mid H,D=d,\text{restricted})\leq\frac{d(d-1)}{(s^2m)^2}.
\end{equation*}
The rate $q$ of restricted arrivals containing both vertices therefore satisfies
\begin{equation*}
q\leq m\sum_{d=2}^{\infty}\frac{s^d}{d(d-1)}\frac{d(d-1)}{(s^2m)^2}
=\frac{1}{s^4m}\sum_{d=2}^{\infty}s^d
=\frac{1}{s^2m(1-s)}
=O_s(1/m).
\end{equation*}

Let $r$ be the rate of restricted arrivals containing any given vertex of $H$. By symmetry, this rate is the same for every vertex and depends only on $m$ and $s$. Restricted arrivals containing at least one of $v,w$ have rate $2r-q$. Since both vertices remain unseen exactly when no such arrival occurs,
\begin{equation*}
\Pp(v\text{ unseen at }t\mid H)=e^{-rt},
\qquad
\Pp(v,w\text{ both unseen at }t\mid H)=e^{-(2r-q)t}.
\end{equation*}
Hence,
\begin{align*}
\operatorname{Cov}\bigl(\mathbf{1}_{\{v\text{ unseen at }t\}},\mathbf{1}_{\{w\text{ unseen at }t\}}\mid H\bigr)
=e^{-(2r-q)t}-e^{-2rt}
=e^{-2rt}(e^{qt}-1)
\leq e^q-1
=O_s(1/m),
\end{align*}
uniformly for $t\in[0,1]$. Every vertex outside $H$ remains unseen, so $U(t)=(1-s^2)m+\sum_{v\in H}\mathbf{1}_{\{v\text{ unseen at }t\}}$.
Summing the individual variances and the pairwise covariances therefore gives
\begin{align*}
\operatorname{Var}(U(t)\mid H)
&=\sum_{v\in H}\operatorname{Var}\bigl(\mathbf{1}_{\{v\text{ unseen at }t\}}\mid H\bigr)\quad+\sum_{\substack{v,w\in H\\v\neq w}}\operatorname{Cov}\bigl(\mathbf{1}_{\{v\text{ unseen at }t\}},\mathbf{1}_{\{w\text{ unseen at }t\}}\mid H\bigr)\\
&\leq \vert H \vert +O_s\big(\frac{|H|^2}{m}\big)
=O_s(m).
\end{align*}
Moreover, $\E[U(t)\mid H]$ does not depend on $H$ as seen above. Thus the conditional mean has zero variance over the prior, and the law of total variance yields
\begin{align*}
\operatorname{Var}(U(t))=\E[\operatorname{Var}(U(t)\mid H)]
+\operatorname{Var}(\E[U(t)\mid H])=\E[\operatorname{Var}(U(t)\mid H)]
=O_s(m).
\end{align*}

\textbf{Combining the bounds.} Since $U(t)\geq(1-s^2)m$ as we will never see vertices outside of $H$ in the restricted arrivals, Cauchy--Schwarz gives
\begin{align*}
\left|\E[U(t)]\E[\frac{1}{U(t)}]-1\right|
=\left|\E[\frac{\E[U(t)]-U(t)}{U(t)}]\right|
\leq\frac{\E[|U(t)-\E[U(t)]|]}{(1-s^2)m}
\leq\frac{\sqrt{\operatorname{Var}(U(t))}}{(1-s^2)m}
=O_s(\frac{1}{\sqrt{m}})=o(1).
\end{align*}
Thus $\E[1/U(t)]=(1+o(1))/\E[U(t)]$ as $m\to\infty$ with $s$ fixed.
Combining the concentration estimate with the expression for the mean gives
\begin{equation*}
\E\left[\frac{m}{U(t)}\right]\xrightarrow[m\to\infty]{}\frac{1}{1-s^2+s^2(1-s)^{t/s}}.
\end{equation*}
Since $I(t)\leq1-s^2$, we have $0\leq I(t)/(1-s)\leq1+s\leq2$. The reverse Fatou inequality therefore gives
\begin{align*}
\limsup_{m\to\infty}\frac{1}{1-s}\int_0^1 I(t)\,dt\leq\int_0^1\limsup_{m\to\infty}\frac{I(t)}{1-s}\,dt\leq(1+s)\int_0^1\min\left\{1,\frac{(1-\Psi(s))t}{1-s^2+s^2(1-s)^{t/s}}\right\}dt.
\end{align*}
For fixed $t<1$, the exponent $1-t/s$ converges to $1-t>0$ as $s\to1$. In particular, $1-t/s\geq(1-t)/2$ for $s$ sufficiently close to one. Using $1-\Psi(s)=(1-s)(1-\log(1-s))$ and dropping the nonnegative term $1-s^2$ from the denominator gives
\begin{align*}
0\leq\frac{(1-\Psi(s))t}{1-s^2+s^2(1-s)^{t/s}}\leq\frac{t}{s^2}(1-s)^{1-t/s}(1-\log(1-s))\leq\frac{t}{s^2}(1-s)^{(1-t)/2}(1-\log(1-s))
\xrightarrow[s\to1]{}0.
\end{align*}
The last limit follows because $(1-t)/2>0$ and any positive power of $1-s$ tends to zero faster than the logarithmic factor grows. 

The integrand, including the factor $1+s$, is bounded by two. Applying the reverse Fatou inequality again as $s\to1$ shows that the right-hand side has limit superior at most zero. By nonnegativity, it therefore converges to zero, proving the claim uniformly over policies.
\end{proof}

We can now finish the upper bound proof. 

\begin{proof}[Proof of \Cref{thm:unknown-poisson}]
Fix an online policy and let $A(t)$ denote the expected total number of matched vertices by time $t$, divided by $m$. For the analysis, condition on $H$ and the history up to time $t$, and suppose that $k$ vertices of $H$ are currently matched. \medskip

\textbf{A differential inequality for $A$.} The probability that the next arrival is restricted and has all its neighbors matched is
\begin{equation*}
\sum_{d=2}^{\infty}\frac{s^d}{d(d-1)}\left(\frac{k}{s^2m}\right)^d=\Psi\left(\frac{k}{sm}\right).
\end{equation*}
Such an arrival cannot be matched. Averaging over $H$ and the history, the expected number of matched vertices in $H$, divided by $m$, is $A(t)-I(t)$. These $k$ vertices remain matched, so restricted arrivals whose neighbors all lie among them remain unmatchable throughout $(t,t+h]$. Since arrivals occur at rate $m$, the conditional expected number of new matches in this interval is at most $mh(1-\Psi(k/(sm)))$. Averaging over $H$ and the history and dividing by $m$ gives
\begin{equation*}
A(t+h)-A(t)\leq h\,\E\left[1-\Psi\left(\frac{k}{sm}\right)\right].
\end{equation*}
Moreover, at most one match is made per arrival, so $0\leq A(t+h)-A(t)\leq h$. Thus $A$ is Lipschitz and differentiable almost everywhere. Dividing the preceding inequality by $h$, letting $h\downarrow0$, and then applying Jensen's inequality, using the convexity of $\Psi$, yields, almost everywhere,
\begin{align*}
A'(t)\leq\E\left[1-\Psi\left(\frac{k}{sm}\right)\right]\leq1-\Psi\left(\frac{\E[k]}{sm}\right)=1-\Psi\left(\frac{A(t)-I(t)}{s}\right).
\end{align*}

\textbf{Comparison with a reference trajectory.}
Recall the exchangeable trajectory, defined for $t\geq0$ by
\begin{equation*}
x(t)\coloneqq1-e^{1-e^t},
\qquad x'(t)=1-\Psi(x(t)),
\qquad x(0)=0.
\end{equation*}
The function $sx(t/s)$ solves $y'=1-\Psi(y/s)$ with initial condition $y(0)=0$, as
\begin{equation*}
\frac{d}{dt}\bigl(sx(t/s)\bigr)=x'(t/s)=1-\Psi(x(t/s)),
\end{equation*}
and $sx(0)=0$. We compare $A(t)$ with this trajectory and use \Cref{lem:unknown-learning} to control the contribution of $I(t)$. 

The convexity inequality for $\Psi$ at $x(t/s)$ gives
\begin{equation*}
\Psi\left(\frac{A(t)-I(t)}{s}\right)\geq\Psi(x(t/s))+\frac{\Psi'(x(t/s))}{s}\bigl(A(t)-sx(t/s)-I(t)\bigr).
\end{equation*}
Subtracting the derivative of $sx(t/s)$ from the bound on $A'(t)$ therefore yields
\begin{equation*}
\frac{d}{dt}\bigl(A(t)-sx(t/s)\bigr)= A'(t) -\big(1-\Psi(x(t/s))\big) \leq-\frac{\Psi'(x(t/s))}{s}\bigl(A(t)-sx(t/s)-I(t)\bigr).
\end{equation*}
Both trajectories start at zero. 
Multiplying by $\exp(\int_0^t\Psi'(x(u/s))/s\,du)$, applying the product rule, and using the previous inequality gives
\begin{align*}
&\frac{d}{dt}\left[\exp\left(\int_0^t\frac{\Psi'(x(u/s))}{s}\,du\right)\bigl(A(t)-sx(t/s)\bigr)\right]\\
&\quad=\exp\left(\int_0^t\frac{\Psi'(x(u/s))}{s}\,du\right)
\left[\frac{d}{dt}\bigl(A(t)-sx(t/s)\bigr)+\frac{\Psi'(x(t/s))}{s}\bigl(A(t)-sx(t/s)\bigr)\right]\\
&\quad\leq\exp\left(\int_0^t\frac{\Psi'(x(u/s))}{s}\,du\right)\frac{\Psi'(x(t/s))}{s}I(t).
\end{align*}
The two terms proportional to $A(t)-sx(t/s)$ cancel in the last step. Integrating from $0$ to $1$, using the zero initial difference, and dividing by $\exp(\int_0^1\Psi'(x(u/s))/s\,du)$ yields
\begin{equation*}
A(1)-sx(1/s)\leq\int_0^1\exp\left(-\int_t^1\frac{\Psi'(x(u/s))}{s}\,du\right)\frac{\Psi'(x(t/s))}{s}I(t)\,dt.
\end{equation*}
Since $\Psi'(x(t/s))=-\log(1-x(t/s))=-\log\left(e^{1-e^{t/s}}\right)=e^{t/s}-1$ is nonnegative and increasing in $t$, the exponential factor is at most one and $\Psi'(x(t/s))/s\leq\Psi'(x(1/s))/s$. Using $I(t)\geq0$, we conclude that
\begin{equation*}
A(1)\leq sx(1/s)+\frac{\Psi'(x(1/s))}{s}\int_0^1 I(t)\,dt.
\end{equation*}
The coefficient of the integral converges to $\Psi'(x(1))=e-1$ as $s\to 1$. By \Cref{lem:unknown-learning},
\begin{equation*}
\limsup_{m\to\infty}\frac{\ALG}{m}\leq sx(1/s)+o(1-s),
\end{equation*}
uniformly over online policies.

The function $x$ is strictly concave on $(0,\infty)$ and satisfies $x(0)=0$. Its tangent at time $1$ therefore gives
\begin{equation*}
sx(1/s)\leq x(1)-(1-s)\bigl(x(1)-x'(1)\bigr),
\qquad x(1)-x'(1)>0.
\end{equation*}
Choose one rational $s<1$ sufficiently close to one and keep it fixed. Since $x(1)=1-e/e^e$, we obtain, for some $\varepsilon>0$,
\begin{equation*}
\ALG\leq\left(1-\frac{e}{e^e}-2\varepsilon\right)m+o(m),
\qquad \OPT=m-o(m).
\end{equation*}
For all sufficiently large $m$, the ratio under the prior is therefore at most $1-e/e^e-\varepsilon$, uniformly over online policies.

The expected offline optimum is identical for every choice of $H$. Averaging over $H$ therefore implies that, for each online policy, there exists some fixed choice of $H$ that gives a competitive ratio no larger than the ratio under the prior. This proves the bound on $\mathsf{CR}^{\text{unknown}}$. 
\end{proof}

The same separation holds asymptotically in the fixed-count i.i.d.~model. For each fixed $H$, take exactly $n=m$ independent arrivals from the same neighborhood law. Run any policy for this fixed-count model on the Poisson instance, stopping after $n$ arrivals or at time $1$, whichever occurs first. This defines an online policy without requiring knowledge of $H$. Conditional on $H$, coupling the executions on the same neighborhood sequence gives
\begin{equation*}
\ALG_n\leq\ALG+\E[(n-K)^+],
\qquad |\OPT_n-\OPT|\leq\E[|K-n|],
\qquad K\sim\Pois(n),
\end{equation*}
where the expectations also average over the prior. Both error terms are $O(\sqrt n)$. Since $\OPT=n-o(n)$, we obtain $\OPT_n=n-o(n)$ and the same strict asymptotic gap. Symmetry and averaging over $H$ again give a fixed hard arrival law for each policy.

\section*{Acknowledgments}

This project has been partially funded by the European Research Council (ERC) under the European Union's Horizon Europe Program (FACT, grant agreement No.~101170373), by an Amazon Research Award, and by the Israel Science Foundation Breakthrough Program (grant No.~2600/24).

\newpage

\appendix

\section{AI Use Disclosure}

We made substantial use of ChatGPT and Gemini as interactive research assistants. The authors supplied several guiding ideas, including polymatroid representations of allocations, the use of functional and stochastic analysis, a reformulation of the offline benchmark (which was motivated by a similar approach in \cite{loiseau2026} building on earlier work by some of the authors), and a hidden-subset construction to separate known and unknown arrival laws. AI assistance contributed to developing and testing these directions, formulating candidate lemmas and algorithms, searching for counterexamples and relevant literature, conducting numerical experiments, and generating verification code. The tools also assisted with drafting and revising the manuscript. The process was iterative: proposed arguments were evaluated and corrected or discarded when gaps or errors were found.

The authors wrote the manuscript, with AI assistance in refining its style and presentation. They developed the paper organization, proof structure and exposition, critically assessed proposed formulations, and revised and reformulated mathematical arguments. They checked the proofs, literature attributions, and claims of novelty, and take full responsibility for the work.

\section{Weighted Offline Equivalence}
\label{app:weighted-offline}

We prove the weighted offline equality
\begin{equation*}
\E\bigl[\OPTP_w(\mathbf N)\bigr]
=
\E\bigl[\OPTE_w(Z)\bigr].
\end{equation*}
As before, the offline vertices are indexed by $[m]$, the independent counts satisfy $N_S\sim\mathrm{Pois}(\lambda_S)$, and the independent capacities satisfy $Z_i\sim\mathrm{Exp}(1)$. An allocation of one unit of type $S$ to vertex $i\in S$ has weight $w_{i,S}\geq0$. We first express both benchmarks as minima over vertex prices. An Euler-characteristic identity reduces the comparison of these minima to comparisons on individual simplices. We then construct a common triangulation whose vertices form price chains, along which the count--waiting-time identity from the unweighted proof applies.

\subsection{Offline Optima and Dual Prices}

The deficiency representations in the unweighted proof are the
unit-weight case of standard bipartite matching duality, 
see Schrijver~\cite[Sections~17.1 and~21.1]{Schrijver2003}.
Eliminating the type-side dual variables gives the 
representations below in terms of vertex prices. For nonnegative vertex prices $p\in\mathbb R_+^m$, define the residual type price
\begin{equation*}
b_S(p)=\max_{i \in S} (w_{i,S}-p_i)_+
\end{equation*}
and the two convex objectives
\begin{equation*}
\Phi_{\mathrm P}(p)=\sum_{i=1}^m p_i+\sum_S N_Sb_S(p),
\qquad
\Phi_{\mathrm E}(p)=\sum_{i=1}^m Z_ip_i+\sum_S\lambda_Sb_S(p).
\end{equation*}

\begin{lemma}
\label{lem:weighted-dual-prices}
For every realization of $\mathbf N$ and $Z$,
\begin{equation*}
\OPTP_w(\mathbf N)=\min_{p\geq0}\Phi_{\mathrm P}(p),
\qquad
\OPTE_w(Z)=\min_{p\geq0}\Phi_{\mathrm E}(p).
\end{equation*}
Both minima may be restricted to $[0,W]^m$ whenever $W$ bounds all edge weights.
\end{lemma}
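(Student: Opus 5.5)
We prove both identities by LP duality, treating the two programs as instances of one template. Consider the bipartite transportation LP with vertex capacities $c_i$ and type capacities $u_S$:
\begin{equation*}
\max\Bigl\{\sum_S\sum_{i\in S}w_{i,S}x_{i,S}:\ x\geq0,\ \sum_{i\in S}x_{i,S}\leq u_S\ \forall S,\ \sum_{S\ni i}x_{i,S}\leq c_i\ \forall i\Bigr\}.
\end{equation*}
For matching we take $c_i=1$ and $u_S=N_S$. The constraints $x_{i,S}\leq1$ in the original formulation are implied by $\sum_{S\ni i}x_{i,S}\leq 1$, and integrality of the bipartite polytope shows the LP value equals $\OPTP_w(\mathbf N)$. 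For scheduling we take $c_i=Z_i$ and $u_S=\lambda_S$, which is exactly the displayed program for $\OPTE_w(Z)$.

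The LP is feasible (take $x=0$) and bounded (the objective is at most $W\sum_i c_i$), so strong duality applies. The dual has prices $p_i\geq0$ on vertices and $b_S\geq0$ on types, with constraints $p_i+b_S\geq w_{i,S}$ for $i\in S$, and objective $\sum_i c_ip_i+\sum_S u_Sb_S$. Since $u_S\geq0$, for fixed $p$ the best choice of $b_S$ is the smallest feasible one, namely $b_S=\max_{i\in S}(w_{i,S}-p_i)_+=b_S(p)$. Eliminating $b$ therefore gives
\begin{equation*}
\text{LP value}=\min_{p\geq0}\Bigl\{\sum_i c_ip_i+\sum_S u_Sb_S(p)\Bigr\}.
\end{equation*}
Substituting the two capacity choices yields $\Phi_{\mathrm P}$ and $\Phi_{\mathrm E}$. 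Both are convex, since $b_S(p)$ is a maximum of affine functions truncated at zero.

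For the restriction to $[0,W]^m$, take any $p\geq0$ and let $p'=\min\{p,W\}$ coordinatewise. For every $i\in S$ with $p_i\geq W$ we have $(w_{i,S}-p_i)_+=0=(w_{i,S}-W)_+$. Hence $b_S(p')=b_S(p)$, while $\sum_ic_ip'_i\leq\sum_ic_ip_i$ because $c\geq0$. Thus capping never increases the objective, and the minimum is attained in $[0,W]^m$, a compact set on which a continuous function attains its minimum.

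The only points requiring care are that the minimum over the unbounded orthant is attained (strong duality gives an optimal dual solution, and the capping argument confirms it) and that $u_S=0$ causes no trouble. When $u_S=0$ the term $u_Sb_S$ vanishes for any choice of $b_S$, so eliminating $b_S$ is still valid. Neither point is a real obstacle, so the proof is routine.
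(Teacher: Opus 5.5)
Your proposal is correct and follows the paper's proof essentially step for step. Both arguments use strong LP duality for the bipartite transportation LP, with total unimodularity identifying the unit-capacity LP value with $\OPTP_w(\mathbf N)$, then eliminate the type-side multipliers via the minimal feasible choice $b_S(p)$ using $N_S,\lambda_S\geq0$, and finally cap prices coordinatewise at $W$ to restrict to $[0,W]^m$. The only detail the paper states that you leave implicit is the convention $b_{\varnothing}(p)=0$ for the empty type.
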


\begin{proof}
Recall the primal matching LP: 
\begin{equation*}
\begin{aligned}
\max\quad &\sum_S\sum_{i\in S}w_{i,S}x_{i,S}\\
\text{subject to}\quad &\sum_{S\ni i}x_{i,S}\leq1 &&\text{for every }i,\\
&\sum_{i\in S}x_{i,S}\leq N_S &&\text{for every }S,\\
&x_{i,S}\geq0 &&\text{for every }S\text{ and }i\in S.
\end{aligned}
\end{equation*}
Its constraint matrix is the incidence matrix of a bipartite graph and is therefore totally unimodular. Since its capacities are integers, the LP has an integral optimal solution. Each such solution assigns at most one arrival to each offline vertex and uses at most $N_S$ arrivals of each type $S$, so the LP optimum equals $\OPTP_w(\mathbf N)$.

Assign nonnegative multipliers $p_i$ to the vertex-capacity constraints and $q_S$ to the type-capacity constraints. The Lagrangian is
\begin{align*}
\sum_S\sum_{i\in S}w_{i,S}x_{i,S}+\sum_i p_i\left(1-\sum_{S\ni i}x_{i,S}\right)+\sum_S q_S\left(N_S-\sum_{i\in S}x_{i,S}\right)\\
=\sum_i p_i+\sum_S N_Sq_S+\sum_S\sum_{i\in S}(w_{i,S}-p_i-q_S)x_{i,S}.
\end{align*}
For feasible $x$, the added terms are nonnegative, so this expression bounds the primal objective from above. After relaxing the capacity constraints, its supremum over $x\geq0$ is finite exactly when $p_i+q_S\geq w_{i,S}$ for every $S$ and $i\in S$. Otherwise, a variable with positive coefficient can be increased without bound. When these inequalities hold, the supremum is $\sum_i p_i+\sum_S N_Sq_S$. The primal LP is feasible and bounded, so strong LP duality gives
\begin{equation*}
\OPTP_w(\mathbf N)=\min\left\{\sum_i p_i+\sum_S N_Sq_S:p_i+q_S\geq w_{i,S}\ \text{for }i\in S,\quad p,q\geq0\right\}.
\end{equation*}
For fixed $p\geq0$, the smallest feasible choice is $q_S=\max_{i\in S}(w_{i,S}-p_i)_+=b_S(p)$, with $q_{\varnothing}=0$. Since $N_S\geq0$, this choice minimizes the dual objective over $q$, giving $\OPTP_w(\mathbf N)=\min_{p\geq0}\Phi_{\mathrm P}(p)$.

The exponential allocation benchmark is the optimum of the same primal LP with vertex capacities $Z_i$ and type capacities $\lambda_S$. This LP is also feasible and bounded. The same calculation and strong duality give the same dual constraints, with objective $\sum_i Z_ip_i+\sum_S\lambda_Sq_S$. Minimizing over $q$ again gives $q_S=b_S(p)$ and proves the second formula. Finally, capping any coordinate $p_i$ at $W$ leaves every $b_S(p)$ unchanged and cannot increase either objective.
\end{proof}

At each fixed price vector,
\begin{equation}
\label{eq:weighted-fixed-price}
\E[\Phi_{\mathrm P}(p)]
=\sum_{i=1}^m p_i+\sum_S\lambda_Sb_S(p)
=\E[\Phi_{\mathrm E}(p)].
\end{equation}
The minimizing prices depend on the realization, however, so we still need to compare the expected minima.

\subsection{Convex Minima through Simplices}

The next identity follows from the Euler-integration formulas of Baryshnikov and Ghrist~\cite[Lemma~1 and Proposition~2]{BaryshnikovGhrist2010}, and leveraging convexity of the objective. We give instead an elementary proof using the same Euler-characteristic principle and simplicial complex topology  as in the unweighted argument.\medskip 

A simplex is the convex hull of finitely many affinely independent points. A triangulation of a polytope $B$ is a realization of a finite simplicial complex by Euclidean simplices whose union is $B$. Concretely, it is a finite collection $\mathcal T$ of simplices containing all their nonempty faces, such that any two simplices intersect either in a common face or not at all, and their union is $B$. We identify the faces of the simplicial complex with their corresponding geometric simplices.

For example, label the vertices of the square $[0,1]^2$ by $a=(0,0)$, $b=(1,0)$, $c=(1,1)$, and $d=(0,1)$. Drawing the diagonal from $a$ to $c$ gives a triangulation whose associated simplicial complex has maximal faces $\{a,b,c\}$ and $\{a,c,d\}$, together with all their subsets. Geometrically, these faces correspond to two closed triangles, five distinct edges (including the diagonal), and four vertices. The triangles cover the square and intersect along the diagonal, which realizes their common face $\{a,c\}$.

\begin{lemma}
\label{lem:convex-euler}
Let $\mathcal T$ be a finite triangulation of a nonempty compact convex polytope $B$. If $g:B\to\mathbb R$ is convex and affine on every simplex, then
\begin{equation*}
\min_{x\in B}g(x)
=
\sum_{\varnothing\neq\sigma\in\mathcal T}
(-1)^{\dim\sigma}
\max_{x\in\sigma}g(x).
\end{equation*}
\end{lemma}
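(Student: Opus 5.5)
The plan is to rewrite the alternating sum as an integral over thresholds and then apply the same Euler-characteristic principle used in \Cref{lem:supermodular-chains}, now on sublevel-set complexes. Fix $c$ strictly smaller than $\min_B g$ and write $\max_\sigma g = c + \int_c^\infty \ind\{\max_\sigma g > a\}\,da$. Since $\mathcal T$ triangulates the polytope $B$, which is contractible, we have $\sum_{\sigma\in\mathcal T}(-1)^{\dim\sigma}=\chi(\mathcal T)=1$. Substituting therefore gives
\begin{equation*}
\sum_{\sigma}(-1)^{\dim\sigma}\max_\sigma g
= c+\int_c^\infty\sum_\sigma(-1)^{\dim\sigma}\ind\Bigl\{\max_\sigma g>a\Bigr\}\,da .
\end{equation*}
Using $\chi(\mathcal T)=1$ once more, the integrand equals $1-\chi(\mathcal K_a)$, where $\mathcal K_a\coloneqq\{\sigma\in\mathcal T:\max_\sigma g\le a\}$. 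This family is closed under taking faces, so it is a subcomplex. It then suffices to show that $\chi(\mathcal K_a)=1$ when $a\ge\min_B g$ and $\chi(\mathcal K_a)=0$ otherwise. The integral then reduces to $\min_B g - c$, and adding $c$ yields the claim.

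For $a<\min_B g$, no vertex satisfies $g\le a$, so $\mathcal K_a$ is empty and $\chi(\mathcal K_a)=0$. For $a\ge\min_B g$, the minimum of $g$ is attained at a vertex of the triangulation, because $g$ is affine on each simplex. Hence $\mathcal K_a$ is nonempty. Because $g$ is affine on each simplex, a simplex lies in $\mathcal K_a$ exactly when all of its vertices satisfy $g\le a$. The main step is then to show that $\mathcal K_a$ is contractible, so that its Euler characteristic is one (\cite{Bjorner1995}). I would argue that $\mathcal K_a$ is a deformation retract of the convex, and hence contractible, sublevel set $L_a\coloneqq\{x\in B:g(x)\le a\}$. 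Convexity of $L_a$ follows from convexity of $g$.

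To build the retraction, I would treat each simplex $\sigma$ of $\mathcal T$ meeting $L_a$ separately. Split its vertex set into the part $V^-$ where $g\le a$ and the part $V^+$ where $g>a$. Every point of $\sigma\cap L_a$ can be written as a convex combination $(1-\theta)u+\theta v$ with $u\in\operatorname{conv}V^-$ and $v\in\operatorname{conv}V^+$. The straight-line homotopy that pushes $\theta$ to $0$ keeps $g\le a$ along the way, since $g$ is affine on $\sigma$ and $g(u)\le a<g(v)$. It ends in the face $\operatorname{conv}V^-$, which belongs to $\mathcal K_a$.

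This is the standard retraction of a neighbourhood of a full subcomplex, and it is where I expect the main obstacle. The homotopies defined on different simplices must agree on shared faces, and they do, because the decomposition is determined by barycentric coordinates, which are compatible across faces. One also needs continuity up to the boundary where $\theta=1$. That case does not occur inside $L_a$: it would require a point of $\operatorname{conv}V^+$, where $g>a$. Assembling the simplex-wise homotopies gives $\mathcal K_a\simeq L_a\simeq\mathrm{pt}$, which completes the Euler-characteristic computation and hence the identity.
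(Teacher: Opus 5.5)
Your proposal is correct and is essentially the paper's proof. Both write each maximum as a threshold integral, use $\chi(\mathcal T)=1$ to identify the integrand with $1-\chi$ of the subcomplex of simplices lying below the level, and prove that subcomplex contractible when it is nonempty via the same deformation retraction of the convex sublevel set: drop the barycentric coordinates of vertices above the level, renormalize, and take the straight-line homotopy.

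The differences are cosmetic. You integrate from some $c<\min_B g$ instead of shifting $g\geq 0$. You justify that the homotopy stays in the sublevel set by affineness of $g$ on each simplex, whereas the paper uses convexity of that set.
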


\begin{proof}
For a set $B$ admitting a finite triangulation $\mathcal T$, we define its Euler characteristic as that of the underlying simplicial complex:
\begin{equation*}
\chi(B):=\chi(\mathcal T)
=\sum_{\varnothing\neq\sigma\in\mathcal T}(-1)^{\dim\sigma}.
\end{equation*}
We use the standard facts that this definition is independent of the chosen triangulation and that homotopy-equivalent spaces have the same Euler characteristic.\footnote{Two spaces are homotopy equivalent if there are continuous maps between them such that each composition can be continuously deformed into the identity map on the corresponding space. In particular, a deformation retraction gives a homotopy equivalence.} A nonempty contractible space has Euler characteristic $1$, and we set $\chi(\varnothing)=0$.
Since $B$ is nonempty and convex, it is contractible, and therefore
\begin{equation*}
\sum_{\varnothing\neq\sigma\in\mathcal T}(-1)^{\dim\sigma}
=\chi(B)=1.
\end{equation*}
Adding a constant $c$ to $g$ adds $c$ to its minimum and adds $c\sum_{\sigma\neq\varnothing}(-1)^{\dim\sigma}=c$ to the proposed right-hand side. Since $g$ is continuous on the compact set $B$, we may thus shift it by a constant and assume $g\geq0$.
For $t\geq0$, define
\begin{equation*}
B_t=\{x\in B:g(x)\leq t\},
\qquad
\mathcal T_t=\{\sigma\in\mathcal T:\sigma\subseteq B_t\}.
\end{equation*}
The collection $\mathcal T_t$ is a subcomplex, since every face of a simplex contained in $B_t$ is also contained in $B_t$. Its union need not equal $B_t$, because the level $t$ may cut through some simplices. If $B_t$ is empty, $\mathcal T_t$ has no nonempty simplices and $\chi(\mathcal T_t)=0$. Suppose now that $B_t$ is nonempty. In that case, because $g$ is convex, $B_t$ is contractible, and $\chi(B_t)=1$. But the union of $\mathcal T_t$ need not be convex. We show that nevertheless the union of $\mathcal T_t$ remains contractible. \medskip 

\textbf{Constructing a continuous projection onto $\mathcal T_t$.} First, we construct a continuous map $r_t$ from $B_t$ onto this union that fixes the union pointwise. This map will be continuous in $x \in B_t$ for $t$ fixed, such that when we move between adjacent simplexes, it moves continuously. For $x\in\sigma\cap B_t$, let $v_0,\ldots,v_d$ be the vertices of $\sigma$ and write the unique barycentric representation
\begin{equation*}
x=\sum_{j=0}^d\alpha_jv_j,
\qquad
\alpha_j\geq0,
\qquad
\sum_{j=0}^d\alpha_j=1.
\end{equation*}
Because $g$ is affine and the weights sum to $1$, this gives $g(x)=\sum_j\alpha_jg(v_j)\leq t$. Consequently, at least one vertex with $g(v_j)\leq t$ has positive coefficient. We can therefore discard the coordinates of vertices with value greater than $t$ and renormalize the remaining coordinates:
\begin{equation*}
r_t(x)=
\frac{\displaystyle\sum_{j:\,g(v_j)\leq t}\alpha_jv_j}
{\displaystyle\sum_{j:\,g(v_j)\leq t}\alpha_j}.
\end{equation*}
This point lies in the face of $\sigma$ spanned by the vertices with value at most $t$. Every point of this face is a convex combination of vertices whose values under $g$ are at most $t$. By affineness, $g\leq t$ throughout the face, so the face lies in $B_t$ and belongs to $\mathcal T_t$.

The formula is continuous on each $\sigma\cap B_t$, since its denominator is positive there. On shared faces, the barycentric coordinates at shared vertices agree, while all other coordinates vanish, so the formulas for $r_t$ coincide. They therefore define a continuous map on $B_t$, by gluing the continuous maps on the finitely many closed sets $\sigma\cap B_t$. If $x$ lies in a simplex of $\mathcal T_t$, all its vertices have value at most $t$, so no coordinates are discarded and $r_t(x)=x$. \medskip

\textbf{Constructing a continuous deformation.}
For $x\in B_t$, consider the continuous deformation
\begin{equation*}
(x,s)\longmapsto (1-s)x+s\,r_t(x),
\qquad s\in[0,1].
\end{equation*}
Both $x$ and $r_t(x)$ lie in $B_t$, which is convex by convexity of $g$, so this deformation stays in $B_t$. At $s=0$ it is the identity, and at $s=1$ it maps $B_t$ onto the union of $\mathcal T_t$. Moreover, every point already in this union remains unchanged throughout, since $r_t(x)=x$ there. Thus the union of $\mathcal T_t$ is a deformation retract of $B_t$, and the two spaces are homotopy equivalent. By homotopy invariance of Euler characteristic and contractibility of nonempty convex sets, 
\begin{equation*}
\chi(\mathcal T_t)=\chi(B_t)=\mathbf1\!\left\{B_t \neq \emptyset\right\}.
\end{equation*}

\textbf{Integrating the Euler Characteristic.}
A simplex $\sigma$ belongs to $\mathcal T_t$ exactly when $\max_{x\in\sigma}g(x)\leq t$, as \(g\) must have value at most \(t\) throughout the simplex, while $B_t$ is nonempty exactly when $\min_{x\in B}g(x)\leq t$. Combining the empty and nonempty cases gives
\begin{equation*}
\sum_{\varnothing\neq\sigma\in\mathcal T}(-1)^{\dim\sigma}
\mathbf1\!\left\{\max_{x\in\sigma}g(x)\leq t\right\}
=
\chi(\mathcal T_t)
=
\mathbf1\!\left\{\min_{x\in B}g(x)\leq t\right\}.
\end{equation*}
Subtracting this equality from $\sum_{\sigma\neq\varnothing}(-1)^{\dim\sigma}=1$ yields
\begin{equation*}
\mathbf1\!\left\{t<\min_{x\in B}g(x)\right\}
=
\sum_{\varnothing\neq\sigma\in\mathcal T}(-1)^{\dim\sigma}
\mathbf1\!\left\{t<\max_{x\in\sigma}g(x)\right\}.
\end{equation*}
Finally, integrate over $t\geq0$. Since $g\geq0$, we may apply $\int_0^\infty\mathbf1\{t<a\}\,dt=a$ to every extremum above. By linearity of the integral, this gives
\begin{equation*}
\min_{x\in B}g(x)
=
\sum_{\varnothing\neq\sigma\in\mathcal T}(-1)^{\dim\sigma}
\max_{x\in\sigma}g(x). \qedhere
\end{equation*}
\end{proof}

The next section will construct a triangulation in the price space. 

\subsection{Poisson--Exponential Duality on Price Chains}

Assume first that the weights are integers, and choose an integer $W\geq1$ bounding all weights. By Lemma~\ref{lem:weighted-dual-prices}, it suffices to consider prices in $[0,W]^m$. Divide this domain into unit cubes. In each cube $a+[0,1]^m$, use the simplices with vertices
\begin{equation}
\label{eq:weighted-chain-triangulation}
a,\quad a+e_{\pi(1)},\quad
a+e_{\pi(1)}+e_{\pi(2)},\quad\ldots,\quad a+\mathbf1,
\end{equation}
where $e_i$ is the $i$-th coordinate vector and $\pi$ ranges over all permutations of $[m]$. 
These simplices agree on shared cube faces, and together with all their faces, they form a triangulation $\mathcal T$ of $[0,W]^m$. For example, when $m=2$ and $[0,1]^2$, the triangulation corresponds to the two triangles obtained by drawing the diagonal from $(0,0)$ to $(1,1)$.

As a first intermediary result, we prove that $\Phi_{\mathrm P}$ and $\Phi_{\mathrm E}$ are appropriately affine. 
\begin{lemma}\label{lem:potential-affine}
The functions $\Phi_{\mathrm P}$ and $\Phi_{\mathrm E}$ are affine on every simplex $\sigma\in\mathcal T$.
\end{lemma}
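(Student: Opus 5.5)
Both $\Phi_{\mathrm P}$ and $\Phi_{\mathrm E}$ have the form $\sum_i c_ip_i+\sum_S \mu_S\,b_S(p)$, with nonnegative coefficients and the same functions $b_S$. The first sum is linear in $p$. It therefore suffices to show that each residual type price $b_S(p)=\max_{i\in S}(w_{i,S}-p_i)_+$ is affine on every simplex $\sigma\in\mathcal T$; nonnegative combinations of affine functions are affine.

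Fix a simplex $\sigma$ from \eqref{eq:weighted-chain-triangulation} in the cube $a+[0,1]^m$, with $a\in\{0,\ldots,W-1\}^m$ integral and permutation $\pi$. Points of $\sigma$ can be written as $p=a+u$ with $1\geq u_{\pi(1)}\geq u_{\pi(2)}\geq\cdots\geq u_{\pi(m)}\geq0$. Lower-dimensional faces are subsets of such maximal simplices, so it is enough to treat the maximal ones. For each $i\in S$, set $c_i\coloneqq w_{i,S}-a_i$, which is an integer by integrality of the weights and of $a$. Then $w_{i,S}-p_i=c_i-u_i$ with $u_i\in[0,1]$. Two observations follow. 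First, the positive part is harmless on $\sigma$: if $c_i\geq1$ then $c_i-u_i\geq0$ throughout; if $c_i\leq0$ then $c_i-u_i\leq0$ throughout, so $(c_i-u_i)_+=0$. Hence each term $(w_{i,S}-p_i)_+$ is affine on $\sigma$, equal either to $c_i-u_i$ or to $0$.

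The main point is that the maximum over $i\in S$ is attained by a single index across all of $\sigma$. Let $c^\ast\coloneqq\max_{i\in S}c_i$. If $c^\ast\leq0$, then $b_S\equiv0$ on $\sigma$. Otherwise, any index with $c_i\leq c^\ast-1$ satisfies $c_i-u_i\leq c^\ast-1\leq c^\ast-u_j$ for every $j$ with $c_j=c^\ast$, so only the indices achieving $c^\ast$ matter. Among these, $c^\ast-u_i$ is maximized by the index with the smallest $u_i$. On $\sigma$ the ordering of the coordinates $u$ is fixed by $\pi$, so this is the index in $\{i\in S:c_i=c^\ast\}$ appearing last in $\pi$, and it is the same index for all points of $\sigma$. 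Ties on the boundary of $\sigma$ are harmless, since equal values give the same maximum. Thus $b_S(p)=c^\ast-u_{i^\ast}$ on $\sigma$, which is affine.

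Summing over $S$ with coefficients $N_S$ or $\lambda_S$ and adding the linear terms gives affineness of both potentials on every simplex. The only delicate step is the uniform choice of the maximizing index. It relies on integrality: the integer gaps between the $c_i$ cannot be overcome by $u$-variations of size at most one, and the chain structure of the simplex fixes the order of the $u_i$. This is why the triangulation is taken along coordinate chains inside unit cubes.
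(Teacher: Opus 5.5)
Your proposal is correct and follows essentially the same route as the paper. Like the paper, you reduce to affineness of each $b_S$ and use integrality of $w_{i,S}-a_i$ to show the maximum is attained among indices achieving the top integer value, where the positive part is inactive. You then use the fixed order $u_{\pi(1)}\ge\cdots\ge u_{\pi(m)}$ on the chain simplex to single out one minimizing index, exactly as the paper does.
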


\begin{proof}
We first show that every $b_S$ is affine on each simplex of $\mathcal T$. Fix a nonempty type $S$ and a unit cube $a+[0,1]^m$. Writing a price vector in this cube as $p=a+x$, we have $b_S(a+x)=\max_{i\in S}(w_{i,S}-a_i-x_i)_+$.
Let $h=\max_{i\in S}(w_{i,S}-a_i)$, which is an integer. If $h\leq0$, then every term inside the positive part is nonpositive, so $b_S$ vanishes throughout the cube.

Suppose now that $h\geq1$. Since the values $w_{i,S}-a_i$ are integers, any value smaller than $h$ is at most $h-1$. Subtracting $x_i\geq0$ can only decrease it. In contrast, an index satisfying $w_{i,S}-a_i=h$ gives $h-x_i\geq h-1\geq0$, because $x_i\leq1$. Thus the maximum defining $b_S(a+x)$ is always attained among the indices with $w_{i,S}-a_i=h$, and the positive part is unnecessary for these indices. Consequently,
\begin{equation*}
b_S(a+x)
=\max_{\substack{i\in S\\w_{i,S}-a_i=h}}(h-x_i)
=h-\min_{\substack{i\in S\\w_{i,S}-a_i=h}}x_i.
\end{equation*}

We have a triangulation of the unit cube, so in particular, $a+x$ lies in one such simplex $\sigma$. Let $\pi$ be the order associated to this simplex. At each vertex $v \in \sigma$,  the coordinates of $v$ minus $a$ are a sequence of $1$s and then $0$s, so in particular:
\begin{equation*}
v_{\pi(1)}-a_{\pi(1)}\geq\cdots\geq v_{\pi(m)}-a_{\pi(m)}.
\end{equation*}
Because these inequalities hold at every vertex, they hold throughout their convex hull by convex combination. Therefore for every $a+x \in \sigma$, 
\begin{equation*}
x_{\pi(1)}\geq\cdots\geq x_{\pi(m)}.
\end{equation*}
Hence, the indices satisfying $w_{i,S}-a_i=h$ are fixed throughout the cube. Among these indices, the minimum above is therefore always given by the coordinate appearing last in the order $\pi$. Hence $b_S(a+x)$ has the form $h-x_i$ for a fixed index $i$ on this simplex, which is an affine function of $p=a+x$. 

For $S=\varnothing$, we have $b_S=0$. Since each objective is a linear function of the prices plus a linear combination of the functions $b_S$, both $\Phi_{\mathrm P}$ and $\Phi_{\mathrm E}$ are affine on every simplex of $\mathcal T$, for every realization.    
\end{proof}

We next adapt the Poisson-process argument from the unweighted case to compare the expected maxima of the two objectives on each simplex, by using the chains induced by prices.
\begin{lemma}
\label{lem:weighted-price-chains}
For every nonempty simplex $\sigma\in\mathcal T$,
\begin{equation*}
\E\!\left[\max_{p\in\sigma}\Phi_{\mathrm P}(p)\right]
=
\E\!\left[\max_{p\in\sigma}\Phi_{\mathrm E}(p)\right].
\end{equation*}
\end{lemma}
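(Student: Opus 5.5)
Since $\Phi_{\mathrm P}$ and $\Phi_{\mathrm E}$ are affine on $\sigma$ by \Cref{lem:potential-affine}, each maximum over $\sigma$ is attained at a vertex. I would therefore compare the maxima over the finitely many vertices of $\sigma$ and reduce this to the count--waiting-time argument of \Cref{lem:chain-duality}.

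\textbf{Setting up the chain.} By construction \eqref{eq:weighted-chain-triangulation}, the vertices of $\sigma$ form a chain $v_0<v_1<\cdots<v_k$ inside one cube $a+[0,1]^m$. Each vertex has the form $v_j=v_0+\mathbf 1_{A_j}$, with nested sets $\varnothing=A_0\subsetneq A_1\subsetneq\cdots\subsetneq A_k\subseteq[m]$. Hence each coordinate increases by at most one along the whole chain.

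\textbf{Key structural claim.} I would next show that each $b_S$ is integer-valued at the vertices, is nonincreasing along the chain, and drops by at most one in total. Integer weights and integer vertices give integrality. Monotonicity holds because prices only increase. The total drop is at most one because every coordinate moves by at most one and $b_S$ is a maximum of $1$-Lipschitz functions.

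Consequently, the types with $b_S(v_0)-b_S(v_k)=1$ split according to the first step $j$ at which the drop occurs. Let $\mathcal C_j$ be the set of types that have dropped by step $j$. These families are nested, $\mathcal C_0=\varnothing\subseteq\mathcal C_1\subseteq\cdots\subseteq\mathcal C_k$, and the vertex values decompose as
\begin{align*}
\Phi_{\mathrm P}(v_j)&=\Phi_{\mathrm P}(v_0)+\delta_{\mathrm P}(j), &\delta_{\mathrm P}(j)&\coloneqq|A_j|-\sum_{S\in\mathcal C_j}N_S,\\
\Phi_{\mathrm E}(v_j)&=\Phi_{\mathrm E}(v_0)+\delta_{\mathrm E}(j), &\delta_{\mathrm E}(j)&\coloneqq Z(A_j)-\sum_{S\in\mathcal C_j}\lambda_S.
\end{align*}
Write $\ell_j\coloneqq\sum_{S\in\mathcal C_j}\lambda_S$ for the deterministic type mass. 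The point I would check most carefully is that the nesting of the families $\mathcal C_j$ is exactly the independent-increments structure the chain argument needs; I expect this to be the main obstacle.

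\textbf{Reduction to the deficits.} By linearity of expectation, $\E[\max_j\Phi(v_j)]=\E[\Phi(v_0)]+\E[\max_j\delta(j)]$ in each model. The baseline terms have equal expectations by \eqref{eq:weighted-fixed-price}. It is irrelevant that $\Phi(v_0)$ is correlated with $\delta$, since only expectations of the two summands enter. It therefore remains to show $\E[\max_j\delta_{\mathrm P}(j)]=\E[\max_j\delta_{\mathrm E}(j)]$, where $\delta(0)=0$ makes both maxima nonnegative.

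\textbf{Distributional identities.} I would rerun the proof of \Cref{lem:chain-duality}. The increments of $\sum_{S\in\mathcal C_j}N_S$ use disjoint type sets, so
\begin{equation*}
\Bigl(\sum_{S\in\mathcal C_j}N_S\Bigr)_{j}\stackrel{d}{=}\bigl(\Pi(\ell_j)\bigr)_j,
\qquad
\bigl(Z(A_j)\bigr)_j\stackrel{d}{=}\bigl(T_{|A_j|}\bigr)_j .
\end{equation*}
Using $T_n\le t$ if and only if $\Pi(t)\ge n$, and shifting time by $a$ with $K_a\sim\Pois(a)$ independent of $\mathbf N$, I would obtain
\begin{equation*}
\Pp\bigl(\delta_{\mathrm E}(j)\le a\ \forall j\bigr)=\Pp\bigl(\Pi(\ell_j+a)\ge|A_j|\ \forall j\bigr)=\Pp\bigl(\delta_{\mathrm P}(j)\le K_a\ \forall j\bigr).
\end{equation*}
No Euler-characteristic step is needed here, because the index set is already a chain. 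Taking complements gives $\Pp(\max_j\delta_{\mathrm E}>a)=\Pp(\max_j\delta_{\mathrm P}>K_a)$ for every $a\ge0$.

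\textbf{Conclusion.} Exactly as in \Cref{thm:deficiency-duality}, $\max_j\delta_{\mathrm E}(j)$ is then distributed as a sum of $\max_j\delta_{\mathrm P}(j)$ independent unit exponentials. Integrating the tails over $a\ge0$ therefore gives equal expectations, which completes the proof.
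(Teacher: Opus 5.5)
Your proposal is correct and follows essentially the same route as the paper: write the vertices as $p^0+\mathbf 1_{A_r}$, use integrality and the at-most-one-unit drop of each $b_S$ to obtain nested type families, reduce to the increment maxima via equal expectations at the base vertex, and rerun the count--waiting-time and time-shift argument of \Cref{lem:chain-duality} before integrating tails. The differences are cosmetic: you use affineness rather than convexity to place the maximum at a vertex, you work with the complementary ``for all'' event, and you finish via the Gamma-mixture representation instead of the direct Poisson integral.
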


\begin{proof}
Fix a nonempty simplex $\sigma\in\mathcal T$ of dimension $k\leq m$, and write its vertices in coordinatewise order as $p^0,\ldots,p^k$. If $k=0$, the claim follows from equality of the expected objectives at every fixed price vector. Assume henceforth that $k\geq1$.

The vertices of $\sigma$ belong to one of the chains used to triangulate a unit cube. Consequently, each coordinate increases by either $0$ or $1$ between $p^0$ and any later vertex. We may therefore write
\begin{equation*}
p^r=p^0+\mathbf1_{A_r},
\qquad
\varnothing=A_0\subsetneq A_1\subsetneq\cdots\subsetneq A_k,
\end{equation*}
where $A_r$ is the set of coordinates that have increased by $1$ from $p^0$ to $p^r$. Here $\mathbf1_{A_r}$ denotes the vector equal to $1$ on $A_r$ and $0$ elsewhere. A step may increase several coordinates, since $\sigma$ may be a face that omits some vertices of the full cube chain.

As the prices increase along this chain, each $b_S$ is nonincreasing. Moreover, since every coordinate increases by at most $1$ in total, every surplus $w_{i,S}-p_i$ decreases by at most $1$, and hence
\begin{equation*}
b_S(p^0)-1\leq b_S(p^r)\leq b_S(p^0).
\end{equation*}
All weights and vertex prices are integers, so these values of $b_S$ are integers. Thus, along the chain, $b_S$ either stays constant or drops by exactly $1$ at a single step.

For each $r$, let
\begin{equation*}
J_r=\{S:b_S(p^r)=b_S(p^0)-1\}
\end{equation*}
be the collection of types whose residual price has already dropped by the time we reach $p^r$. Since these prices cannot increase again, the sets $J_r$ are nested:
\begin{equation*}
\varnothing=J_0\subseteq J_1\subseteq\cdots\subseteq J_k.
\end{equation*}

By definition of $A_r$, the difference $p_i^r-p_i^0$ equals $1$ when $i\in A_r$ and $0$ otherwise. Similarly, each residual price either stays unchanged or drops by $1$, and $J_r$ consists exactly of the types for which this drop has occurred. Thus $b_S(p^r)-b_S(p^0)$ equals $-1$ when $S\in J_r$ and $0$ otherwise. Substituting these differences into the definition of $\Phi_{\mathrm P}$ gives
\begin{equation*}
\Phi_{\mathrm P}(p^r)-\Phi_{\mathrm P}(p^0)=\sum_{i=1}^m(p_i^r-p_i^0)
+\sum_S N_S\bigl(b_S(p^r)-b_S(p^0)\bigr)=\sum_{i\in A_r}1+\!\!\sum_{S\in J_r}N_S(-1)=|A_r|-\!\!\sum_{S\in J_r}N_S.
\end{equation*}
Similarly, for the exponential dual objective, we get
\begin{equation*}
\Phi_{\mathrm E}(p^r)-\Phi_{\mathrm E}(p^0)=\sum_{i\in A_r}Z_i-\sum_{S\in J_r}\lambda_S. 
\end{equation*}

We use the Poisson-process argument from the proof of Lemma~\ref{lem:chain-duality}. Define the observation times
\begin{equation*}
t_r:=\sum_{S\in J_r}\lambda_S,
\qquad r=0,\ldots,k.
\end{equation*}
Since the sets $J_r$ are nested, these times satisfy $0=t_0\leq t_1\leq\cdots\leq t_k$. Let $\Pi(t)$ count the arrivals up to time $t$ in an auxiliary rate-one Poisson process, and let $T_j$ denote its $j$th arrival time, with $T_0=0$. The nested type counts have increments
\begin{equation*}
\sum_{S\in J_r}N_S-\sum_{S\in J_{r-1}}N_S
=\sum_{S\in J_r\setminus J_{r-1}}N_S
\sim\Pois(t_r-t_{r-1}).
\end{equation*}
These increments are independent because the groups $J_r\setminus J_{r-1}$ are disjoint. They therefore have the same joint law as the increments $\Pi(t_r)-\Pi(t_{r-1})$. Since both sequences start at zero,
\begin{equation*}
\bigg(\sum_{S\in J_r}N_S\bigg)_{r=0}^k
\stackrel{d}{=}
\bigl(\Pi(t_r)\bigr)_{r=0}^k.
\end{equation*}
For the exponential sums, the arrival-time representation established in the proof of Lemma~\ref{lem:chain-duality} gives
\begin{equation*}
\bigl(Z(A_r)\bigr)_{r=0}^k
\stackrel{d}{=}
\bigl(T_{|A_r|}\bigr)_{r=0}^k.
\end{equation*}
Indeed, both sequences start at zero and have independent increments, each consisting of $|A_r|-|A_{r-1}|$ independent rate-one exponential variables.

Set
\begin{equation*}
M_{\mathrm P}
:=\max_{0\leq r\leq k}\bigg\{|A_r|-\sum_{S\in J_r}N_S\bigg\},
\qquad
M_{\mathrm E}
:=\max_{0\leq r\leq k}\{Z(A_r)-t_r\}.
\end{equation*}
These are the maxima of the objective increments displayed above. Both are nonnegative because the term $r=0$ is zero. Fix $a\geq0$, and let $K_a\sim\Pois(a)$ be independent of both $\Pi$ and $\mathbf N$. As in the proof of Lemma~\ref{lem:chain-duality}, stationary independent increments give
\begin{equation*}
\bigl(\Pi(a+t_r)\bigr)_{r=1}^k
\stackrel{d}{=}
\bigl(K_a+\Pi(t_r)\bigr)_{r=1}^k.
\end{equation*}
To see this, write $\Pi(a+t_r)=\Pi(a)+[\Pi(a+t_r)-\Pi(a)]$. The bracketed terms have the same joint law as $(\Pi(t_r))_{r=1}^k$ and are independent of $\Pi(a)\sim\Pois(a)$. The same variable $K_a$ therefore appears at every observation time. Using the equivalence $T_j>u$ if and only if $\Pi(u)<j$, we obtain
\begin{align*}
\Pp(M_{\mathrm E}>a)
&=\Pp\left(Z(A_r)>t_r+a
\ \text{for some }r\in[k]\right)\\
&=\Pp\left(T_{|A_r|}>t_r+a
\ \text{for some }r\in[k]\right)\\
&=\Pp\left(\Pi(t_r+a)<|A_r|
\ \text{for some }r\in[k]\right)\\
&=\Pp\left(K_a+\Pi(t_r)<|A_r|
\ \text{for some }r\in[k]\right)\\
&=\Pp\left(K_a+\sum_{S\in J_r}N_S<|A_r|
\ \text{for some }r\in[k]\right)\\
&=\Pp(M_{\mathrm P}>K_a).
\end{align*}
Here the representations above are joint distribution identities along the entire chain, which allows us to apply them to the event that at least one inequality holds.

Since $M_{\mathrm P}$ is a nonnegative integer bounded by $m$, independence of $K_a$ and $\mathbf N$ gives
\begin{align*}
\E[M_{\mathrm E}]=\int_0^\infty\Pp(M_{\mathrm E}>a)\,da
=\sum_{j=0}^{m-1}\Pp(M_{\mathrm P}>j)
\int_0^\infty e^{-a}\frac{a^j}{j!}\,da
=\sum_{j=0}^{m-1}\Pp(M_{\mathrm P}>j)
=\E[M_{\mathrm P}],
\end{align*}
where each integral equals $1$, and the last equality is the tail-sum formula for a nonnegative integer-valued random variable. Finally, the objective increment identities give
\begin{equation*}
\max_{0\leq r\leq k}\Phi_{\mathrm P}(p^r)
=\Phi_{\mathrm P}(p^0)+M_{\mathrm P},
\qquad
\max_{0\leq r\leq k}\Phi_{\mathrm E}(p^r)
=\Phi_{\mathrm E}(p^0)+M_{\mathrm E}.
\end{equation*}
Since the two objectives have the same expectation at the fixed price vector $p^0$, we conclude that
\begin{equation*}
\E\left[\max_{0\leq r\leq k}\Phi_{\mathrm P}(p^r)\right]
=
\E\left[\max_{0\leq r\leq k}\Phi_{\mathrm E}(p^r)\right].
\end{equation*}
Finally, $\Phi_{\mathrm E}$ and $\Phi_{\mathrm P}$ are both convex. Therefore, their maximum over a simplex is attained at its vertices, i.e. 
$\max_{0\leq r\leq k}\Phi_{\mathrm E}(p^r)=\max_{p \in \sigma}\Phi_{\mathrm E}(p)$ and $\max_{0\leq r\leq k}\Phi_{\mathrm P}(p^r)=\max_{p \in \sigma}\Phi_{\mathrm P}(p)$. Combining this with the previous expectation equality concludes the proof. 
\end{proof}

\subsection{From Price Chains to Weighted Offline Equality}

We can now combine the Euler identity, the fact that the potential functions are affine on the triangulation, and the equality in expectation of the maximum over simplices to finish the proof. 

\begin{proof}[Proof of the weighted offline equality]
For integer weights, Lemmas~\ref{lem:weighted-dual-prices}, \ref{lem:convex-euler}, \ref{lem:potential-affine}, and~\ref{lem:weighted-price-chains} give
\begin{align*}
\E\bigl[\OPTP_w(\mathbf N)\bigr]
&=\E\!\left[\min_{p\in[0,W]^m}\Phi_{\mathrm P}(p)\right]\\
&=\sum_{\varnothing\neq\sigma\in\mathcal T}(-1)^{\dim\sigma}
  \E\!\left[\max_{p\in\sigma}\Phi_{\mathrm P}(p)\right]\\
&=\sum_{\varnothing\neq\sigma\in\mathcal T}(-1)^{\dim\sigma}
  \E\!\left[\max_{p\in\sigma}\Phi_{\mathrm E}(p)\right]\\
&=\E\!\left[\min_{p\in[0,W]^m}\Phi_{\mathrm E}(p)\right]
 =\E\bigl[\OPTE_w(Z)\bigr].
\end{align*}
All sums are finite, and all terms are integrable. \medskip

For rational weights, multiply by a common denominator and use homogeneity.  For arbitrary nonnegative real weights, approximate uniformly by rational vectors.  If $w,w'$ are two weight vectors, then
\[
|\OPTP_w(\mathbf N)-\OPTP_{w'}(\mathbf N)|
\le m\|w-w'\|_\infty,
\]
because any matching uses at most $m$ edges, while
\[
|\OPTE_{w}(Z)-\OPTE_{w'}(Z)|
\le Z([m])\|w-w'\|_\infty.
\]
Taking expectations, using $\E Z([m])=m$, and using the density of rationals in the reals proves Theorem~\ref{thm:weighted-static}. 
\end{proof}

\section{The Offline Limit for Exchangeable Arrivals}
\label{app:exchangeable-limit}

We derive the offline limit used in \Cref{subsec:exchangeable-warmup} from the bipartite matching formula of Bordenave, Lelarge, and Salez~\cite[Theorem~3]{BordenaveLelargeSalez2013}. We first treat bounded degree distributions, then remove this restriction by truncation and pass to Poisson arrivals.

Let $D$ be a nonnegative integer-valued random variable with generating function $\phi(u)\coloneqq\E[u^D]$. Consider $m$ offline vertices and Poisson arrivals of total rate $m$ over $[0,1]$. Each arrival independently draws $D$ and samples $D$ neighbors independently and uniformly from $[m]$, with replacement. We have the following limit equality.
\begin{proposition}
\label{prop:exchangeable-offline-limit}
\begin{equation*}
\lim_{m\to\infty}\frac{\E[\OPTP(\mathbf N)]}{m}
=1-\max_{u\in[0,1]}\{\phi(u)-\Psi(u)\},
\end{equation*}
where $\Psi(u)\coloneqq u+(1-u)\log(1-u)$ for $0\leq u<1$ and $\Psi(1)\coloneqq1$. 
\end{proposition}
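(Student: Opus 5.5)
The plan follows the route announced in the appendix introduction. First treat bounded $D$ with exactly $n=m$ arrivals, then remove the boundedness by truncation, and finally pass to Poisson arrivals.

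\textbf{Bounded degrees, fixed count.} Assume $D\le \bar d$ almost surely and take exactly $m$ arrivals. The random bipartite graph (offline vertices versus arrivals, with multi-edges merged) converges locally to a two-type unimodular Galton--Watson tree. In this tree an arrival has $D$ offline children. An offline vertex has $\Pois(\E D)$ arrival children, and each of those arrivals has size-biased degree $\widehat D$ with generating function $\phi'(u)/\phi'(1)$. Repeated samples within one arrival are a vanishing fraction and do not affect the local limit. \cite[Theorem~3]{BordenaveLelargeSalez2013} then expresses the limiting matching fraction, normalized by the offline side, as an infimum over fixed points $u\in[0,1]$ of a recursion built from the two offspring generating functions. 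The core computation is to simplify that formula for our offspring laws:
\begin{itemize}
\item the offline side has Poisson offspring, with generating function $e^{\E D(v-1)}$;
\item the arrival side has offspring generating function $\phi$.
\end{itemize}
I expect the variational expression to reduce, after substituting the fixed-point relation and changing variables, to $1-\max_u\{\phi(u)-\Psi(u)\}$. The term $\Psi(u)=u+(1-u)\log(1-u)$ should appear as the Legendre-type contribution of the Poisson side, since $\log(1-u)$ is the inverse of the Poisson offspring generating function. Concretely, I would parametrize by $u=1-e^{-\theta}$ and check that stationarity in $u$ recovers the fixed-point equation, so that the infimum over fixed points becomes a maximum over all $u\in[0,1]$. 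This algebraic identification is the step I expect to be the main obstacle. The main risks are matching the normalization conventions of \cite{BordenaveLelargeSalez2013} and verifying that the extremum over their admissible set coincides with the maximum over the whole interval.

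\textbf{Truncation.} For general $D$, set $D_{\bar d}=\min\{D,\bar d\}$ and realize it by keeping only the first $\bar d$ samples of each arrival. Keeping fewer samples only removes edges, so the offline optimum is monotone under this truncation. Each arrival adds at most one matched vertex. Hence the optimum with degrees capped at $\bar d$ differs from the uncapped one by at most the number of arrivals with $D>\bar d$, whose expectation is $m\,\Pp(D>\bar d)$. The limiting formula is also continuous in $\bar d$: $\phi_{\bar d}\to\phi$ uniformly on $[0,1]$ because $|\phi_{\bar d}(u)-\phi(u)|\le \Pp(D>\bar d)$, and a maximum of uniformly convergent functions converges. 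Letting $\bar d\to\infty$ after $m\to\infty$ gives the formula for fixed-count arrivals. One point to handle is that the maximum may sit at $u=1$, where $\Psi(1)=1$; here continuity of $\Psi$ at $1$ suffices.

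\textbf{Poissonization.} With $K\sim\Pois(m)$ arrivals, coupling with the first $\min\{K,m\}$ or first $m$ arrivals shows that the optimum changes by at most $|K-m|$. Since $\E|K-m|=O(\sqrt m)=o(m)$, the limit of $\E[\OPTP(\mathbf N)]/m$ is unchanged. Alternatively, one can invoke \Cref{lem:uniform-position-transfer} together with the monotonicity bound $\OPT_k\le \OPT_m+(k-m)^+$.
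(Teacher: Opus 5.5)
Your plan matches the paper's proof step by step: the bounded-degree, fixed-count case via the bipartite local limit and \cite[Theorem~3]{BordenaveLelargeSalez2013} with the substitution $u=1-e^{-\mu t}$, then truncation by keeping only the first $L$ samples, then Poissonization via $\E|K-m|\le\sqrt m$. The one step you leave open is extending the maximum from $u\in[0,1-e^{-\mu}]$ (the image of $t\in[0,1]$) to all of $[0,1]$. The paper closes it by noting that $\phi'(u)\le\mu$, so $\frac{d}{du}(\phi-\Psi)=\phi'(u)+\log(1-u)<0$ for $u>1-e^{-\mu}$.
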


\begin{proof}
First consider exactly $m$ independent arrivals, and let $\OPT_m$ denote their expected offline optimum. Suppose that $D$ is bounded and write $\mu\coloneqq\E[D]=\phi'(1)$. If $\mu=0$, every neighborhood is empty and the claimed formula is immediate, so assume $\mu>0$. The resulting bipartite graph has $m$ arrival vertices and $m$ offline vertices. \medskip

\textbf{Limit degree laws.} 
Conditional on $D=d$, each pair of distinct draws returns the same vertex with probability $1/m$. A union bound over the $\binom d2$ pairs, followed by averaging over $D$, gives
\begin{equation*}
\Pp(\text{an arrival has fewer than }D\text{ distinct neighbors})
\leq\frac{\E[D(D-1)]}{2m}
=O(1/m),
\end{equation*}
where the last bound uses that $D$ is bounded. Thus the arrival degree equals $D$ with probability tending to one, so its limiting distribution has generating function $\phi$.

Fix an offline vertex $v$. Conditional on $D=d$, each sample avoids $v$ with probability $1-1/m$. Since the samples are independent, all $d$ samples avoid $v$ with probability $(1-1/m)^d$. Averaging over $D$ and taking the complementary event gives
\begin{equation*}
\Pp(\text{arrival contains }v)
=1-\E\left[\left(1-\frac1m\right)^D\right]
=1-\phi\left(1-\frac1m\right),
\end{equation*}
where the last equality uses $\phi(u)=\E[u^D]$. Thus the degree of an offline vertex has distribution
\begin{equation*}
\operatorname{Bin}\bigl(m,1-\phi(1-1/m)\bigr)\xrightarrow[m\to\infty]{\mathrm d}\Pois(\mu),
\end{equation*} 
because $m[1-\phi(1-1/m)]=m[\phi(1)-\phi(1-1/m)]\to \phi'(1)=\mu$. \medskip

\textbf{Weak local limit.} Intuitively, as  $m$ grows large, the degree of offline and online nodes is basically bounded with high probability, so the graph is sparse, and locally looks like a tree. This can be made formal as follows.

Local weak convergence describes the neighborhood of a uniformly chosen vertex. More precisely, for each fixed radius and finite rooted graph, the proportion of vertices whose neighborhood has that rooted shape converges to the corresponding probability under the limiting rooted graph. For our random graphs, these proportions converge in probability.

Retain repeated neighbor samples as parallel edges. Conditional on the degree sequences on both sides, the resulting multigraph is a bipartite configuration model, since the arrival-side half-edges are paired uniformly with the offline-side half-edges. By the law of large numbers and standard occupancy estimates, its empirical degree distributions converge in probability to the law of $D$ on the arrival side and $\Pois(\mu)$ on the offline side, and their first moments converge to $\mu$. The standard local convergence result for the bipartite configuration model therefore gives the hierarchical Galton--Watson limit with these degree distributions, see \cite[Section~5.2]{BordenaveLelarge2010}. Its root lies on either side with probability $1/2$, since both sides have $m$ vertices. Collapsing parallel edges preserves this local limit and does not change the maximum matching size.

We can thus apply \cite[Theorem~3]{BordenaveLelargeSalez2013}. The correspondence between their notation and ours is
\begin{center}
\begin{tabular}{ll}
Their notation & Our setting \\ \hline
$V^a$ & The $m$ arrival vertices \\
$V^b$ & The $m$ offline vertices \\
$\pi^a$ & The law of $D$ \\
$\pi^b$ & $\Pois(\mu)$ \\
$\phi^a(u)$ & $\phi(u)$ \\
$\phi^b(u)$ & $e^{\mu(u-1)}$ \\
$\lambda$ & $1/2$
\end{tabular}
\end{center}
Both degree distributions have mean $\mu$, so the balance condition in that theorem holds. Their matching density is normalized by the total number $2m$ of vertices. Adjusting to our normalization by $m$, and using boundedness of the normalized matching size to pass to expectations, the theorem gives
\begin{equation*}
\lim_{m\to\infty}\frac{\OPT_m}{m}
=1-\max_{t\in[0,1]}\left\{
\phi(1-e^{-\mu t})-\bigl(1-e^{-\mu t}-\mu t e^{-\mu t}\bigr)
\right\}.
\end{equation*}
Indeed, substituting $\phi^a=\phi$ and $\phi^b(u)=e^{\mu(u-1)}$ into their expression for $F^a$ gives
\begin{align*}
F^a(t)
=\phi^a\left(1-\frac{\phi^{b\prime}(1-t)}{\phi^{b\prime}(1)}\right)
-\frac{\phi^{a\prime}(1)}{\phi^{b\prime}(1)}
\bigl(1-\phi^b(1-t)-t\phi^{b\prime}(1-t)\bigr)=\phi(1-e^{-\mu t})-\bigl(1-e^{-\mu t}-\mu t e^{-\mu t}\bigr).
\end{align*}

Under the change of variable $u=1-e^{-\mu t}$, we have
\begin{equation*}
\Psi(u)=u+(1-u)\log(1-u)=1-e^{-\mu t}-\mu t e^{-\mu t}.
\end{equation*}
Thus
\begin{equation*}
\lim_{m\to\infty}\frac{\OPT_m}{m}
=1-\max_{u\in[0,1-e^{-\mu}]}\{\phi(u)-\Psi(u)\}.
\end{equation*}
To extend the maximum to $[0,1]$, observe that by convexity $\phi'(u)\leq\phi'(1)=\mu$. For $1-e^{-\mu}<u<1$,
\begin{equation*}
\frac{d}{du}\bigl(\phi(u)-\Psi(u)\bigr)
=\phi'(u)+\log(1-u)
\leq\mu+\log(1-u)<0.
\end{equation*}
Thus $\phi-\Psi$ is decreasing on the omitted interval. By continuity at $u=1$, extending the interval does not change the maximum, and
\begin{equation*}
\lim_{m\to\infty}\frac{\OPT_m}{m}
=1-\max_{u\in[0,1]}\{\phi(u)-\Psi(u)\}.
\end{equation*}

\textbf{Truncation argument.} Now let $D$ be arbitrary. For an integer $L\geq1$, replace $D$ by $\min\{D,L\}$ and write $\phi_L(u)\coloneqq\E[u^{\min\{D,L\}}]$. Couple the instances by retaining only the first $L$ neighbor samples of each arrival. Only arrivals with $D>L$ are changed, and deleting all matching edges incident to these arrivals loses at most one edge per changed arrival. Writing $\OPT_m^{(L)}$ for the expected optimum of the truncated instance, we obtain
\begin{equation*}
0\leq\frac{\OPT_m-\OPT_m^{(L)}}{m}\leq\Pp(D>L).
\end{equation*}
Moreover, uniformly for $u\in[0,1]$,
\begin{equation*}
0\leq\phi_L(u)-\phi(u)\leq\Pp(D>L).
\end{equation*}
The latter bound implies convergence of the corresponding maxima of $\phi_L-\Psi$ to that of $\phi-\Psi$. Applying the bounded-degree formula for each fixed $L$, then letting $L\to\infty$ in these bounds, proves the fixed-count limit for arbitrary $D$, including distributions with infinite mean. 

Finally, couple exactly $m$ arrivals and $K\sim\Pois(m)$ arrivals using the same infinite sequence of neighborhoods. Adding or deleting one arrival changes the maximum matching size by at most one. Therefore,
\begin{equation*}
\left|\E[\OPTP(\mathbf N)]-\OPT_m\right|
\leq\E[|K-m|]
\leq\sqrt{\operatorname{Var}(K)}
=\sqrt m.
\end{equation*}
Dividing by $m$ and applying the fixed-count limit proves the claim.
\end{proof}

\bibliographystyle{plain}
\bibliography{bibliography}

\end{document}